\theoremstyle{plain}
\newtheorem{corollary}{Corollary}
\newtheorem{theorem}{Theorem}
\newtheorem{lemma}{Lemma}
\theoremstyle{definition}
\theoremstyle{remark}
\newtheorem{remark}{Remark}
\newcommand{\V}[1]{\boldsymbol{#1}} 
\newcommand{\E}{\mathbb{E}}
\newcommand{\req}{{\rm req}}
\newcommand{\e}{{\rm e}}
\newcommand{\dd}{{\rm d}}
\newcommand{\quotes}[1]{``#1''}
\newcommand{\definedas}{\overset{\underset{\Delta}{}}{=}}
\def\ps@IEEEtitlepagestyle{%
  \def\@oddhead{\mycopyrightnotice}%
  \def\@evenhead{}%
}
\def\mycopyrightnotice{%
  {\footnotesize 
  \begin{minipage}{\textwidth}
  \centering
  This article has been accepted for publication in a future issue of this journal, but has not been fully edited. Content may change prior to final publication. Citation information: DOI 10.1109/TCOMM.2019.2951109, IEEE Transactions on Communications\hfill
  \end{minipage}

}
  \gdef\mycopyrightnotice{ }
}
\begin{document}

\title{Conditional Capacity  and Transmit Signal Design for SWIPT Systems with Multiple Nonlinear Energy Harvesting Receivers}

\author{Rania Morsi, Vahid Jamali, Amelie Hagelauer, Derrick Wing Kwan Ng,\\ and Robert Schober  
\thanks{R. Morsi,  V. Jamali, and R. Schober are with the Institute for Digital Communications, and A. Hagelauer is with the Institute for Electronics Engineering in 
Friedrich-Alexander University (FAU), Erlangen, Germany.  D. W. K. Ng is with the University of New South Wales, Australia. (E-mails: rania.morsi@fau.de;
vahid.jamali@fau.de; amelie.hagelauer@fau.de; w.k.ng@unsw.edu.au;  robert.schober@fau.de)}
\thanks{This paper was presented in part in \cite{Morsi_ICC2018} at the IEEE International Conference on Communications (ICC), Kansas City, USA, 2018.}
\thanks{D. W. K. Ng is supported by funding from the UNSW Digital Grid Futures Institute, UNSW, Sydney, under a cross-disciplinary fund scheme and by the Australian Research Council's Discovery Early Career Researcher Award (DE170100137). Robert Schober's work is supported by DFG project SCHO 831/12-1.}
}
\maketitle
\begin{abstract}
In this paper, we study information-theoretic limits for simultaneous wireless information and power transfer (SWIPT) systems employing practical nonlinear radio frequency (RF) energy harvesting (EH) receivers (Rxs). In particular, we consider a SWIPT system with one transmitter that broadcasts a common signal to an information decoding (ID) Rx and multiple EH Rxs. Owing to the nonlinearity of the EH Rxs' circuitry, the efficiency of wireless power transfer depends on the waveform of the transmitted signal. We aim to answer the following fundamental question: \emph{What is the optimal input distribution of the transmit signal waveform that maximizes the information transfer rate at the ID Rx conditioned on individual minimum  required direct-current (DC) powers to be harvested at the EH Rxs?} Specifically, we study the conditional capacity problem of a SWIPT system impaired by  additive white Gaussian noise subject to average-power (AP) and peak-power (PP) constraints at the transmitter and nonlinear EH constraints at the EH Rxs. {\color{black} To this end, we develop a novel nonlinear EH model that captures  the saturation of the harvested DC power by taking into account not only the forward current of the rectifying diode but also the reverse breakdown current. Then, we derive a novel semi-closed-form expression for the harvested DC power, which simplifies to closed form for low input RF powers. The derived analytical expressions are shown to closely match circuit simulation results.} We solve the conditional capacity problem for real- and complex-valued signalling and prove that the optimal input distribution that maximizes the rate-energy (R-E) region is unique and discrete with a finite number of mass points. {\color{black} Furthermore, we show that, for the considered nonlinear EH model and a given AP constraint, the boundary of the R-E region saturates for high PP constraints due to the saturation of the harvested DC power for high input RF powers.}  In addition, we devise a suboptimal input distribution whose R-E tradeoff performance is close to optimal. All theoretical findings are verified by numerical evaluations. 
\end{abstract}

\section{Introduction}
\label{s:introduction}
In addition to their capability to convey information, radio frequency (RF) signals can transfer energy for wirelessly charging low-power devices. This property of RF signals has attracted  significant attention to  the study of simultaneous wireless information and power transfer (SWIPT) systems 
\cite{Varshney2008,Shannon_meets_tesla_Grover2010,Rate_energy_MIMO_RuiZhang2013,Schober_SWPT_review_2015,
Kwan_Robert_Book,SWIPT_relaying_survey_2018,
RE_tradeoff_nonlinear_EH_Kim2018,SWIPT_Clerckx_OFDM_Multisine2016,SWIPT_Clerckx_Single_Carrier2017}. {\color{black} In \cite{Varshney2008}, the author defined a rate-energy (R-E) function that characterizes the tradeoff between wireless information transfer (WIT) and wireless power transfer (WPT). Such a tradeoff exits as long as the optimal transmit waveform that maximizes the rate of information transfer is different from the one that maximizes the amount of harvested energy.} For example, in \cite{Shannon_meets_tesla_Grover2010}, the R-E tradeoff for frequency selective channels with additive white Gaussian noise (AWGN) is characterized, where water-filling power allocation is shown to be optimal for WIT, but allocating all the power to a single sinusoid is optimal for WPT. In  \cite{Rate_energy_MIMO_RuiZhang2013}, it is shown that for a multi-antenna broadcast channel, spatial multiplexing is optimal for  WIT, whereas energy beamforming is optimal for WPT. The aforementioned works lay the foundation for SWIPT research, but they are based on an overly simplistic linear energy harvesting (EH) model for WPT. This model assumes that the harvested direct-current (DC) power depends only on the \emph{average} power of the input RF signal and that this dependence is \emph{linear}  for all possible input RF powers.

In practice, however, the RF EH circuits of WPT systems  have a nonlinear  input-output characteristic {\color{black}\cite{JSAC_Schober_Clreckx_Poor2019,RF_EH_networks_survey_2015,Letter_non_linear,Theoretical_Analysis_Rectifiers_2014,Georgiadis_WPT_book_2016,
Optimum_behaviour_Georgiadis2013,Waveform_optimization_SPAWC_Rui_Zhang_2017,Waveform_design_WPT_Clerckx_2016,Polozec1994}}. In particular, EH circuits include a rectenna, i.e., an antenna followed by a rectifier. The rectifier typically contains diodes followed by a capacitor-based low-pass filter (LPF) to convert the received RF signal into a DC signal.  {\color{black}For high incident RF powers, rectifying diodes exhibit the reverse breakdown phenomenon, where a significant amount of reverse current flows through the diode causing the output DC power to saturate and leading to a reduced RF-to-DC conversion efficiency \cite{Georgiadis_WPT_book_2016,Theoretical_Analysis_Rectifiers_2014}.
 In \cite{Letter_non_linear}, the nonlinear RF-to-DC input-output characteristic of a rectenna is modelled by a three-parameter sigmoidal function, where curve fitting is performed to determine the parameters for a given rectenna circuit and a given excitation signal.}

{\color{black} In this paper, we adopt the same rectifier circuit as was considered in\cite{Waveform_design_WPT_Clerckx_2016,Waveform_optimization_SPAWC_Rui_Zhang_2017,Polozec1994}, namely a series single-diode rectifier. 
In \cite{Waveform_optimization_SPAWC_Rui_Zhang_2017}, a monotonically increasing function of the output DC power is derived in terms of an integral function of the input RF signal. In \cite{Waveform_design_WPT_Clerckx_2016}, a fourth-order Taylor series approximation of the expression in \cite{Waveform_optimization_SPAWC_Rui_Zhang_2017} is analyzed for a multisine excitation signal. 
Furthermore, the authors of \cite{Polozec1994} obtained a semi-closed-form expression for the output DC voltage assuming a sinusoidal input RF signal. 
However, the analysis in the aforementioned works  \cite{Polozec1994,Waveform_optimization_SPAWC_Rui_Zhang_2017,Waveform_design_WPT_Clerckx_2016} took into account only the forward-bias current-voltage (I-V) characteristic of the rectifying diode, as described by Shockley's diode equation \cite{Shockley1949}. This model ignores the reverse-breakdown behaviour of the diode and therefore does not capture the saturation  of the output DC power for high input RF powers. In contrast, in this paper, we take into account both the forward and the reverse breakdown I-V characteristic of the rectifying diode and obtain a novel semi-closed-form expression for the output DC power assuming a sinusoidal input excitation signal. Moreover, in the low-input power regime, we obtain the output DC power in  closed form.  A comparison with circuit simulations confirms the accuracy of the derived analytical expressions.} 

Owing to the rectifier's nonlinearity, the RF-to-DC conversion efficiency depends not only on the strength of the input RF signal, but also on its waveform  {\color{black}\cite{Optimum_behaviour_Georgiadis2013,Georgiadis_WPT_book_2016,Waveform_design_WPT_Clerckx_2016,Waveform_optimization_SPAWC_Rui_Zhang_2017}}. 
For example, experiments have shown that signals with high peak-to-average power ratio (PAPR), such as multisine signals, yield higher harvested DC powers for a given average incident RF power compared to constant-envelope signals  \cite{Optimum_behaviour_Georgiadis2013}. {\color{black}This  is because, for low average power levels, high PAPR signals are more likely to exceed the turn-on voltage of the diode  \cite{Georgiadis_WPT_book_2016}. Moreover, the peaks of a high PAPR signal can charge the capacitor to a high voltage level, and if the output LPF has a large time constant, the capacitor can maintain the charged voltage until the next signal peak, see e.g. \cite[Figure 9]{Optimum_behaviour_Georgiadis2013}. Thus, the nonlinearity of EH circuits motivates the optimization of the transmit signal for maximization of the amount of harvested energy.}

While the goal of waveform design for a pure WPT system is to maximize the harvested energy only, for a SWIPT system, the waveform design goal is to maximize both the information transfer rate and the harvested energy, i.e., to optimize the R-E tradeoff. {\color{black}In \cite{RE_tradeoff_nonlinear_EH_Kim2018}, the R-E tradeoff  of different receiver (Rx) architectures is studied for SWIPT systems with a nonlinear EH model.} In \cite{SWIPT_Clerckx_OFDM_Multisine2016}, the authors consider the superposition of deterministic and modulated multisine waveforms and optimize the amplitudes  and phases of all frequency tones to maximize the R-E region,  {\color{black}i.e., the region of all achievable R-E pairs. As an alternative to using multisine signals, the desired high PAPR characteristic of WPT signals can also be achieved by modulating the amplitude of a single-sine signal. Thereby, the amplitude modulation can be simultaneously used to transmit  information. From the WPT perspective, the optimal distribution of the transmit amplitude is expected to have a high PAPR. On the other hand, from a WIT perspective, the optimal distribution of the transmit amplitude  is known to be Gaussian for an average power limited AWGN channel \cite{Shannon}. Hence,  the aim of this paper is to answer the following fundamental question. \quotes{\emph{For a single-carrier SWIPT system with a nonlinear EH circuit, what is the optimal input distribution of the transmit waveform that maximizes the R-E region?}} }
  
 First steps towards answering this question are made in \cite{SWIPT_Clerckx_Single_Carrier2017}, where input distributions  that are fully characterized by their first- and second-order statistics are considered. It is shown that, in this case, the optimal input distribution is the zero-mean complex Gaussian distribution with asymmetric power allocation to the real and  imaginary parts. {\color{black}However, in general, higher-order statistics may be required to characterize the optimal input distribution that maximizes the R-E region, in which case the waveforms reported in  \cite{SWIPT_Clerckx_Single_Carrier2017} are no longer optimal.}

In this paper, we aim to answer the question above without {\color{black}limiting ourselves to input distributions that are fully characterized by their first- and second-order statistics.} In particular, we consider a SWIPT system, where a single-carrier signal is transmitted over AWGN channels to an information decoding (ID) Rx and simultaneously to multiple EH Rxs. Our objective is to find the optimal distribution of the transmit signal that maximizes the information rate at the ID Rx under individual minimum  harvested power constraints at the nonlinear EH Rxs.  {\color{black} To specify the EH constraints, we employ the harvested DC power function derived for our newly developed nonlinear EH saturation model.} In addition, we  impose average-power (AP) and peak-power (PP) constraints at the power transmitter\footnote{{\color{black} This problem is of practical interest in e.g. sensor networks where one sensor needs to update its software at the highest possible rate while the other sensors in its vicinity want to wirelessly charge their batteries.}}. We note that, for a linear EH model, the considered problem is trivial since, in this case, the harvested DC power depends only on the average input RF power, which renders the input distribution of the transmit symbols irrelevant for WPT. In particular, in addition to the commonly adopted maximum AP constraint, the linear EH constraint imposes a minimum AP  constraint, for which the problem is either infeasible or the EH constraint is inactive. In the latter case, the solutions for maximum WIT in \cite{Shannon,SMITH19712,QAGC_Shamai_Bar_David_1995} are optimal. On the other hand, with a nonlinear EH model, this problem is non-trivial and has been first studied in our preliminary work in 
 \cite{Morsi_ICC2018}, for one EH Rx\footnote{The work in  \cite{Morsi_ICC2018} was  first published in Nov. 2017, see https://arxiv.org/abs/1711.01082.}. {\color{black}Subsequently, a similar problem has been independently studied in \cite{Bruno_Capacity} using the nonlinear EH model from \cite{Waveform_design_WPT_Clerckx_2016}, which does not model the saturation of the output DC power but adopts a $4^{\text{th}}$-order truncated Taylor series approximation of the diode's forward current equation.}

The main contributions of this paper can be summarized as follows:

\begin{itemize}
\item We study the conditional capacity of a SWIPT system with one ID Rx and multiple nonlinear EH Rxs. In particular, we maximize the information rate at the ID Rx under AP and PP constraints at the transmitter and nonlinear EH constraints at the EH Rxs. Accordingly, we obtain the  R-E region, which specifies all combinations of achievable rates at the ID Rx and  jointly feasible harvested powers at the EH Rxs. The boundary of this R-E region is referred to as the R-E tradeoff curve.
\item We obtain necessary and sufficient conditions for the optimal input distribution and prove that it is unique and discrete with a finite number of mass points. The discreteness and finiteness of optimal input distributions for other channels have been  reported  in  \cite{SMITH19712,QAGC_Shamai_Bar_David_1995,capacity_Rayleigh_fading_Abou_Faycal2001,Hermite_bases_Abou_Faycal_2012,Nikola_FD}.
\item Different from our preliminary work in \cite{Morsi_ICC2018}, in this paper,  we additionally consider the following:
\begin{itemize}[leftmargin=*]
\item {\color{black}In order to accurately model the harvested DC power, we take into account not only the forward but also the reverse breakdown I-V characteristic of the rectifying diode. Consequently, the proposed nonlinear EH model captures the saturation behaviour of the  output DC power at high input RF powers. Moreover, our model includes the voltage multiplication effect of the matching network that maximizes the power transfer from the antenna to the rectifier. Accordingly, we obtain a semi-closed-form expression of the output DC power. In addition, in the low input RF power regime, the forward I-V characteristic of the diode is dominant and the harvested DC power is obtained in closed form. The derived analytical expressions are verified with circuit simulations and exploited to formulate the EH constraints for the R-E region maximization problem.} 
\item We consider multiple EH Rxs with individual minimum EH constraints and prove that, if none of the EH Rxs operates in saturation, at most one of these EH constraints is active. In particular, the active EH constraint is the one which, when all other EH constraints are removed, results in the smallest achievable rate at the ID Rx. Hence, the R-E tradeoff curve for this problem is completely characterized by the individual R-E curves obtained for each individual EH Rx separately. 
\item We  extend the problem to complex-valued transmission and show that the optimal input distribution is characterized by a discrete and finite amplitude set with an independent uniformly distributed phase. This result is in line with the results for  the same problem without EH constraints in \cite{QAGC_Shamai_Bar_David_1995}.
\item We solve the problem of maximum \emph{WPT} for one EH Rx under AP and PP constraints. We obtain the optimal input distribution and the maximum harvested DC power at the EH Rx in closed form. We show that on-off transmission is optimal for maximum WPT. 
Based on this insight, for the SWIPT system, we propose a suboptimal distribution  which superimposes the optimal distributions for maximum WPT and maximum WIT. We show that the R-E tradeoff obtained with the suboptimal distribution closely approaches that of the optimal one.
\item {\color{black} We show that, owing to the saturation behaviour of the harvested DC power, the optimal solution for maximum WPT and the boundary of the R-E region saturate for high PP constraints.}
\end{itemize}
\end{itemize}

The remainder of the paper is organized as follows. In Section \ref{s:system_model}, we present the  system model and the nonlinear EH circuit model. In Section \ref{s:Problem_formulation}, we  formulate the R-E problem and reveal several properties of the optimal input distribution. In Section \ref{s:special_cases_and_generalizations}, we study the extreme problems of maximum WIT and maximum WPT and propose a suboptimal input distribution for the SWIPT  problem. In Section \ref{s:numerical_results}, we provide numerical results for the considered problems. Finally,  Section \ref{s:conclusion} concludes the paper.

\textit{Notations:} We use boldface letters to denote random variables and the corresponding lightface letters to denote their realizations.  $j$ is the imaginary unit and $\Re\{\cdot\}$ denotes the real part of a complex number. $\E_{F}[V(\V{x})]\!=\!\int V(x)\dd F(x)$ is the statistical average of $V(\V{x})$ given that random variable $\V{x}$ has distribution function $F(x)$. Moreover, $\sim$ stands for  ``is distributed as" and $\definedas$ means  ``is defined as". $\mathcal{N}(0,\sigma^2)$ and $\mathcal{CN}(0,\sigma^2)$ represent real- and complex-valued Gaussian distributions with zero mean and variance $\sigma^2$. 
\section{System Model and Preliminaries}
\label{s:system_model}
\subsection{System Model}%
We consider a single-antenna SWIPT system, where a transmitter broadcasts a common single-carrier signal to an ID Rx and $L$ randomly deployed EH Rxs,  as shown in Fig. \ref{fig:system_model}. In particular, we consider a time-slotted system with time slot duration\footnote{In this paper, we assume a unit-length time slot, i.e., $T=1$. Hence,  we use the terms power and energy interchangeably.} $T$ . The transmitter emits a real-valued\footnote{As is customary for capacity analysis, see e.g. \cite{SMITH19712,Hermite_bases_Abou_Faycal_2012,Nikola_FD}, as a first step,  we  assume real-valued channel inputs and outputs.
The generalization to a complex-valued signal model is provided in Section \ref{s:complex_signaling}.} baseband information-bearing pulse-amplitude modulated signal $x(t)\!=\!\sum_{k=-\infty}^{\infty}x[k]g(t\!-\!kT)$, where $g(t)$ is the transmit pulse waveform and $x[k]$ is the information-bearing symbol in time slot $k$, which is a realization of an independent and identically distributed (i.i.d.) real-valued  random variable $\V{x}\!\in\! \mathbb{R}$ having cumulative distribution function $F$. The channel gains for  the ID and EH Rxs are denoted  by $h_{\rm I}\!\in\!\mathbb{R}$ and $h_{{\rm E}_l}\!\in\!\mathbb{R}$, respectively, and are assumed to be fixed over all time slots, where $l\!\in\!\mathcal{L}\definedas\{1,\ldots,L\}$. To obtain an upper bound on the performance of the SWIPT system, all channel gains are assumed to be perfectly known at the transmitter\footnote{\color{black}In practice, assuming a time-division duplex (TDD) system, for channel acquisition, the ID and EH nodes may transmit pilot signals to the transmitting node, which then  estimates the uplink channel gains and exploits the uplink-downlink channel reciprocity to obtain estimates for the downlink channel gains.} and the information channel gain is known at the ID Rx. The baseband model of the received signal at the ID Rx is given by $y_{\rm I}(t)\!=\!x(t)h_{\rm I}\!+\!n(t)$, where $n(t)$ is real-valued AWGN with average power $\sigma_n^2$. At the EH Rxs, the additive noise is ignored since its contribution to the harvested  power is negligible.  Hence, at the $l^{\text{th}}$ EH Rx, the received signals in the baseband and the RF domains are $y_{{\rm E}_l}(t)\!\!=\!\!x(t)h_{{\rm E}_l}$ and $y_{{\rm E}_l}^{\rm RF}(t)\!\!=\!\!\sqrt{2}\Re\{y_{{\rm E}_l}(t)\e^{j 2\pi f_{\rm c} t}\}$, respectively, where $f_{\rm c}$ is the carrier frequency and {\color{black}$w_{\rm c}\definedas 2\pi f_{\rm c}$ is the corresponding angular frequency.} Assuming a rectangular pulse $g(t)$ with unit amplitude and duration $T$, in time slot $k$, i.e., $kT\!\!-\!\!T/2\!\!<\!\!t\!\!\leq\!\!kT\!\!+\!\!T/2$, the baseband transmit signal is constant and given by $x(t)\!=\!\sum_{k=-\infty}^{\infty}x[k]g(t\!-\!kT)\!=\!x[k]$. Assuming all EH Rxs employ  identical EH circuits, we focus on modelling one EH Rx and drop index $l$, for convenience. Thus, the received RF signal at the EH Rx is
\begin{equation}
y_{\rm E}^{\rm RF}(t)\!=\!\sqrt{2} x[k] h_{\rm E}\cos(2\pi f_{\rm c} t), \quad\!\! kT\!-\!T/2\!<\!t\!\leq\!kT\!+\!T/2.
\label{eq:recieved_signal_EH}
\end{equation}

\begin{figure}[!t]
\centering\hspace{-1cm}
\includegraphics[width=0.35\textwidth]{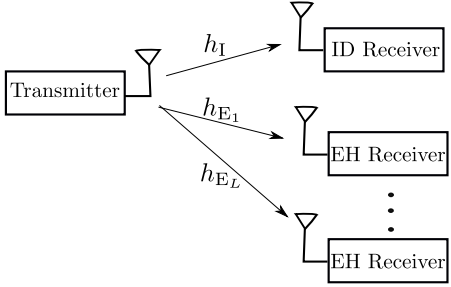}
\caption{SWIPT system with one ID Rx and $L$ separate EH Rxs.}
\label{fig:system_model}
\end{figure}

\begin{figure*}[!t]
\centering
\includegraphics[width=1\textwidth]{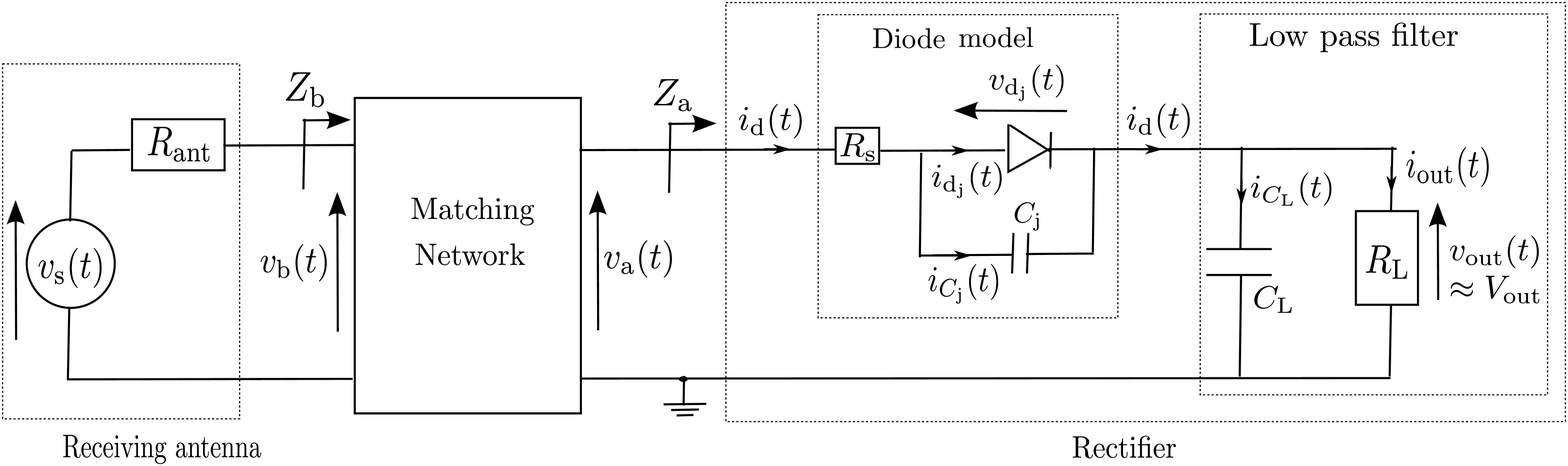}
\caption{{\color{black}Nonlinear rectenna circuit model}.}
\label{fig:rectenna_model}
\end{figure*}

\subsection{Rectenna Nonlinear Circuit Model}
\label{ss:rectenna_model}  
In this section, we derive a novel expression for the harvested DC power at the EH Rx averaged over the symbol duration $T$ in which symbol $x$ is transmitted. As shown in Fig.~\ref{fig:rectenna_model}, the EH Rx includes a rectenna, which consists of an antenna and a rectifier. The antenna is commonly modelled by a {\color{black}Thevenin equivalent voltage source $v_{\rm s}(t)$ in series with an impedance $R_{\rm ant}$\cite{Georgiadis_WPT_book_2016,Waveform_design_WPT_Clerckx_2016,Waveform_optimization_SPAWC_Rui_Zhang_2017,Polozec1994}.} The rectifier converts the received RF signal to a DC signal across a load resistance $R_{\rm L}$. {\color{black}In order to ensure maximum power transfer, a matching network is needed to match the rectifier's input impedance  $Z_{\rm a}$  to the antenna impedance $R_{\rm ant}$, cf. Fig. \ref{fig:rectenna_model}. Since the received RF signal $y_{\rm E}^{\rm RF}(t)$ given in (\ref{eq:recieved_signal_EH}) is sinusoidal, it follows that signals $v_{\rm s}(t)$, $v_{\rm b}(t)$, and $v_{\rm a}(t)$ in Fig. \ref{fig:rectenna_model} are also sinusoidal. In the following, we use the notation $\hat{v}$ to denote the peak amplitude of sinusoidal signal $v(t)$, i.e., $v(t)\!=\!\hat{v}\cos(2\pi f_{\rm c} t+\phi_v)$, where $v\!\in\!\{y_{\rm E}^{\rm RF},v_{\rm s},\,v_{\rm b},\,v_{\rm a}\}$ and $\phi_v\!\in\![-\pi,\pi]$.   We adopt the  rectifier circuit used in  \cite{Waveform_design_WPT_Clerckx_2016,Waveform_optimization_SPAWC_Rui_Zhang_2017,Polozec1994}, which consists of a single series diode 
followed by a capacitor-based LPF with capacitance $C_{\rm L}$. Unlike in \cite{Waveform_optimization_SPAWC_Rui_Zhang_2017} and  \cite{Waveform_design_WPT_Clerckx_2016}, our model includes the diode's series resistance $R_{\rm s}$ and junction capacitance $C_{\rm j}$, see Fig.~\ref{fig:rectenna_model}, \cite[Fig. 1]{Polozec1994}, \cite[Fig. 5.2]{Georgiadis_WPT_book_2016}.
 {\color{black} The main differences between the rectifier model developed in this paper and the models in  \cite{Waveform_design_WPT_Clerckx_2016,Waveform_optimization_SPAWC_Rui_Zhang_2017,Polozec1994} are:
\subsubsection{Saturation Circuit Model}
\label{sss:Saturation_Circuit_Model}
First, we derive the harvested DC power at the rectifier's output in terms of the peak voltage $\hat{v}_a$ at the rectifier's input. To this end, we consider not only the forward-bias mode of the rectifying diode but also the reverse-bias breakdown mode. A typical I-V characteristic of a rectifying diode is shown in \cite[Fig. 6.5]{Georgiadis_WPT_book_2016}. In particular, when the amplitude of the voltage signal across the diode junction $v_{\rm d_j}(t)$ reaches the diode breakdown voltage $B_{\rm v}$, a significant amount of reverse current will pass through the diode for negative  values  of the input signal. This leads to a reduction of the average current in the diode, the saturation of the output DC power, and a degradation of the RF-to-DC power conversion efficiency as the input signal power increases \cite{Georgiadis_WPT_book_2016}. In order to account for this nonlinear behaviour of the diode current, the total current in the diode junction $i_{\rm d_j}(t)$ is modelled as  the sum of the diode forward current $i_{\rm F}(t)$ and the reverse current $i_{\rm R}(t)$, i.e., \cite[Eqs. (6.1)-(6.3)]{Georgiadis_WPT_book_2016}\footnote{\color{black}In \cite[Eq. (6.2)]{Georgiadis_WPT_book_2016}, the last term $I_{B_{\rm v}}\e^{-\frac{{B_{\rm v}}}{\eta V_{\rm T}}}$ in (\ref{eq:total_junction_current}) is set to zero, which is valid for typical rectifying diodes, see e.g.\cite[Table 5.4]{Georgiadis_WPT_book_2016}.} 
\begin{equation}
\begin{aligned}
i_{\rm d_j}(t)&\!=\!i_{\rm F}(t)+i_{\rm R}(t)\\
&\!=\!I_{\rm s}\big(\e^{\frac{v_{\rm d_j}(t)}{\eta V_{\rm T}}}-1\big)-I_{B_{\rm v}}\e^{-\frac{{B_{\rm v}}}{\eta V_{\rm T}}}\big(\e^{-\frac{v_{\rm d_j}(t)}{\eta V_{\rm T}}}-1\big),
\end{aligned}
\label{eq:total_junction_current}
\end{equation}
where $I_{\rm s}$ is the diode's reverse bias saturation current, $\eta$ is the diode ideality factor, which typically lies between 1 and 2, $V_{\rm T}=KT_K/q$ is the thermal voltage, where $K$ is Boltzmann's constant, $q$ is the electron charge, and $T_K$ is the junction temperature in Kelvin. The reverse breakdown current is characterized by $I_{B_{\rm v}}$ and ${B_{\rm v}}$, which represent the breakdown saturation current and  the reverse breakdown voltage, respectively  \cite{Georgiadis_WPT_book_2016}. Applying Kirchoff's current law to the rectifier in Fig. \ref{fig:rectenna_model}, we obtain%
\begin{gather}
\begin{aligned}
i_{\rm d}(t)\!&=\!i_{\rm d_j}(t)+i_{C_{\rm j}}(t)=i_{\rm C_{\rm L}}(t)+i_{\rm out}(t)\\ 
&=\!I_{\rm s}\big(\e^{\frac{v_{\rm d_j}(t)}{\eta V_{\rm T}}}\!\!-\!1\big)\!-\!I_{B_{\rm v}}\e^{\!-\!\frac{{B_{\rm v}}}{\eta V_{\rm T}}}\big(\e^{\!-\!\frac{v_{\rm d_j}(t)}{\eta V_{\rm T}}}\!\!-\!1\big)\!+\!C_{\rm j}\frac{\dd v_{\rm d_j}\!(t)}{\dd t}\\
&=\!C_{\rm L}\frac{\dd v_{\rm out}(t)}{\dd t}+\frac{v_{\rm out}(t)}{R_{\rm L}}.
\end{aligned}
\raisetag{12pt}
\label{eq:KCL_rectifier}
\end{gather}
Due to the nonlinearity of the diode detector circuit, the voltage signals $v_{\rm out}(t)$ and $v_{\rm d_j}(t)$ in (\ref{eq:KCL_rectifier}) contain in general DC and harmonic components \cite{Georgiadis_WPT_book_2016,Polozec2015_Input_Impedance}. Hence, in the most general sense, we can write $v_{\rm z}(t)=\sum_{n=0}^\infty \hat{v}^{(n)}_{\rm z}\cos(2\pi nf_c t+\phi^{(n)}_{z})$, where $z\in\{{\rm d_j},{\rm out}\}$ and the superscript $(n)$ denotes the $n^{\rm th}$ harmonic component. Thus, $\frac{1}{T}\int_T v_{\rm z}(t) \dd t=\hat{v}_z^{(0)}$ and $\frac{1}{T}\int_T \frac{\dd v_{\rm z}(t)}{\dd t}\dd t=0$. Therefore, integrating both sides of (\ref{eq:KCL_rectifier}) over one symbol duration $T$ results in 
\begin{IEEEeqnarray}{ll}
\!\!\!\!\!\!\frac{1}{T}\!\!\int\limits_T \!\!\left[I_{\rm s}\big(\!\e^{\frac{v_{\rm d_j}(t)}{\eta V_{\rm T}}}\!\!-\!1\big)\!-\!I_{B_{\rm v}}\e^{\!-\!\frac{{B_{\rm v}}}{\eta V_{\rm T}}}\!\big(\!\e^{\!-\!\frac{v_{\rm d_j}(t)}{\eta V_{\rm T}}}\!\!-\!1\big)\!\right]\!\dd  t\!=\!\frac{V_{\rm out}}{R_{\rm L}}\!,
\label{eq:1st_Ritz_condition}
\end{IEEEeqnarray}
where we define $ V_{\rm out}\definedas\hat{v}^{(0)}_{\rm out}$. Assuming the rectifier's time constant $R_{\rm L}C_{\rm L}$ is much larger than the period $1/f_{\rm c}$ of the sinusoidal RF signal, the ripples in the output voltage will be negligible \cite{Georgiadis_WPT_book_2016}. In this case, at steady state, the output voltage can be assumed constant (DC), i.e., $v_{\rm out}(t)= V_{\rm out}$ and $i_{\rm d}(t)=C_{\rm L}\frac{\dd v_{\rm out}(t)}{\dd t}+\frac{v_{\rm out}(t)}{R_{\rm L}}=\frac{V_{\rm out}}{R_L}$. Hence, the junction voltage in (\ref{eq:1st_Ritz_condition})  can be written as  $v_{\rm d_j}(t)=v_{\rm a}(t)\!-\!i_{\rm d}(t)R_{\rm s}\!-\!v_{\rm out}(t)=v_{\rm a}(t)\!-\!V_{\rm out}\left(1+\frac{R_{\rm s}}{R_{\rm L}}\right)$ and (\ref{eq:1st_Ritz_condition}) reduces to
\begin{IEEEeqnarray}{ll}
I_{\rm s}\left[\e^{\frac{-V_{\rm out}(\beta)}{\eta V_{\rm T}}\left(1+\frac{R_{\rm s}}{R_{\rm L}}\right)}I_0\left(\beta\right)-1\right]\notag\\
-I_{B_{\rm v}}\e^{-\!\frac{{B_{\rm v}}}{\eta V_{\rm T}}}\left[\e^{\frac{V_{\rm out}(\beta)}{\eta V_{\rm T}}\left(1+\frac{R_{\rm s}}{R_{\rm L}}\right)}I_0\!\left(\beta\right)\!-\!1\right]\!=\!\frac{V_{\rm out}(\beta)}{R_{\rm L}}\!,
\label{eq:Vout_intermsof_va}
\end{IEEEeqnarray}
where $\beta\!\definedas\!\frac{\hat{v}_{\rm a}}{\eta V_{\rm T}}$ and the notation $V_{\rm out}(\beta)$ is used to explicitly indicate the dependence of $V_{\rm out}$ on $\beta$. To arrive at (\ref{eq:Vout_intermsof_va}), we assumed $f_{\rm c}\!=\!m/T$, with integer $m$, in order to use $I_0(\beta)=\frac{1}{T}\int_{T} \e^{\beta\cos(2\pi \frac{m}{T} t+\phi_{v_{\rm a}})}\dd t$, where $I_0(\cdot)$ is the modified Bessel function of the first kind and order zero.
Given the amplitude of the voltage signal at the rectifier's input, $\hat{v}_{\rm a}$, and therefore $\beta$,  (\ref{eq:Vout_intermsof_va})  is a semi-closed-form expression for the DC voltage $V_{\rm out}(\beta)$, which can be solved using e.g. Newton's method. Then, the harvested DC power is
\begin{equation}
P_{\rm out}(\beta)\Big|_{\rm exact}\!\!=\!\!\frac{V_{\rm out}^2(\beta)}{R_{\rm L}}, \text{  } V_{\rm out}(\beta)
\text{ is the solution of (\ref{eq:Vout_intermsof_va}).}
\label{eq:Pout_Sat_exact}
\end{equation}

\begin{remark}
We note that, for low input RF powers, the amplitude of the diode junction voltage $v_{\rm d_j}(t)$ is small and does not reach the breakdown voltage of the diode. Hence, the reverse current in  (\ref{eq:total_junction_current}) becomes negligible, i.e., $i_{\rm R}(t)\approx 0$. In this case,  the diode current in (\ref{eq:total_junction_current}) reduces to the well-known Shockley diode equation given by $i_{\rm d_j}(t)=i_{\rm F}(t)=I_{\rm s}\big(\e^{\frac{v_{\rm d_j}(t)}{\eta V_{\rm T}}}-1\big)$ \cite{Shockley1949}. Consequently, the second bracketed term on the left hand side (LHS) of (\ref{eq:Vout_intermsof_va}) tends to zero. Thus, in the low input power regime, (\ref{eq:Vout_intermsof_va}) reduces  to 
\begin{IEEEeqnarray}{ll}
I_0\left(\beta\right)=\left(1+\frac{V_{\rm out}}{I_{\rm s}R_{\rm L}}\right)\e^{\frac{V_{\rm out}}{\eta V_{\rm T}}\left(1+\frac{R_{\rm s}}{R_{\rm L}}\right)}, 
\label{eq:Vout_intermsof_va_LP}
\end{IEEEeqnarray}
which was given in \cite[Eq. (15)]{Polozec1994}. Multiplying both sides of  (\ref{eq:Vout_intermsof_va_LP}) by $\frac{I_{\rm s}(R_{\rm L}+R_{\rm s})}{\eta V_{\rm T}}\e^{\frac{I_{\rm s}(R_{\rm L}+R_{\rm s})}{\eta V_{\rm T}}}$, we get
\begin{equation}
\begin{aligned}
&\frac{I_{\rm s}(R_{\rm L}+R_{\rm s})}{\eta V_{\rm T}}\e^{\frac{I_{\rm s}(R_{\rm L}+R_{\rm s})}{\eta V_{\rm T}}}I_0\!\left(\beta\right)\\
&\!=\!\frac{I_{\rm s}(R_{\rm L}\!+\!R_{\rm s})}{\eta V_{\rm T}}\!\left(\!1\!+\!\frac{V_{\rm out}}{I_{\rm s}R_{\rm L}}\!\right)\e^{\frac{I_{\rm s}(R_{\rm L}+R_{\rm s})}{\eta V_{\rm T}}\left(1+\frac{V_{\rm out}}{I_{\rm s}R_{\rm L}}\right)}.
\label{eq:wew_known}
\end{aligned}
\end{equation}
The right hand side (RHS) of (\ref{eq:wew_known}) has the form $w\e^w$, where $w\!=\!\frac{I_{\rm s}(R_{\rm L}+R_{\rm s})}{\eta V_{\rm T}}\left(1\!+\!\frac{V_{\rm out}}{I_{\rm s}R_{\rm L}}\right)$. Since function $w\e^w$ is invertible for $w\e^w\in[0,\infty)$ and the LHS of (\ref{eq:wew_known}) is $\in [0,\infty)$, the unknown $w$ has a unique solution given by $w\!=\!W_0\left(a\e^aI_0\left(\beta\right)\right)$, where $a\!=\!\frac{I_{\rm s}(R_{\rm L}+R_{\rm s})}{\eta V_{\rm T}}$ and $W_0(\cdot)$ is the principal branch of the LambertW function \cite{Corless1996}.
Hence, for low input RF powers, $V_{\rm out}$ can be obtained in closed form as $V_{\rm out}(\beta)=\left[\frac{1}{a}W_0\left(a\e^aI_0\left(\beta\right)\right)-1\right] I_{\rm s}R_{\rm L}$ and the harvested DC power $P_{\rm out}(\beta)=V_{\rm out}^2(\beta)/R_{\rm L}$ reduces to\footnote{\color{black}In \cite[Eq. (22)]{WPT_LambertW_2019}, the output DC power is expressed in terms of the LambertW function of an integral involving the received  signal. Unlike the analysis in this paper, \cite[Eq. (22)]{WPT_LambertW_2019} considers only the diode's forward current, uses an approximation of Shockley's diode equation, assumes perfect matching between the antenna and the rectifier, and assumes a zero diode series resistance, i.e., $R_{\rm s}= 0$.}
\begin{equation}
P_{\rm out}(\beta)\Big|_{\rm low\,\, power}\!=\left[\frac{1}{a}W_0\left(a\e^aI_0\left(\beta\right)\right)\!-\!1\right]^2 I_{\rm s}^2R_{\rm L}.
\label{eq:Pout_interms_Va_LP}
\end{equation}
\end{remark}

 \subsubsection{Matching Network Model}
\label{sss:Matching_Network_Model}
In Section \ref{sss:Saturation_Circuit_Model}, we obtained the output DC power in terms of the amplitude of the signal at the rectifier's input. In this section, we obtain the output DC power in terms of the power of the RF input signal received by the antenna. To this end, it is essential to model the power transfer from the antenna to the rectifier. This power transfer is maximized by a complex-conjugate matching network that matches the rectifier input impedance\footnote{\color{black}Note that the rectifier's input impedance $Z_{\rm a}$ is not only frequency dependent but also input power dependent. Hence, in general, the matching network should be tuned to the RF input power.} $Z_{\rm a}$ to the antenna impedance $R_{\rm ant}$ \cite[Section 5.3.3]{Georgiadis_WPT_book_2016} \footnote{{\color{black}As seen from the antenna side, the matching network typically includes a series capacitor that acts as a DC block followed by a  shunt inductor that acts as a DC feed providing a DC return path for the rectified current \cite[Fig. 2]{Optimum_behaviour_Georgiadis2013}. Otherwise, the rectified current would pass through the RF signal generator (the antenna) and a part of the rectified power would be consequently lost in the antenna \cite{Polozec1994,Polozec2015_Input_Impedance,Georgiadis_WPT_book_2016}.}}. In particular,
let $\{Z_{\rm b}$, $v_{\rm b}(t)\}$ and $\{Z_{\rm a}$, $v_{\rm a}(t)\}$ be the input impedance and the voltage signal \emph{before} and \emph{after} the matching network, respectively, cf. Fig.~\ref{fig:rectenna_model}.  Then, average power conservation during one symbol duration {\color{black}assuming a lossless matching network} implies 
\begin{equation}
\begin{aligned}
&\Re\left\{\frac{1}{T}\int_T\frac{|v_{\rm a}(t)|^2}{Z_{\rm a}^*}  \dd t\right\}=\Re\left\{\frac{1}{T}\int_T\frac{|v_{\rm b}(t)|^2 }{Z_{\rm b}^*} \dd t\right\} \\
& \Rightarrow \quad \frac{1}{2} \hat{v}^2_{\rm a}\Re\left\{\frac{1}{Z_{\rm a}^*}\right\}=\frac{1}{2} \hat{v}^2_{\rm b}\Re\left\{\frac{1}{Z_{\rm b}^*}\right\}.
\end{aligned}
\label{eq:power_conservation}
\end{equation}
For perfect matching, $Z_{\rm b}\!=\!R_{\rm ant}$ and $\hat{v}_{\rm a}\!=\!\hat{v}_{\rm b}/\sqrt{\Re\{R_{\rm ant}/Z_{\rm a}^*\}}$. Since  $Z_{\rm a}$ is typically larger than  $R_{\rm ant}$,  it follows that the amplitude of the voltage signal at the output of the matching network  is higher than that at its input, i.e., $\hat{v}_{\rm a}>\hat{v}_{\rm b}$  \cite{EH_solid_state_2008}. Hence, the matching network effectively acts as a voltage multiplier. }

Next, we obtain relationships between the input RF power and the signal peak amplitudes $\hat{y}_{\rm E}^{ \rm RF}$, $\hat{v}_{\rm s} $, $\hat{v}_{\rm b}$, and $\hat{v}_{\rm a}$. In particular, for perfect matching, the average received RF power captured by the antenna during one time slot, denoted by $P_{\rm in}$, is completely transferred to the rectifier, i.e., $P_{\rm in}\!=\!\frac{1}{T}\int_T |y_{\rm E}^{\rm RF}(t)|^2 \dd t\!=\!\frac{1}{T}\int_T |v_{\rm b}(t)|^2/R_{\rm ant}  \dd t$, or equivalently $\hat{v}_{\rm b}\!=\!\hat{y}_{\rm E}^{\rm RF}\sqrt{R_{\rm ant}}$. 
Hence, $\hat{v}_{\rm a}\!=\!\hat{y}_{\rm E}^{\rm RF}/\sqrt{\Re\{1/Z_{\rm a}^*\}}$. Assuming symbol $x$ is transmitted in  the time slot under consideration, then from (\ref{eq:recieved_signal_EH}), $\hat{y}_{\rm E}^{ \rm RF}\!=\!\sqrt{2} xh_{\rm E}$ and the average input power of the received RF signal, $P_{\rm in}$, can be written as  
\begin{equation}
P_{\rm in}\!=\!\frac{1}{2}\!\left(\hat{y}_{\rm E}^{ \rm RF}\right)^2\!\!=\! (x h_{\rm E})^2\!=\!\frac{1}{2} \!\frac{\hat{v}_{\rm b}^2}{R_{\rm ant}}\!=\!\frac{1}{2} \hat{v}_{\rm a}^2\Re\left\{\!\!\frac{1}{Z_{\rm a}^*}\!\!\right\}\!\!=\!\frac{\hat{v}_{\rm s}^2}{8 R_{\rm ant}}, 
\label{eq:Pin} 
\end{equation}
where we used $\hat{v}_{\rm b}=\hat{v}_{\rm s}/2$ for perfect matching.
Using (\ref{eq:Pin}) and defining $B\definedas 1/(\eta V_{\rm T}\sqrt{\Re\{1/Z_{\rm a}^*\}})$,  then  the argument of the modified Bessel function, $\beta=\frac{\hat{v}_{\rm a}}{\eta V_{\rm T}}$, defined  in  (\ref{eq:Vout_intermsof_va}) can be written as 
\begin{equation}
\beta\!=\!\frac{\hat{v}_{\rm a}}{\eta V_{\rm T}}=B \hat{y}_{\rm E}^{\rm RF}=B\frac{ \hat{v}_{\rm b}}{\sqrt{R_{\rm ant}}}=B \sqrt{2 P_{\rm in}}=\sqrt{2}B h_{\rm E} x.
\label{eq:Bessel_function_argument}
\end{equation}
Using (\ref{eq:Bessel_function_argument}), the harvested DC power  in (\ref{eq:Pout_Sat_exact}) and (\ref{eq:Pout_interms_Va_LP}) can be expressed in terms of the input power $P_{\rm in}$, the transmit symbol $x$, and the peak amplitudes $\hat{v}_{\rm a}$, $\hat{y}_{\rm E}^{\rm RF}$, $\hat{v}_{\rm b}$. However, from  (\ref{eq:Bessel_function_argument}), this requires the knowledge of $B=1/(\eta V_{\rm T}\sqrt{\Re\{1/Z_{\rm a}^*\}})$ and therefore the input impedance of the rectifier $Z_{\rm a}$. According to the diode model in \cite[Fig. 12]{Polozec2015_Input_Impedance}, the diode junction can be modelled as a variable resistor $R_{\rm d}$ whose value depends on the input RF power. Hence, the rectifier input impedance $Z_{\rm a}$ in Fig. \ref{fig:rectenna_model} can be written as
\begin{equation}
Z_{\rm a}(R_{\rm d})=R_{\rm s}+\left[1/R_{\rm d}+ j\omega_{\rm c} C_{\rm j}\right]^{-1}+\left[1/R_{\rm L}+ j\omega_{\rm c} C_{\rm L}\right]^{-1}.
\label{eq:Za_general}
\end{equation}
It was shown in \cite{Polozec2015_Input_Impedance} that for small input RF powers, $R_{\rm d}\!\to\! R_{\rm j_0}\!\definedas\!\eta V_{\rm T}/I_{\rm s}$, whereas for high input RF powers, $R_{\rm d}\!\to\! R_{\rm L}/2$, see \cite[caption of Fig. 12, Eq. (49), and Fig. 10(a)]{Polozec2015_Input_Impedance}. Note that, $C_{\rm j}$ depends on the output DC voltage \cite[Eq. (2)]{Polozec2015_Input_Impedance}. However, we use the approximation $C_{\rm j}\approx C_{\rm j_0}$, where $C_{\rm j_0}$ is the diode's junction capacitance at zero output DC voltage provided in the diode's datasheet \cite{Polozec2015_Input_Impedance}.  As will be shown in Section \ref{s:numerical_results}, these approximations provide DC powers close to those obtained by circuit simulations. 

\begin{remark}
 Note that if perfect matching between the antenna and the rectifier is assumed without including a matching network, as was done in \cite{Waveform_optimization_SPAWC_Rui_Zhang_2017} and  \cite{Waveform_design_WPT_Clerckx_2016}, the  voltage multiplication in (\ref{eq:power_conservation}) is not included in the model,, i.e., $\hat{v}_{\rm a}=\hat{v}_{\rm b}$ is assumed, which leads to an underestimation of the actual harvested DC power.
\end{remark}
 \subsubsection{Approximate Saturation Model}
\label{sss:Approximate_Saturation_Model}
In \cite[Section 6.5.1]{Georgiadis_WPT_book_2016}, it is shown that for very high input RF powers, the DC output voltage of the diode detector in Fig. \ref{fig:rectenna_model} saturates at $V_{\rm out}|_{\max}={B_{\rm v}}/2$. Hence, the saturated harvested DC power is given by  $P_{\rm out}|_{\max}={B_{\rm v}}^2 /(4R_{\rm L})$ \cite[Eq. (6.5)]{Georgiadis_WPT_book_2016}. Since the solution of the saturation model in (\ref{eq:Vout_intermsof_va}) cannot be obtained in closed form, we combine  the low-power approximate solution in (\ref{eq:Pout_interms_Va_LP}) with  $P_{\rm out}|_{\max}$ to obtain an approximate solution for the saturation model, namely 
\begin{equation}
P_{\rm out}(\beta)\Big|_{\rm approx.}\!\!\!\!\!\!=\!\min\!\!\left(\!\left[\!\frac{1}{a}W_0\left(a\e^aI_0\left(\beta\right)\right)\!-\!1\!\right]^2\!\!\!I_{\rm s}^2R_{\rm L},\frac{B_{\rm v}^2}{4R_{\rm L}}\!\right)\!.
\label{eq:Pout_Sat_approx}
\end{equation}
Using (\ref{eq:Bessel_function_argument}), we write the harvested DC power in terms of  transmit symbol $x$ for the $l^{\text{th}}$ EH Rx as
\begin{equation}  
\begin{aligned}
&P_l(x)\definedas P_{\rm out}\left(\sqrt{2}B h_{{\rm E}_l} x\right)\Big|_{\rm approx.}\\&=\min\!\left(\!\left[\frac{1}{a}W_0\left(a\e^aI_0\left(\!\sqrt{2}B h_{{\rm E}_l} x\right)\!\right)\!-\!1\right]^2\!\!I_{\rm s}^2R_{\rm L},\frac{B_{\rm v}^2}{4R_{\rm L}}\right)\!,
\end{aligned}
\label{eq:P_x_closedform}
\end{equation}
where $B\!=\!1/(\eta V_{\rm T}\sqrt{\Re\{1/Z_{\rm a}(R_{\rm j_0})^*\}})$, i.e.,  we use the low-power approximation of the rectifier's input impedance\footnote{{\color{black}We note that the harvested DC power function in (\ref{eq:P_x_closedform}) depends on the circuit parameters $R_{\rm s}$, $C_{\rm j}$, $C_{\rm L}$, and $R_{\rm L}$ since $P_l(x)$ is a function of $B\!=\!1/(\eta V_{\rm T}\sqrt{\Re\{1/Z_{\rm a}(R_{\rm j_0})^*\}})$ and the rectifier's input impedance $Z_{\rm a}(R_{\rm j_0})$ in (\ref{eq:Za_general}) is a function of these circuit parameters.}}.
Our numerical results in Section \ref{s:numerical_results} confirm that both the exact and the approximate output DC power functions in (\ref{eq:Pout_Sat_exact}) and (\ref{eq:Pout_Sat_approx}), respectively, are in good agreement with circuit simulations, cf. Fig. \ref{fig:Harvested_DC_power_ct_theory}. Hence, both  expressions may be used for the EH constraints of the SWIPT problem. For notational simplicity, in the following, we will use (\ref{eq:P_x_closedform}) for the EH constraints of the conditional capacity SWIPT problem.
Next, we derive the input RF power $P_{\rm in,sat}$ at which the output DC power starts to saturate. From (\ref{eq:Pout_Sat_approx}), saturation of the output DC power occurs when $\left[\frac{1}{a}W_0\left(a\e^aI_0\left(\beta\right)\right)-1\right]^2 I_{\rm s}^2R_{\rm L}=\frac{{B_{\rm v}}^2}{4R_{\rm L}}$. Using (\ref{eq:Pin}), (\ref{eq:Bessel_function_argument}), we obtain   
\begin{equation}
P_{\rm in,sat}=\left(\frac{\beta_{\rm sat}}{\sqrt{2}B}\right)^2=\frac{1}{2}\left(\eta V_{\rm T}\beta_{\rm sat}\right)^2\Re\left\{\frac{1}{Z_{\rm a}^*(R_{\rm j_0})}\right\},
\label{eq:Pin_sat}
\end{equation}
where $\beta_{\rm sat}$ is the solution of  $I_0\left(\beta_{\rm sat}\right)=\e^{\frac{a{B_{\rm v}}}{2I_{\rm s}R_{\rm L}}}(1+{B_{\rm v}}/(2I_{\rm s}R_{\rm L}))$. In Table \ref{tab:Sat_model_summary}, we summarize the main results of the saturation model derived in this section.
}
{\color{black} \subsection{Amplitude Constraints for the Transmitter and the EH Rxs}
\label{sss:Amplitude_Constraints_Tx_Rx}
 At the transmitter, the PP is usually limited to avoid the negative impact of power amplifier nonlinearities, i.e., we set $|\V{x}|\leq A_{\rm T}$. Moreover, it may be desired to limit the peak amplitude of the received RF signal at  EH Rx $l$ to some value $A_{{\rm R}_l}$, i.e., $\hat{y}_{{\rm E}_l}^{\rm RF}\!=\!|\sqrt{2}\V{x}h_{{\rm E}_l}|\!\leq\! A_{{\rm R}_l}$. 
Considering the PP constraints at the transmitter and all $L$ EH Rxs, the effective amplitude (or PP) constraint on the transmit signal reduces to
\begin{IEEEeqnarray}{llll}
|\V{x}|\leq\min\left(A_{\rm T},\min\limits_{l\in\mathcal{L}}A_{{\rm R}_l}/|\sqrt{2}h_{{\rm E}_l}|\right)\definedas A.
\label{eq:PP_constraint}
\end{IEEEeqnarray}
For example, setting the maximum received amplitude to $A_{{\rm R,sat}_l}\definedas\sqrt{2P_{\rm in,sat}}$,  $\forall \,l\in\mathcal{L}$, ensures that none of the EH Rxs operates in saturation\footnote{\color{black}We assume that the transmitter has perfect knowledge of the circuit parameters of the EH Rxs. The EH Rxs may retransmit these parameters to the  transmitter in intervals dictated by variations due to temperature changes and aging.}. In this case, the maximum transmit amplitude $A$ is given by   $A_{\rm sat}\definedas\min\Big(A_{\rm T},\min_{l\in\mathcal{L}}A_{{\rm T,sat}_l}\Big)$, where $A_{{\rm T,sat}_l}\definedas A_{{\rm R,sat}_l}/|\sqrt{2}h_{{\rm E}_l}|=\sqrt{P_{\rm in,sat}}/|h_{{\rm E}_l}|$.}

Using the closed-form expression for the harvested DC power in  (\ref{eq:P_x_closedform}) and the PP constraint in (\ref{eq:PP_constraint}), 
 we formulate next the conditional capacity problem of the SWIPT system in Fig. \ref{fig:system_model}.
\begin{table*}[!tp] 
\caption{ {\color{black}Summary of the main results of the saturation circuit model developed in Section \ref{ss:rectenna_model}.}}
{\color{black}
\begin{tabular}{@{}ll@{}}  \toprule   
Parameter & Expression \\ \midrule 
Given & $I_{\rm s}$, $\eta$, $V_{\rm T}$, $I_{B_{\rm v}}$, ${B_{\rm v}}$, $R_{\rm s}$, $C_{\rm j_0}$, $R_{\rm L}$, $C_{\rm L}$, $R_{\rm ant}$, $R_{\rm j_0}\!=\!\eta V_{\rm T}/I_{\rm s}$, $C_{\rm j}\approx C_{\rm j_0}$, $f_{\rm c}$, \\
& and $P_{\rm in}$ or $\hat{v}_b$  or $\hat{v}_a$ or $x$ and $h_{\rm E}$\\ 
Diode I-V characteristic & 
$i_{\rm d_j}(t)=i_{\rm F}(t)+i_{\rm R}(t)=I_{\rm s}\big(\e^{\frac{v_{\rm d_j}(t)}{\eta V_{\rm T}}}-1\big)-I_{B_{\rm v}}\e^{-\frac{{B_{\rm v}}}{\eta V_{\rm T}}}\big(\e^{-\frac{v_{\rm d_j}(t)}{\eta V_{\rm T}}}-1\big)$\\ \addlinespace[0.5em]
Rectifier input impedance $Z_{\rm a}$ & $Z_{\rm a}(R_{\rm d})=R_{\rm s}+\left[1/R_{\rm d}+ j\omega_{\rm c} C_{\rm j}\right]^{-1}+\left[1/R_{\rm L}+ j\omega_{\rm c} C_{\rm L}\right]^{-1}$\\ \addlinespace[0.5em]
Approximations of $Z_{\rm a}$ & $Z_{\rm a}\big|_{\rm low\,\,power}\to Z_{\rm a}(R_{\rm j_0})$ and $Z_{\rm a}\big|_{\rm high\,\,power}\to Z_{\rm a}(R_{\rm L}/2)$ \\ \addlinespace[0.5em]
Bessel function argument $\beta$ & $\beta=\frac{\hat{v}_{\rm a}}{\eta V_{\rm T}}=B \hat{y}_{\rm E}^{\rm RF}=B\frac{\hat{v}_{\rm b}}{\sqrt{R_{\rm ant}}}=B \sqrt{2 P_{\rm in}}=\sqrt{2}B h_{\rm E} x$, where $B= \left[\eta V_{\rm T}\sqrt{\Re\{1/Z_{\rm a}^*\}}\right]^{-1}$\\ \addlinespace[0.5em]
Exact harvested DC power & $P_{\rm out}(\beta)\Big|_{\rm exact}\!=\!\frac{V_{\rm out}^2(\beta)}{R_{\rm L}}$, where $V_{\rm out}(\beta)$ is the solution of\\\addlinespace[0.5em] &\hspace{0cm} $I_{\rm s}\left[\e^{\frac{-V_{\rm out}(\beta)}{\eta V_{\rm T}}\left(1+\frac{R_{\rm s}}{R_{\rm L}}\right)}I_0\left(\beta\right)-1\right]-I_{B_{\rm v}}\e^{-\frac{{B_{\rm v}}}{\eta V_{\rm T}}}\left[\e^{\frac{V_{\rm out}(\beta)}{\eta V_{\rm T}}\left(1+\frac{R_{\rm s}}{R_{\rm L}}\right)}I_0\left(\beta\right)-1\right]=\frac{V_{\rm out}(\beta)}{R_{\rm L}}$\\\addlinespace[0.5em]
Approximate harvested DC power & 
$P_{\rm out}(\beta)\Big|_{\rm approx.}=\min\left(\left[\frac{1}{a}W_0\left(a\e^aI_0\left(\beta\right)\right)-1\right]^2 I_{\rm s}^2R_{\rm L},\frac{{B_{\rm v}}^2}{4R_{\rm L}}\right),$ where $a\!=\!\frac{I_{\rm s}(R_{\rm L}+R_{\rm s})}{\eta V_{\rm T}}$.\\\addlinespace[0.5em]
RF input power at $P_{\rm out}\!=\!\frac{{B_{\rm v}}^2}{4R_{\rm L}}$  & $P_{\rm in,sat}=\frac{1}{2}\left(\eta V_{\rm T}\beta_{\rm sat}\right)^2\Re\left\{\frac{1}{Z_{\rm a}^*(R_{\rm j_0})}\right\}$, where $\beta_{\rm sat}$ is the solution of  $I_0\left(\beta_{\rm sat}\right)=\e^{\frac{a{B_{\rm v}}}{2I_{\rm s}R_{\rm L}}}\left(1\!+\!\frac{{B_{\rm v}}}{2I_{\rm s}R_{\rm L}}\right)$. \\
\bottomrule
\end{tabular}}
\label{tab:Sat_model_summary}
\end{table*}

\section{Problem Formulation and Solution}
\label{s:Problem_formulation}
In this section, we study the conditional capacity of the considered AWGN channel under AP and PP  constraints on the transmit signal and EH constraints at the EH Rxs. We first prove that the optimal input distribution for the transmit  symbols is unique and discrete with a finite number of mass points. In addition, we provide necessary and sufficient conditions for the optimal input distribution. Moreover, we show that if none of the EH Rxs operates in saturation, the R-E tradeoff curve for the problem with multiple EH Rxs can be obtained from the individual R-E curves obtained for each EH Rx separately.
\subsection{Problem Formulation} 
The discrete-time baseband model for the information channel after down-conversion, matched filtering, and sampling of the continuous-time signal received at the ID Rx is given by $\V{y}\!=\!h_{\rm I} \V{x}\! +\!\V{n}$,
where $\V{n}\sim\mathcal{N}(0,\sigma_n^2)$ is the Gaussian distributed noise and $\V{y}$ is the information channel output with probability density function (pdf) $p(y)$. 
We aim at maximizing the average mutual information between $\V{x}$ and $\V{y}$ subject to maximum AP and PP constraints at the transmitter and minimum harvested power constraints at the EH Rxs. In particular, we formulate the problem as
\begin{IEEEeqnarray}{llll}
\!\!\!\!\!C\!=&\sup\limits_{F\in\mathcal{F}_A} &&I(F) \nonumber\\
&\,\,{\rm s.t.} &&{\rm C}_0:\,\E_F[\V{x}^2] \leq \sigma^2; \notag\\
& &&{\rm C}_l\,:\, \E_F\left[P_l(\V{x})\right] \geq P_{l,\req},\,\, \forall\,l\in\mathcal{L},
\label{eq:capacity_problem}
\end{IEEEeqnarray}
where $\mathcal{F}_A$ is the set of all input distributions of $\V{x}$ that satisfy the PP constraint $|\V{x}|\!\leq\!A$ in (\ref{eq:PP_constraint}), i.e., $\forall F\!\in\!\mathcal{F}_A$, $\int_{-A}^A\dd F(x)\!=\!1$.
$I(F)$ is the mutual information between $\V{x}$ and $\V{y}$ achieved by input distribution $F$ and given by $I(F)\!=\!\int_{-A}^A i(x;F)\dd F(x)$, where $i(x;F)$ is the marginal information density defined as $i(x;F)\!\definedas\!\int_yp(y|x) \log_2\frac{p(y|x)}{p(y;F)}\dd y$, $p(y;F)$ is the output pdf assuming input distribution $F$, and $p(y|x)$ is the  output pdf conditioned on the transmission of symbol $x$ \cite{SMITH19712}.  $\sigma^2$ is the AP budget, $P_l(\V{x})$ is the harvested power function at the $l^{\text{th}}$ EH Rx given in (\ref{eq:P_x_closedform}), and $P_{l,\req}$ is the minimum required   harvested power at EH Rx $l$. For the purpose of exposition, we define $g_0(F)\!\definedas\!\int_{-A}^A x^2\dd F(x)\! - \!\sigma^2$ and $g_l(F)\!\definedas\! P_{l,\req}\!-\!\int_{-A}^A P_l(x)\dd F(x)$, $\forall\,l\in\mathcal{L}$. Hence, constraints ${\rm C}_{0}$ and ${\rm C}_{l}, \,\forall\,  l\in\mathcal{L}$, can be written as $g_l(F)\!\leq\! 0$, $\forall\,l\in\{0\}\cup\mathcal{L}$. 
\subsection{Properties of the Optimal Input Distribution}
In the following, we investigate some important properties of the optimal input distribution.
\subsubsection{Uniqueness of the Optimal Input Distribution}
We establish the uniqueness of the optimal input distribution for problem (\ref{eq:capacity_problem}) in the following theorem. 
\begin{theorem}\normalfont
The conditional capacity $C$ in (\ref{eq:capacity_problem}) is achieved by a \emph{unique} optimal input distribution function $F_0$, i.e., $C=\sup\limits_{F\in\Omega}I(F)=I(F_0)$, where $\Omega\subset \mathcal{F}_A$ is the set of input distributions that satisfy the PP constraint and constraints ${\rm C}_{l}$, $\forall\,l\in\{0\}\cup\mathcal{L}$, in (\ref{eq:capacity_problem}). Furthermore, there exist $\lambda_l\geq 0$, $\forall\,l\in\{0\}\cup\mathcal{L},$ such that the conditional capacity $C$ is equivalently given by $C\!=\!\sup\limits_{F\in\mathcal{F}_A} I(F)-\sum_{l\in\{0\}\cup\mathcal{L}} \lambda_l g_l(F)$, which is  also achieved by $F_0$ and
$\lambda_lg_l(F_0)\!=\!0$, $\forall\,\,l\in\{0\}\cup\mathcal{L}$.
\label{theo:unique_distribution}
\end{theorem}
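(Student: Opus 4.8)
The plan is to treat (\ref{eq:capacity_problem}) as a convex optimization problem over the space of probability measures supported on the compact interval $[-A,A]$, establishing in turn existence, uniqueness, and the Karush--Kuhn--Tucker (KKT) conditions. First I would fix the topology: $\mathcal{F}_A$ is convex and, by Helly's selection theorem (equivalently, weak-$*$ compactness of probability measures on a compact set), sequentially compact under weak convergence. Since $x^2$ and $P_l(x)$ are bounded and continuous on $[-A,A]$, the functionals $g_0(F)$ and $g_l(F)$ are \emph{affine} in $F$ and weakly continuous; hence the feasible set $\Omega=\{F\in\mathcal{F}_A : g_l(F)\leq 0,\ \forall\, l\in\{0\}\cup\mathcal{L}\}$ is a closed, convex, and therefore compact subset of $\mathcal{F}_A$, and it is nonempty whenever the problem is feasible. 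The objective $I(F)$ is concave in $F$ (a standard property of mutual information for a fixed channel) and weakly continuous on this compact input set, so by the Weierstrass extreme-value theorem the supremum is attained at some $F_0\in\Omega$, which proves existence.

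For uniqueness I would exploit the structure of the AWGN channel. Writing $I(F)=h(p(\cdot;F))-\tfrac12\log_2(2\pi \e\,\sigma_n^2)$, where $h(\cdot)$ is the differential entropy of the output density $p(y;F)=\int_{-A}^A p(y|x)\,\dd F(x)$, two facts combine: (i) the linear map $F\mapsto p(\cdot;F)$ is \emph{injective}, because convolution with the Gaussian kernel $p(y|x)$ has a characteristic function that vanishes nowhere, so distinct input distributions produce distinct output densities; and (ii) the differential entropy $h(\cdot)$ is \emph{strictly} concave on the set of output densities. If there were two distinct maximizers $F_1\neq F_2$, then $p(\cdot;F_1)\neq p(\cdot;F_2)$, and by linearity the midpoint $\tfrac12(F_1+F_2)\in\Omega$ has output density $\tfrac12\big(p(\cdot;F_1)+p(\cdot;F_2)\big)$; strict concavity of $h$ then yields $I\big(\tfrac12(F_1+F_2)\big)>\tfrac12\big(I(F_1)+I(F_2)\big)=C$, contradicting optimality. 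Hence $F_0$ is unique.

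For the second claim I would invoke Lagrangian duality. Because $I$ is concave, the maps $g_l$ are convex (indeed affine), and $\Omega$ is convex, (\ref{eq:capacity_problem}) is a convex program; assuming a Slater-type constraint qualification (existence of a strictly feasible distribution with $g_l(F)<0$ for all $l$, which holds whenever the harvested-power targets lie strictly inside the feasible region), strong duality holds and there exist multipliers $\lambda_l\geq 0$ such that $F_0$ also maximizes the Lagrangian $L(F)=I(F)-\sum_{l\in\{0\}\cup\mathcal{L}}\lambda_l g_l(F)$ over all of $\mathcal{F}_A$, together with complementary slackness $\lambda_l g_l(F_0)=0$. Combined with $\sum_l\lambda_l g_l(F_0)=0$, this gives $L(F_0)=I(F_0)=C$, establishing the equivalent penalized characterization. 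Equivalently, the multipliers arise from a supporting-hyperplane argument applied to the convex set of achievable tuples $\big\{(I(F),g_0(F),\dots,g_L(F)):F\in\mathcal{F}_A\big\}$, as in \cite{SMITH19712,QAGC_Shamai_Bar_David_1995}.

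I expect the main obstacle to be this second part: rigorously justifying the existence of the multipliers $\lambda_l$ in the infinite-dimensional setting and verifying the constraint qualification, since finite-dimensional KKT theory does not apply directly to optimization over a space of measures and care is needed to separate the achievable set by a hyperplane with a nonnegative normal in the constraint coordinates. By contrast, the injectivity-plus-strict-concavity route to uniqueness is clean once the weak-topology setup and the concavity and continuity of $I(F)$ are in place.
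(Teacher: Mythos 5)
Your proposal is correct and follows essentially the same route as the paper's Appendix A: weak-$*$ compactness and convexity of the feasible set, continuity and strict concavity of $I(F)$ for existence and uniqueness, and Slater's condition plus strong duality for the multipliers and complementary slackness (the paper verifies Slater explicitly with a point mass at some $\tilde{x}$ with $|\tilde{x}|<\sigma<A$ and $P_l(\tilde{x})>P_{l,\req}$). Your explicit appeal to the injectivity of $F\mapsto p(\cdot;F)$ via the nowhere-vanishing Gaussian characteristic function is a welcome refinement, since the paper's statement that strict concavity of the entropy in the output pdf plus linearity of $p(y;F)$ in $F$ yields strict concavity in $F$ implicitly requires exactly that injectivity.
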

\begin{proof}
The proof is provided in Appendix \ref{App:proof_unique_optimal_distribution}.
\end{proof} 
\subsubsection{Necessary and Sufficient Conditions for the Optimal Input Distribution}
The following theorem provides a necessary and sufficient condition for the optimal input distribution $F_0$.
\begin{theorem}\normalfont\label{theo:C_necessary_sufficient_condition1}
A necessary and sufficient condition for the input distribution $F_0$ to achieve the conditional capacity $C$ in (\ref{eq:capacity_problem}) is that $\forall F\in\mathcal{F}_A$, there exist $\lambda_l\geq 0$, $\forall\,l\in\{0\}\cup\mathcal{L},$ such that
\begin{IEEEeqnarray}{ll} 
\int\limits_{-A}^A \left[i(x;F_0)-\lambda_0 x^2+\sum\limits_{l\in\mathcal{L}} \lambda_lP_l(x)\right]\dd F(x) \notag\\
 \leq C-\lambda_0\sigma^2+\sum\limits_{l\in\mathcal{L}} \lambda_l P_{l,\req}.\label{eq:C_necessary_sufficient_condition1}
\end{IEEEeqnarray}
\end{theorem}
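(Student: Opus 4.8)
The plan is to read the claimed inequality as the first-order (variational) optimality condition of the concave program already solved in Theorem~\ref{theo:unique_distribution}, and to derive it by a weak-derivative argument in the spirit of Smith \cite{SMITH19712}. Theorem~\ref{theo:unique_distribution} supplies the multipliers $\lambda_l\ge 0$, asserts that $F_0$ maximizes the Lagrangian functional $J(F)\definedas I(F)-\sum_{l\in\{0\}\cup\mathcal{L}}\lambda_l g_l(F)$ over the convex set $\mathcal{F}_A$, and gives the complementary-slackness relations $\lambda_l g_l(F_0)=0$. Since $I(F)$ is concave in $F$ while each $g_l(F)$ is affine in $F$ (the maps $\int_{-A}^A x^2\dd F(x)$ and $\int_{-A}^A P_l(x)\dd F(x)$ being linear in $F$), the functional $J$ is concave on the convex set $\mathcal{F}_A$, so its maximizer is characterized by a single first-order condition, which I will then rewrite into the stated form.

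Concretely, for an arbitrary $F\in\mathcal{F}_A$ I would consider the segment $F_\theta\definedas(1-\theta)F_0+\theta F$, $\theta\in[0,1]$, which remains in $\mathcal{F}_A$ by convexity, and compute the one-sided weak derivative of each term of $J$ at $\theta=0$. For the affine constraints this is immediate, $g'_{l,F_0}(F)=g_l(F)-g_l(F_0)$. For the mutual information, following the AWGN argument of \cite{SMITH19712}, I would establish that the limit
\[
I'_{F_0}(F)\definedas\lim_{\theta\to 0^+}\frac{I(F_\theta)-I(F_0)}{\theta}=\int_{-A}^A i(x;F_0)\dd F(x)-I(F_0)
\]
exists and is finite. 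The standard criterion for a concave functional on a convex set then delivers both directions simultaneously: $F_0$ maximizes $J$ if and only if $J'_{F_0}(F)=I'_{F_0}(F)-\sum_{l}\lambda_l\bigl[g_l(F)-g_l(F_0)\bigr]\le 0$ for every $F\in\mathcal{F}_A$, where sufficiency uses the supporting-tangent inequality $J(F)-J(F_0)\le J'_{F_0}(F)$ and necessity uses that $J(F_\theta)$ cannot increase at $\theta=0^+$. Combined with Theorem~\ref{theo:unique_distribution}, which identifies maximizing $J$ over $\mathcal{F}_A$ with achieving $C$, this yields the desired equivalence.

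It then remains only to simplify. Substituting the derivative of $I$, applying the complementary-slackness identity $\sum_{l}\lambda_l g_l(F_0)=0$, and inserting $g_0(F)=\int_{-A}^A x^2\dd F(x)-\sigma^2$, $g_l(F)=P_{l,\req}-\int_{-A}^A P_l(x)\dd F(x)$ together with $I(F_0)=C$, the condition $J'_{F_0}(F)\le 0$ collapses to
\[
\int_{-A}^A\Bigl[i(x;F_0)-\lambda_0 x^2+\sum_{l\in\mathcal{L}}\lambda_l P_l(x)\Bigr]\dd F(x)\le C-\lambda_0\sigma^2+\sum_{l\in\mathcal{L}}\lambda_l P_{l,\req},
\]
which is exactly the claim. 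I expect the principal obstacle to be the rigorous evaluation of the weak derivative $I'_{F_0}(F)$: one must justify interchanging the limit $\theta\to 0^+$ with the integration defining $I$, and differentiating under the integral sign in the marginal information density $i(x;F_0)$. This is precisely where the problem structure is essential—the Gaussian transition density $p(y\,|\,x)$ and the boundedness of the support enforced by the PP constraint $|\V{x}|\le A$ furnish the domination needed to pass to the limit and to guarantee that $i(x;F_0)$ is bounded and continuous, so that both the weak-derivative formula and the concave-optimality criterion are valid on the infinite-dimensional space of input distributions.
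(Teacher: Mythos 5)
Your proposal is correct and follows essentially the same route as the paper's proof: both identify the stated inequality with the weak-derivative optimality condition $J'_{F_0}(F)\le 0$ for the concave Lagrangian $J(F)=I(F)-\sum_l\lambda_l g_l(F)$ over the convex set $\mathcal{F}_A$, use $I'_{F_0}(F)=\int i(x;F_0)\dd F(x)-I(F_0)$ and $g'_{l,F_0}(F)=g_l(F)-g_l(F_0)$, and then invoke complementary slackness from Theorem~\ref{theo:unique_distribution} to collapse the condition to (\ref{eq:C_necessary_sufficient_condition1}). The only cosmetic difference is that the paper outsources both the optimality criterion and the existence of the weak derivative to a cited theorem of Abou-Faycal et al., whereas you sketch the supporting-tangent argument and flag the interchange-of-limits issue explicitly.
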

\begin{proof}
The proof is provided in Appendix \ref{App:proof_C_necessary_sufficient_condition1}.
\end{proof}
Define the points of increase of a distribution function $F$ as those points which have non-zero probability \cite{SMITH19712}.
Next, we provide a more useful condition for characterizing the optimal input distribution.
\begin{corollary}\normalfont
Let $E_0$ be the set of points of increase of a distribution function $F_0$ on $[-A,A]$, then $F_0$ is the optimal input distribution of problem  (\ref{eq:capacity_problem}) if and only if there exist $\lambda_l\geq 0$, $\forall\,l\in\{0\}\cup\mathcal{L}$, such that 
\begin{equation}
\begin{aligned}
&s(x)\!\definedas\!\lambda_0\!\left(x^2\!-\!\sigma^2\right)\!-\!\sum\limits_{l\in\mathcal{L}}\!\lambda_l\left(P_l(x)\!-\!P_{l,\req}\right)\!+\!C\\&\!+\!\frac{1}{2}\!\log_2(2\pi\e\sigma_n^2)\!+\!\! \int\!\!\frac{\e^{-\!\frac{(y-xh_{\rm I})^2}{2\sigma_n^2}}}{\sqrt{2\pi\sigma_n^2}}\log_2(p(y;\!F_0))\dd y \!\geq\! 0,
\end{aligned}
\label{eq:C_necessary_sufficient_condition2}
\end{equation}
$\forall x\in[-A,A]$, where equality holds if $x$ is a point of increase of $F_0$, i.e., if $x\in E_0$.
\label{corol:C_necessary_sufficient_condition2}
\end{corollary}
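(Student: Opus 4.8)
The plan is to derive Corollary~\ref{corol:C_necessary_sufficient_condition2} directly from Theorem~\ref{theo:C_necessary_sufficient_condition1} by making the integrand of (\ref{eq:C_necessary_sufficient_condition1}) explicit and then collapsing the integral inequality into a pointwise one. First I would evaluate the marginal information density for the AWGN channel. Since $p(y|x)=\e^{-(y-xh_{\rm I})^2/(2\sigma_n^2)}/\sqrt{2\pi\sigma_n^2}$, the self-term $\int_y p(y|x)\log_2 p(y|x)\dd y$ equals the negative differential entropy of a Gaussian, namely $-\tfrac{1}{2}\log_2(2\pi\e\sigma_n^2)$, so that
\[
i(x;F_0)=-\tfrac{1}{2}\log_2(2\pi\e\sigma_n^2)-\int\frac{\e^{-(y-xh_{\rm I})^2/(2\sigma_n^2)}}{\sqrt{2\pi\sigma_n^2}}\log_2 p(y;F_0)\dd y.
\]

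Comparing this with the definition of $s(x)$, its last two terms are exactly $-i(x;F_0)$. Writing $T(x)\definedas i(x;F_0)-\lambda_0 x^2+\sum_{l\in\mathcal{L}}\lambda_l P_l(x)$ for the bracketed integrand of (\ref{eq:C_necessary_sufficient_condition1}) and $K\definedas C-\lambda_0\sigma^2+\sum_{l\in\mathcal{L}}\lambda_l P_{l,\req}$ for its right-hand side, a short algebraic rearrangement yields the identity $s(x)=K-T(x)$. Substituting this into the inequality of Theorem~\ref{theo:C_necessary_sufficient_condition1} and using $\int_{-A}^A\dd F(x)=1$, the condition $\int_{-A}^A T(x)\dd F(x)\leq K$ becomes equivalent to $\int_{-A}^A s(x)\dd F(x)\geq 0$ for every $F\in\mathcal{F}_A$.

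It then remains to turn this averaged statement into the claimed pointwise one. Sufficiency is immediate, since $s(x)\geq 0$ on $[-A,A]$ forces $\int s\,\dd F\geq 0$ for any $F$. For necessity, I would take $F$ to be the unit point mass at an arbitrary $x_0\in[-A,A]$, which yields $s(x_0)\geq 0$; this establishes the nec.\ and suff.\ condition $s(x)\geq 0$, $\forall x\in[-A,A]$. To obtain the equality assertion on $E_0$, I would apply the condition with $F=F_0$ and invoke the complementary-slackness relations $\lambda_l g_l(F_0)=0$, $\forall\,l\in\{0\}\cup\mathcal{L}$, from Theorem~\ref{theo:unique_distribution}, together with $\int i(x;F_0)\dd F_0(x)=I(F_0)=C$. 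Since $\lambda_0 g_0(F_0)=0$ gives $\lambda_0\!\int x^2\dd F_0=\lambda_0\sigma^2$ and $\lambda_l g_l(F_0)=0$ gives $\lambda_l\!\int P_l\dd F_0=\lambda_l P_{l,\req}$, these combine to $\int_{-A}^A T(x)\dd F_0(x)=K$, hence $\int_{-A}^A s(x)\dd F_0(x)=0$. A nonnegative function whose average under $F_0$ vanishes must be zero $F_0$-almost everywhere, i.e.\ on the support $E_0$, which is the asserted equality.

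The evaluation of $i(x;F_0)$ and the identity $s(x)=K-T(x)$ are routine. The main---though mild---obstacle is the regularity needed to pass cleanly from ``$s=0$ $F_0$-almost everywhere'' to ``$s(x)=0$ for every $x\in E_0$'': this requires the continuity of $s$ in $x$, which follows from the continuity of the Gaussian-smoothed log-density $\int \frac{\e^{-(y-xh_{\rm I})^2/(2\sigma_n^2)}}{\sqrt{2\pi\sigma_n^2}}\log_2 p(y;F_0)\dd y$ in $x$, a property that should already be available from the regularity arguments used in the proof of Theorem~\ref{theo:C_necessary_sufficient_condition1}.
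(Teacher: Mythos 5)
Your proposal is correct and follows essentially the same route as the paper's Appendix C: both evaluate $i(x;F_0)$ explicitly for the Gaussian channel, rewrite the integral condition of Theorem~\ref{theo:C_necessary_sufficient_condition1} as $\int s\,\dd F\geq 0$, obtain the pointwise inequality by testing with unit point masses, and obtain equality on $E_0$ from $\int s\,\dd F_0=0$ (the paper phrases this last step as a contradiction by splitting the integral over a positive-measure subset of $E_0$, which is the same measure-theoretic fact you invoke). The continuity caveat you raise at the end is not needed under the paper's stated definition of points of increase as points of non-zero probability, since $s\geq 0$ and $\int s\,\dd F_0=0$ then force $s$ to vanish at every atom directly.
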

\begin{proof}
The proof is provided in Appendix \ref{App:proof_C_necessary_sufficient_condition2}.
\end{proof}

\subsubsection{Discreteness of the Optimal Input Distribution}
\label{sss:discrete_dist}
The discreteness of the optimal input distribution $F_0$ for problem (\ref{eq:capacity_problem})   is formally stated in the following theorem. 
\begin{theorem}\normalfont
The optimal input distribution that achieves the conditional capacity in (\ref{eq:capacity_problem}) is discrete with a finite number of mass points.
\label{theo:discrete_optimal_distribution}
\end{theorem}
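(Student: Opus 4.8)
The plan is to follow the classical analytic approach of Smith \cite{SMITH19712}, adapting it to the additional (but still analytic) energy-harvesting terms $P_l(x)$. The starting point is Corollary \ref{corol:C_necessary_sufficient_condition2}: the set $E_0$ of points of increase of the optimal $F_0$ is a closed subset of the compact interval $[-A,A]$, and on $E_0$ the function $s(x)$ vanishes while $s(x)\ge 0$ everywhere on $[-A,A]$. Suppose, for contradiction, that $E_0$ is infinite. Since $E_0\subseteq[-A,A]$ is bounded and infinite, the Bolzano--Weierstrass theorem yields an accumulation point $x^\star\in[-A,A]$ at which infinitely many distinct zeros of $s$ cluster.

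The next step is to show that $s$ extends to a function $s(z)$ that is analytic in a complex neighbourhood of $[-A,A]$. The term $\lambda_0(x^2-\sigma^2)$ is a polynomial and hence entire. In the non-saturation regime enforced by the peak constraint $A\le A_{\rm sat}$, each $P_l(x)\!=\!\big[\tfrac{1}{a}W_0(a\e^aI_0(\sqrt{2}Bh_{{\rm E}_l}x))-1\big]^2 I_{\rm s}^2R_{\rm L}$ is analytic in a neighbourhood of the real axis, since $I_0$ is entire and its argument keeps $a\e^aI_0(\cdot)$ inside the region where the principal branch $W_0$ is analytic. The only delicate term is $\int \tfrac{\e^{-(y-zh_{\rm I})^2/(2\sigma_n^2)}}{\sqrt{2\pi\sigma_n^2}}\log_2 p(y;F_0)\,\dd y$: the Gaussian kernel is entire in $z$, and because $p(y;F_0)\!=\!\int_{-A}^A \tfrac{\e^{-(y-xh_{\rm I})^2/(2\sigma_n^2)}}{\sqrt{2\pi\sigma_n^2}}\dd F_0(x)$ is a mixture of Gaussians whose inputs lie in $[-A,A]$, two-sided Gaussian bounds give $\log_2 p(y;F_0)=-\tfrac{y^2}{2\sigma_n^2\ln 2}+O(|y|)$ as $|y|\to\infty$. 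This decay makes the integral converge locally uniformly in $z$, so differentiation under the integral (or Morera's theorem) shows it to be analytic. Hence $s(z)$ is analytic on a connected complex neighbourhood of $[-A,A]$.

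Applying the identity theorem, the analytic function $s(z)$ vanishes on $E_0$, which has the accumulation point $x^\star$ inside its domain of analyticity; therefore $s(z)\equiv 0$ there, and by analytic continuation along $\mathbb{R}$ (every constituent term being real-analytic on all of $\mathbb{R}$) we obtain $s(x)=0$ for \emph{every} real $x$. Using $i(x;F_0)=-\tfrac12\log_2(2\pi\e\sigma_n^2)-\int \tfrac{\e^{-(y-xh_{\rm I})^2/(2\sigma_n^2)}}{\sqrt{2\pi\sigma_n^2}}\log_2 p(y;F_0)\,\dd y$, the relation $s\equiv 0$ states that the Gauss (Weierstrass) transform of $\log_2 p(y;F_0)$ equals the explicit function $\lambda_0(x^2-\sigma^2)-\sum_{l}\lambda_l(P_l(x)-P_{l,\req})+C$ for all real $x$.

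The hard part will be extracting a contradiction from this identity, precisely because the $P_l$ terms are not polynomials and so Smith's clean conclusion ``$\log p$ is quadratic'' does not apply verbatim. I would proceed as follows. A saddle-point estimate using $I_0(z)\sim \e^{z}/\sqrt{2\pi z}$ and $W_0(u)\sim\ln u$ gives $P_l(x)\sim \kappa_l x^2$ with $\kappa_l=2B^2h_{{\rm E}_l}^2I_{\rm s}^2R_{\rm L}/a^2>0$, while the transform of $\log_2 p$ behaves like $-\tfrac{h_{\rm I}^2}{2\sigma_n^2\ln 2}x^2$; matching leading quadratic coefficients forces the borderline balance $\lambda_0-\sum_l\lambda_l\kappa_l=\tfrac{h_{\rm I}^2}{2\sigma_n^2\ln 2}$. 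Rather than rest the argument on this knife-edge, I would pass to the Fourier domain: writing $p(y;F_0)$ as the convolution of $F_0$ with the variance-$\sigma_n^2$ Gaussian, the Paley--Wiener theorem gives $\widehat{p}(\omega)=\widehat{F_0}(\omega)\,\e^{-\sigma_n^2\omega^2/2}$ with $\widehat{F_0}$ entire of exponential type at most $A$ (since $F_0$ is supported on $[-A,A]$). On the other hand, $s\equiv 0$ forces $\log_2 p(y;F_0)$ to be the inverse Gauss transform of an asymptotically quadratic analytic function, which one shows is incompatible with $\widehat{p}$ having this restricted form unless $F_0$ is Gaussian; a Gaussian $F_0$ then contradicts the bounded support $|\V{x}|\le A$. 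This contradiction forces $E_0$ to be finite, establishing that $F_0$ is discrete with a finite number of mass points.
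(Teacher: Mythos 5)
Your proposal tracks the paper's proof for its first half: Bolzano--Weierstrass to get an accumulation point of $E_0$, analytic extension of $s$ to a complex domain (handling the branch cut of $W_0$ and the locally uniform convergence of the Gaussian-weighted integral of $\log_2 p(y;F_0)$), and the identity theorem to conclude $s\equiv 0$ on the real line. Up to that point the argument is sound and essentially identical to Appendix \ref{App:proof_discrete_optimal_distribution}. The divergence --- and the genuine gap --- is in the final step, where the contradiction must actually be extracted from the functional identity. The paper does this constructively: it expands $\log_2 p(y;F_0)=\sum_m c_m H_m(y)$ in Hermite polynomials, uses their orthogonality against the Gaussian kernel to turn the integral term into the power series $\sum_m c_m (h_{\rm I}z)^m$, Taylor-expands each $P_l(z)$ via the series of $W_0$ and $I_0$, equates coefficients of $z^m$ to solve for the $c_m$ explicitly, and then shows the resulting $p(y;F_0)$ is inadmissible in both cases (a Gaussian output pdf, impossible under the peak-power constraint, when all $\lambda_l=0$; an unbounded, non-integrable $p(y;F_0)=\e^{\ln(2)\sum_n t_n y^{2n}}$ when some EH constraint is active). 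Your proposal replaces this computation with the assertion that the inverse Gauss transform of an ``asymptotically quadratic analytic function'' is ``incompatible'' with $\widehat{p}(\omega)=\widehat{F_0}(\omega)\e^{-\sigma_n^2\omega^2/2}$ having $\widehat{F_0}$ of exponential type $\le A$ ``unless $F_0$ is Gaussian.'' That sentence is the entire theorem; it is exactly what must be proved, and no mechanism is supplied. Gaussian deconvolution is ill-posed, the asymptotic order of the right-hand side does not determine $\log_2 p$, and the Paley--Wiener constraint on $\widehat{F_0}$ is not obviously in tension with a non-quadratic $\log_2 p$ without the explicit inversion the paper performs.

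Two further points. First, your dichotomy (``incompatible unless $F_0$ is Gaussian'') only mirrors the paper's Case 1; when some $\lambda_l>0$ the correct conclusion is not that $F_0$ is Gaussian but that the forced output density is unbounded, which your leading-order matching of $P_l(x)\sim\kappa_l x^2$ cannot detect because the obstruction lives in the higher-order (non-quadratic) terms of $P_l$, precisely the terms your saddle-point estimate discards. Second, you restrict to $A\le A_{\rm sat}$ to secure analyticity of $P_l$, but the theorem is stated for general $A$; the paper handles saturated receivers by treating $P_l(z)$ as the constant $B_{\rm v}^2/(4R_{\rm L})$ on $\mathcal{L}_{\rm sat}$, and your argument should say how it covers that regime. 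To complete your proof you would need either to carry out the Hermite (or an equivalent explicit) inversion of the Gauss transform, or to supply a rigorous uniqueness-plus-incompatibility lemma for the deconvolution step; as written, the decisive contradiction is asserted rather than established.
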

\begin{proof}
The proof is provided in Appendix \ref{App:proof_discrete_optimal_distribution}.
\end{proof}
\subsection{The Activeness of Only One EH Constraint in (\ref{eq:capacity_problem}) for $A< A_{\rm sat}$
\label{ss:one_EH_constraint_active}}

In this section, we consider the case when the transmit amplitude is set to  $A< A_{\rm sat}$ to avoid the saturation of the harvested DC power at the EH Rxs. We prove that, in this case, at the optimal solution,  at most one of the EH constraints of problem (\ref{eq:capacity_problem}) is active. 
 We note that, owing to the random deployment of the EH Rxs, their channel gains are different {\color{black}with probability one}. Moreover, each EH Rx sets its minimum  DC power requirement independently. As a result, the solution of problem (\ref{eq:capacity_problem}) with only EH constraint ${\rm C}_l$ is different from that with only EH constraint ${\rm C}_{l'}$,  $\forall\,l\neq l'\in\mathcal{L}$.

\begin{lemma}
\label{lem:monotonicity_Power_function_before_sat}
Considering the harvested power model in (\ref{eq:P_x_closedform}), in the unsaturated case, i.e., for $A< A_{\rm sat}$, 
 if an input distribution $F_{\rm L}(x) {\color{black}\in\mathcal{F}_A}$  provides a larger average harvested power than another distribution $F_{\rm s}(x){\color{black}\in \mathcal{F}_A}$, for one EH Rx, then $F_{\rm L}(x)$ also provides larger average harvested powers than  $F_{\rm s}(x)$, for all other EH Rxs. That is, if for some $l$, $\E_{F_{\rm L}}[P_{l}(\V{x})]\!>\!\E_{F_{\rm s}}[P_{l}(\V{x})]$, then $\E_{F_{\rm L}}[P_{\tilde{l}}(\V{x})]\!>\!\E_{F_{\rm s}}[P_{\tilde{l}}(\V{x})]$, $\forall$ $\tilde{l}\in\mathcal{L}$. 
\end{lemma}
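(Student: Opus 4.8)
The plan is to reduce the $L$-receiver statement to a one-dimensional comparison on a single input--output curve shared by all receivers, and then to transfer a \emph{pointwise} ordering into an ordering of \emph{averages}.

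First I would remove the saturation. Since $A<A_{\rm sat}$, for every $l\in\mathcal{L}$ and every $|x|\le A$ we have $\sqrt{2}B|h_{{\rm E}_l}|\,|x|<\beta_{\rm sat}$, so the $\min(\cdot,\cdot)$ in (\ref{eq:P_x_closedform}) never attains the saturation value $B_{\rm v}^2/(4R_{\rm L})$ and $P_l(x)=g\!\left(\sqrt{2}Bh_{{\rm E}_l}x\right)$ with $g(\beta)\definedas\left[\tfrac1a W_0\!\left(a\e^a I_0(\beta)\right)-1\right]^2 I_{\rm s}^2R_{\rm L}$. I would then record the elementary monotonicity facts: $I_0$ is even and strictly increasing on $[0,\infty)$, $W_0$ is strictly increasing on $[0,\infty)$, and $W_0(a\e^a)=a$ yields $\tfrac1a W_0(\cdots)-1\ge0$ with equality only at $\beta=0$; squaring a nonnegative increasing function preserves strict monotonicity. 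Hence each $P_l$ depends on $x$ only through $|x|$ and is a strictly increasing bijection of $|x|\in[0,A]$ onto a subinterval of $[0,B_{\rm v}^2/(4R_{\rm L}))$.

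Using (\ref{eq:Bessel_function_argument}) I would next make the common-curve structure explicit: all receivers share one RF-to-DC map $\Phi$, namely $P_l(x)=\Phi\!\left(h_{{\rm E}_l}^2\V{x}^2\right)$, differing only through the positive scaling $h_{{\rm E}_l}^2$ of the effective input power. Because $g$ is invertible on the unsaturated range, the powers at any two receivers are linked by a strictly increasing map, $P_{\tilde l}(x)=\psi_{l\tilde l}\!\left(P_l(x)\right)$ with $\psi_{l\tilde l}=g\!\left(\rho\,g^{-1}(\cdot)\right)$ and $\rho=|h_{{\rm E}_{\tilde l}}/h_{{\rm E}_l}|$. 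Thus the orderings of the harvested powers across receivers coincide pointwise in $x$; the whole content of the lemma is to promote this to an ordering of the averages.

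For this last step I would set $\mu\definedas F_{\rm L}-F_{\rm s}$, a signed measure with $\int\dd\mu=0$, and study $D(c)\definedas\int\Phi\!\left(c\,\V{x}^2\right)\dd\mu(x)$; the claim is that $\mathrm{sign}\,D(h_{{\rm E}_l}^2)$ is independent of $l$. The transparent case is when $\Phi$ is a power law: in the forward-current-dominated low-power regime the reverse term in (\ref{eq:total_junction_current}) is negligible and a Taylor expansion of (\ref{eq:Pout_interms_Va_LP}) gives $\Phi(P_{\rm in})\propto P_{\rm in}^2$, so that $\E_F[P_l]=\kappa_l\,\E_F[\V{x}^4]$ with $\kappa_l>0$ depending only on $h_{{\rm E}_l}$; then $\E_{F_{\rm L}}[P_l]>\E_{F_{\rm s}}[P_l]\Leftrightarrow\E_{F_{\rm L}}[\V{x}^4]>\E_{F_{\rm s}}[\V{x}^4]$, a receiver-independent condition, and the lemma is immediate. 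I expect the main obstacle to be justifying this sign invariance over the \emph{entire} unsaturated range, where $\Phi$ interpolates between its low-power (quartic-in-amplitude) and higher-power (quadratic-in-amplitude) behaviour: monotonicity of $\psi_{l\tilde l}$ by itself does \emph{not} preserve the ordering of means, since two distributions in $\mathcal{F}_A$ can be ordered by $\E_F[\V{x}^4]$ yet oppositely ordered by $\E_F[\V{x}^2]$, and these govern $D(c)$ at small and large $c$, respectively. I would address this either by bounding the departure of $\Phi$ from a pure power law and controlling it through $A<A_{\rm sat}$, or---exploiting the context in which the lemma is invoked---by restricting to the specific one-parameter, totally ordered family of candidate distributions arising in the single-active-constraint argument, for which $\mu$ has a single sign change and a variation-diminishing argument for the kernel $\Phi(c\,u)$ yields a constant sign of $D(c)$. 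This sign-transfer step is the heart of the lemma.
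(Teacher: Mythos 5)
Your proposal is not a complete proof. After correctly stripping the saturation (valid since $A<A_{\rm sat}$), establishing the monotonicity of $P_l$ in $|x|$, and reducing the claim to showing that $D(c)=\int\Phi(c\,x^2)\,\dd\mu(x)$ with $\mu=F_{\rm L}-F_{\rm s}$ has a sign independent of $c=h_{{\rm E}_l}^2$, you stop at what you yourself call "the heart of the lemma" and offer only two unexecuted strategies (a power-law perturbation bound, or a variation-diminishing argument on a restricted one-parameter family of distributions). Neither is carried out, so the lemma is not established by your argument.

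That said, the point at which you stop is precisely the point at which the paper's own proof makes its only nontrivial step. The paper argues that (i) each $P_l(x)$ is monotonically increasing on $0<x<A$ because $I_0(\cdot)$, $W_0(\cdot)$ and the square are increasing, and (ii) $P_l(x)$ is increasing in $h_{{\rm E}_l}$, and then concludes with a bare "hence" that $\int P_l\,\dd F_{\rm L}>\int P_l\,\dd F_{\rm s}$ for one $l$ forces the same inequality for every other $l'$. As you correctly observe, the pointwise relation $P_{l'}(x)=\psi_{ll'}\!\left(P_l(x)\right)$ with $\psi_{ll'}$ strictly increasing does not by itself preserve the ordering of expectations: for instance, with $\Phi(u)=u+u^2$ one can choose a zero-mass signed measure $\mu$ on $[-A,A]$ with $\int x^2\dd\mu>0$ and $\int x^4\dd\mu<0$, so that $D(c)=c\int x^2\dd\mu+c^2\int x^4\dd\mu$ changes sign as $c$ varies. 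Your diagnosis of where the difficulty lies is therefore sharper than the paper's treatment, which simply asserts the sign transfer; but to actually obtain the lemma one would need either special structure of the particular $\Phi$ implicit in (\ref{eq:P_x_closedform}), or a restriction to the specific distributions that arise in the proof of Theorem \ref{theo:Only_one_EH_Rx_effective} where the lemma is invoked. As stated for arbitrary $F_{\rm L},F_{\rm s}\in\mathcal{F}_A$, the claim does not follow from the monotonicity facts alone, and your proposal does not close that gap.
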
 
\begin{proof}
Lemma 
\ref{lem:monotonicity_Power_function_before_sat} follows since the harvested power function in (\ref{eq:P_x_closedform})  for $A\!<\! A_{\rm sat}$ is  monotonically increasing for {\color{black}$0\!<\!x\!<\!A$}, $\forall$ EH Rxs, since the Bessel function $I_0(\cdot)$, the LambertW function $W_0(\cdot)$, and the quadratic function are all monotonically increasing for $x\!>\!0$.
 Similarly, the harvested power function in (\ref{eq:P_x_closedform}) is monotonically increasing in the channel gain $h_{{\rm E}_l}$. That is, if $h_{{\rm E}_1}\!>\!h_{{\rm E}_2}$, then $P_1(x)\!>\!P_2(x)$, $\forall$ $0<x<A$.  
Hence, if the integration of one $P_l(x)$ with respect to some distribution $F_{\rm L}(x)\!\in \!\mathcal{F}_A$ is larger than with respect to distribution $F_{\rm s}(x)\!\in \!\mathcal{F}_A$, i.e., $\int_0^A P_{l}(x)\dd F_{\rm L}(x)\!>\!\int_0^A P_{l}(x)\dd F_{\rm s}(x)$, then this relation must also hold for any other EH Rx $l'$, i.e., $\int_0^A P_{l'}(x)\dd F_{\rm L}(x)\!>\!\int_0^A P_{l'}(x)\dd F_{\rm s}(x)$, $\forall$ $l'\!\neq \!l\!\in\!\mathcal{L}$.
\end{proof}
\begin{theorem}\normalfont
In problem (\ref{eq:capacity_problem}), for $A< A_{\rm sat}$, at most one EH constraint is active. In particular, the active EH constraint is the one, which when all other EH constraints are removed, results in the smallest achievable rate at the ID Rx, denoted by $I(F_0)$.
\label{theo:Only_one_EH_Rx_effective}
\end{theorem}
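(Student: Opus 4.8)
The plan is to show that all EH constraints except possibly one are inactive at the optimal solution, by a contradiction argument that exploits the monotonicity established in Lemma \ref{lem:monotonicity_Power_function_before_sat}. First I would set up the problem by identifying, for each EH Rx $l\in\mathcal{L}$, the solution $F_0^{(l)}$ of problem (\ref{eq:capacity_problem}) restricted to only the single EH constraint ${\rm C}_l$ (together with the AP and PP constraints), and let $R_l\definedas I(F_0^{(l)})$ denote the corresponding achievable rate. Since larger harvested-power requirements force the input distribution further away from the capacity-achieving distribution for the pure WIT problem, a more demanding EH constraint yields a smaller rate; the claim is that the binding constraint is precisely the $l^\star$ attaining $\min_{l\in\mathcal{L}} R_l$.

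The key step is to argue that at the joint optimum $F_0$, the constraint ${\rm C}_{l^\star}$ alone already dominates all others. Concretely, I would suppose for contradiction that two constraints, say ${\rm C}_l$ and ${\rm C}_{l'}$, are simultaneously active at $F_0$, meaning $\E_{F_0}[P_l(\V{x})]=P_{l,\req}$ and $\E_{F_0}[P_{l'}(\V{x})]=P_{l',\req}$. By definition, $F_0^{(l)}$ is the distribution maximizing $I(F)$ subject only to ${\rm C}_l$; since $F_0$ is feasible for that relaxed problem, $I(F_0)\le R_l$, and similarly $I(F_0)\le R_{l'}$. Without loss of generality assume $R_l\le R_{l'}$, so $l'$ is the \emph{less} binding of the two. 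I would then show that $F_0^{(l)}$, the optimum for the single constraint ${\rm C}_l$, automatically satisfies ${\rm C}_{l'}$: by Lemma \ref{lem:monotonicity_Power_function_before_sat}, because $F_0^{(l)}$ delivers at least the required power to Rx $l$, and the power functions are jointly monotone, one can compare the ordering of the single-constraint solutions. The precise comparison uses that if removing constraint ${\rm C}_{l'}$ does not change the single-constraint optimum $F_0^{(l)}$ (i.e. $F_0^{(l)}$ is already feasible for ${\rm C}_{l'}$), then adding ${\rm C}_{l'}$ to the problem already solved for ${\rm C}_l$ leaves the optimum unchanged, so ${\rm C}_{l'}$ is redundant and cannot be active alongside ${\rm C}_l$ unless it induces the same solution.

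To make the monotone comparison rigorous I would invoke Lemma \ref{lem:monotonicity_Power_function_before_sat} in the following way. Order the EH Rxs so that the single-constraint solutions are nested: the distribution solving the most demanding single EH problem provides, by monotonicity of $P_l(x)$ in both $x$ and the channel gain $h_{{\rm E}_l}$, at least the required power for every less demanding Rx. Hence the feasible set of the full problem (\ref{eq:capacity_problem}) equals the feasible set of the single-constraint problem with $l=l^\star$, where $l^\star=\arg\min_{l}R_l$ corresponds to the tightest effective requirement. It follows that $C=R_{l^\star}=I(F_0^{(l^\star)})$, that ${\rm C}_{l^\star}$ is the unique (possibly) active EH constraint, and that all other ${\rm C}_l$ are slack. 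The identification of the active constraint as the rate-minimizing one then follows immediately, since any constraint that, taken alone, permits a rate strictly larger than $R_{l^\star}$ must be automatically satisfied by $F_0^{(l^\star)}$ and is therefore inactive.

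The main obstacle I anticipate is establishing the \emph{nesting} of feasible sets cleanly, i.e. proving that the single-constraint optimizer for the tightest Rx simultaneously meets all the looser EH requirements. Lemma \ref{lem:monotonicity_Power_function_before_sat} gives monotone comparison between two \emph{fixed} distributions, but here I must compare the \emph{optimizers} of different single-constraint problems, which requires showing that a smaller achievable rate $R_l$ corresponds to a larger harvested power under the respective optimal distribution. This calls for a monotonicity-of-the-value-function argument: as the harvested-power requirement $P_{l,\req}$ increases, the optimal rate decreases monotonically (by the Lagrangian characterization in Theorem \ref{theo:unique_distribution} and Corollary \ref{corol:C_necessary_sufficient_condition2}), and correspondingly the average harvested power at the optimum increases. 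Handling the edge case where several $R_l$ coincide, and carefully treating feasibility of the overall problem, is where the argument needs the most care.
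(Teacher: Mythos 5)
Your proposal rests on the same engine as the paper's proof, namely Lemma~\ref{lem:monotonicity_Power_function_before_sat}, which makes the $L$ EH constraints totally ordered (nested) over $\mathcal{F}_A$; the difference is organizational, and it is exactly at the point you flag as delicate that the paper's organization is cleaner. The paper argues by induction, adding one EH constraint at a time: if the current optimizer $F_1$ (with ${\rm C}_1$ active) already delivers $\E_{F_1}[P_2(\V{x})]\geq P_{2,\req}$, the new constraint is slack; otherwise it takes $F_2$, the optimizer for ${\rm C}_2$ alone, and applies Lemma~\ref{lem:monotonicity_Power_function_before_sat} to the pair $(F_2,F_1)$ using the already-known power inequality $\E_{F_2}[P_2(\V{x})]=P_{2,\req}>\E_{F_1}[P_2(\V{x})]$ to conclude $\E_{F_2}[P_1(\V{x})]>P_{1,\req}$, so that $F_2$ is feasible for the joint problem and ${\rm C}_1$ is slack at it. The paper thus never needs to infer a power ordering from the rate ordering; the rate statement falls out afterwards, since $F_2$ being feasible for the ${\rm C}_1$-only problem gives $R_2\leq R_1$. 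Your route goes the other way, from $\min_l R_l$ to the nesting of feasible sets, and the step you identify as the obstacle is a real hole as written: a value-function-in-the-threshold argument only orders optimizers along a \emph{single} Rx's R-E curve, whereas you must compare optimizers of problems with \emph{different} constraint functions $P_l$ and $P_{l'}$. The hole is closable without any value-function machinery: if $\E_{F_0^{(l)}}[P_{l'}(\V{x})]<P_{l',\req}=\E_{F_0^{(l')}}[P_{l'}(\V{x})]$, then Lemma~\ref{lem:monotonicity_Power_function_before_sat} gives $\E_{F_0^{(l')}}[P_{l}(\V{x})]>\E_{F_0^{(l)}}[P_{l}(\V{x})]\geq P_{l,\req}$, so $F_0^{(l')}$ is feasible for the ${\rm C}_l$-only problem and $R_{l'}\leq R_l$; combined with your assumption $R_l\leq R_{l'}$ and the uniqueness of the optimizer (Theorem~\ref{theo:unique_distribution}), this forces $F_0^{(l)}=F_0^{(l')}$ and contradicts the assumed strict inequality. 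With that patch (and the paper's standing assumption that distinct EH Rxs never yield identical single-constraint solutions, which disposes of ties), your argument goes through and reaches the same conclusion.
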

\begin{proof}
The proof is provided in Appendix \ref{App:proof_Only_one_EH_Rx_effective}.
\end{proof}
The R-E tradeoff curve associated with problem (\ref{eq:capacity_problem}) is an $(L\!+\!1)$-dimensional curve formed by the points $(I(F_0),\E_{F_0}[P_1(\V{x})],\ldots,\E_{F_0}[P_L(\V{x})])$ obtained by solving (\ref{eq:capacity_problem}) for all combinations of feasible minimum required DC powers $P_{l,\req}$, $l\in\mathcal{L}$,  at the EH Rxs. Owing to Theorem \ref{theo:Only_one_EH_Rx_effective}, for $A< A_{\rm sat}$, this $(L\!+\!1)$-dimensional R-E curve can be obtained from the $L$ two-dimensional R-E curves of the individual EH Rxs, where the individual  R-E curve of EH Rx $l$ is obtained by solving problem (\ref{eq:capacity_problem}) with the AP and PP constraints and only EH constraint ${\rm C}_l$ for different required DC powers $P_{l,\req}$. In particular, assuming $K$ required DC powers for each EH Rx, problem (\ref{eq:capacity_problem}) has to be solved only $KL$ times to determine the corresponding $(L\!+\!1)$-dimensional R-E curve instead of $K^L$ times.
{\color{black}
\begin{remark}
The results in this section hold only for $A\!<\! A_{\rm sat}$. If $A\!\geq\!A_{\rm sat}$, then some EH Rxs may operate in saturation. In particular, let $A\!\geq\!A_{{\rm T,sat}_l}$, $\forall$ $l\!\in\!\mathcal{L}_{\rm sat}$ and $A\!< \!A_{{\rm T,sat}_l}$, $\forall$ $l\!\in\!{\cal{L}_{\rm non-sat}}$, then according to Theorem \ref{theo:Only_one_EH_Rx_effective}, at most one EH constraint of the EH Rxs in set $\mathcal{L}_{\rm non-sat}$ may be active. However, in addition, also more than one EH constraint for the EH Rxs in set $\mathcal{L}_{\rm sat}$ may be active. This is because, when $A\!\geq\!A_{\rm sat}$ holds, Lemma \ref{lem:monotonicity_Power_function_before_sat} does not hold since $P_l(x)$ is not monotonically increasing in $0\!<\!x\!<\!A$ for $l\!\in\!\mathcal{L}_{\rm sat}$. For example, it will be shown in Section \ref{s:Maximum_Energy_Transfer} that the optimal input distribution that maximizes the average harvested power is different for the EH Rxs in set $\mathcal{L}_{\rm sat}$.   Hence, an input  distribution that provides more energy for one EH Rx in set $\mathcal{L}_{\rm sat}$ may provide less energy for another EH Rx in set $\mathcal{L}_{\rm sat}$.
\end{remark}}

Having established the properties of the optimal solution for problem (\ref{eq:capacity_problem}), we aim next at getting more insights into the optimal distribution by studying special cases and generalizations of problem (\ref{eq:capacity_problem}).

\section{Special Cases and Generalizations}
\label{s:special_cases_and_generalizations}
In this section, we study the special cases of maximum WIT  and maximum WPT systems to obtain further insight. Based on these extreme cases, we propose a suboptimal  but insightful distribution which bridges the gap between the two systems. Then, we generalize problem  (\ref{eq:capacity_problem}) to the complex domain.
 \subsection{Maximum Information Transfer} 
 \label{ss:Max_Info_transfer}
{\color{black}The maximum information transfer rate can be obtained by dropping all $L$ EH constraints in  (\ref{eq:capacity_problem}). In this case, the problem reduces to the capacity of an AP and PP constrained AWGN channel, which was solved by Smith in \cite{SMITH19712}, who proved that the optimal input distribution, denoted by $F_0^{{\rm WIT}}$, is discrete with a finite number of mass points. In addition, he showed that  $F_0^{{\rm WIT}}$ cannot be expressed in closed form but can be obtained numerically resulting in a maximum achievable information rate of  $C_{\max}\!\definedas\!I(F_0^{{\rm WIT}})$. With this solution, the average  harvested power at the $l^{\text{th}}$ EH Rx is $P_{l,\min}\!\definedas\!\E_{F_0^{{\rm WIT}}}[P_l(\V{x})]$. Hence, in problem (\ref{eq:capacity_problem}), if $P_{l,\req}\!\leq\! P_{l,\min}$, $\forall\, l\!\in\!\mathcal{L}$, holds, the harvested power $P_{l,\min}$  is attained at the EH Rxs without compromising the maximum  information rate at the ID Rx\footnote{\color{black}If however, $\exists\, l$ such that $P_{l,\req}> P_{l,\min}$, then the achievable information rate has to be compromised, i.e., $I(F_0)<C_{\max}$, in order for the $l^{\rm th}$ EH Rx to be able to harvest enough energy.}. Furthermore, if additionally the PP constraint is relaxed, i.e., $A\!\to\!\infty$, problem (\ref{eq:capacity_problem}) reduces to the maximization of the mutual information of the AP constrained AWGN channel. For this special case, the optimal input distribution is known to be the \emph{continuous} zero-mean Gaussian distribution given by $\frac{1}{\sqrt{2\pi\sigma^2}}\e^{-x^2/(2\sigma^2)}$ \cite{Shannon}, cf. Case 1 in Appendix \ref{App:proof_discrete_optimal_distribution} for $A\!\to\!\infty$, and the maximum achievable rate is the well-known Shannon capacity given by $\frac{1}{2}\log_2\left(1\!+\!\frac{\sigma^2h_I^2}{\sigma_n^2}\right)$.}
 \subsection{Maximum Energy Transfer} 
 \label{s:Maximum_Energy_Transfer}
 In this section, we formulate the maximum WPT problem for EH Rx $l$ in (\ref{eq:energy_problem}) and obtain the optimal input distribution  and the maximum  average harvested power in closed form in Theorem \ref{theo:EH_only}. In particular,
 \begin{IEEEeqnarray}{llll}
P_{l,\max}=&\sup\limits_{F\in\mathcal{F}_A}& &\E_F\left[P_l(\V{x})\right]\nonumber\\
&\,\,{\rm s.t.} &&{\rm C_0:}\quad \E_F[\V{x}^2] \leq \sigma^2.
\label{eq:energy_problem}
\end{IEEEeqnarray}
\begin{theorem}\normalfont
Define $A_l'\definedas \min(A,A_{{\rm T,sat}_l})$ with $A_{{\rm T,sat}_l}$ as defined in Section \ref{sss:Amplitude_Constraints_Tx_Rx}. Then, the optimal distribution obtained from problem  (\ref{eq:energy_problem}) has a probability mass function given by
\begin{equation} 
{\color{black}\dd F_0^{{\rm WPT}}(x,A_l')}=\begin{cases}p& x=-A_l'\\
\left[1-(p+q)\right]^+ &x=0\\
q &x=A_l'
\end{cases}.
\label{eq:PMF_max_EH}
\end{equation}
where $p,q\!\geq \!0$ and $p\!+\!q=\min\{\sigma^2/A_l'^2,1\}$. The maximum average harvested power at EH Rx $l$ is 
\begin{equation}
{\color{black}P_{l,\max}(A_l')}=\E_{F_0^{{\rm WPT}}}\left[P_l(\V{x})\right]=\begin{cases}\frac{\sigma^2}{A_l'^2}P_l(A_l') & \frac{\sigma^2}{A_l'^2}<1\\
P_l(A_l') & \frac{\sigma^2}{A_l'^2}>1
\end{cases},
\label{eq:max_EH_analytic}
\end{equation}
where $l\in\mathcal{L}$ and the average mutual information at the ID Rx is $I(F_0^{{\rm WPT}})=\int i(x;F_0^{{\rm WPT}})\dd F_0^{{\rm WPT}}(x)$.
\label{theo:EH_only}
\end{theorem}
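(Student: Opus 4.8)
The plan is to reduce problem (\ref{eq:energy_problem}) to a one-dimensional statement about the transmit \emph{power} $\V{x}^2$ and to exploit monotonicity of the harvesting efficiency $P_l(x)/x^2$. First I would argue that the optimal support can be restricted to $[-A_l',A_l']$ with $A_l'\definedas\min(A,A_{{\rm T,sat}_l})$: for $|x|>A_{{\rm T,sat}_l}$ the harvested power in (\ref{eq:P_x_closedform}) is saturated at $B_{\rm v}^2/(4R_{\rm L})$, so relocating any probability mass from such an $x$ to $A_{{\rm T,sat}_l}\,{\rm sign}(x)$ leaves $\E_F[P_l(\V{x})]$ unchanged while \emph{decreasing} $\E_F[\V{x}^2]$, preserving feasibility. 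On $[0,A_l']$ the diode is unsaturated and $P_l$ is smooth, with $P_l(0)=0$. The heart of the proof is the pointwise chord inequality
\begin{equation}
P_l(x)\leq \frac{P_l(A_l')}{A_l'^2}\,x^2,\qquad 0\leq x\leq A_l',
\label{eq:chord_bound}
\end{equation}
i.e.\ the efficiency $P_l(x)/x^2$ is maximized at the boundary $x=A_l'$. Granting (\ref{eq:chord_bound}), the remaining steps are short.

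The key step, which I expect to be the main obstacle, is establishing (\ref{eq:chord_bound}). Writing $\beta\definedas\sqrt{2}Bh_{{\rm E}_l}x$ and $\psi(\beta)\definedas W_0(a\e^a I_0(\beta))-a$, (\ref{eq:P_x_closedform}) gives $\sqrt{P_l(x)}=(I_{\rm s}\sqrt{R_{\rm L}}/a)\,\psi(\beta)$ on the unsaturated range, with $\psi(0)=0$ since $I_0(0)=1$ and $W_0(a\e^a)=a$. Using $I_0'=I_1$ and $W_0'(y)=W_0(y)/[y(1+W_0(y))]$, a short computation yields the clean form
\begin{equation}
\psi'(\beta)=\frac{w}{1+w}\,\frac{I_1(\beta)}{I_0(\beta)},\qquad w\definedas W_0(a\e^a I_0(\beta)).
\label{eq:psi_prime}
\end{equation}
Both factors on the right are positive and increasing in $\beta$: $w$ is increasing (a composition of the increasing maps $I_0$ and $W_0$), hence so is $w/(1+w)$, while $I_1(\beta)/I_0(\beta)$ is the classical strictly increasing ratio of modified Bessel functions on $(0,\infty)$. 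A product of positive increasing functions is increasing, so $\psi'$ is increasing, i.e.\ $\psi$ is convex; combined with $\psi(0)=0$ this makes $\psi(\beta)/\beta$ nondecreasing. Since $P_l(x)/x^2$ is proportional to $(\psi(\beta)/\beta)^2$ with a positive constant, it is nondecreasing on $(0,A_l']$ and therefore maximal at $A_l'$, which is exactly (\ref{eq:chord_bound}).

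Finally I would combine (\ref{eq:chord_bound}) with the constraints. Because $P_l$ is even and $\V{x}^2\leq A_l'^2$ on the restricted support, every feasible $F$ obeys $\E_F[P_l(\V{x})]\leq (P_l(A_l')/A_l'^2)\,\E_F[\V{x}^2]\leq (P_l(A_l')/A_l'^2)\min(\sigma^2,A_l'^2)$, which reproduces the two branches of (\ref{eq:max_EH_analytic}). The three-point law (\ref{eq:PMF_max_EH}) with $p+q=\min\{\sigma^2/A_l'^2,1\}$ attains this bound: placing all non-zero mass at $\pm A_l'$ gives $\E[P_l(\V{x})]=(p+q)P_l(A_l')$ and $\E[\V{x}^2]=(p+q)A_l'^2$, meeting ${\rm C}_0$ with equality when $\sigma^2<A_l'^2$ and with slack otherwise. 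The split of mass between $+A_l'$ and $-A_l'$ (the free parameters $p,q$) is immaterial because both $P_l(\cdot)$ and the squared-amplitude penalty are even, while the $[\cdot]^+$ on the mass at $x=0$ simply covers the regime $\sigma^2\geq A_l'^2$, where no mass at the origin is needed. The stated value of $I(F_0^{{\rm WPT}})$ then follows by directly evaluating the mutual information for this particular discrete input, with no further optimization required.
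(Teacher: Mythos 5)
Your proof is correct and reaches the same conclusions, but it takes a genuinely different route at the two places where the paper's argument is either structured differently or left incomplete. The paper argues by local mass redistribution: any interior mass point $x_0\in(0,A)$ is split between $0$ and $A$ in proportions that preserve the AP and PP constraints, and the resulting gain in harvested power is shown to be equivalent to the inequality $P_l(x_0)/x_0^2<P_l(A)/A^2$; the required monotonicity of $P_l(x)/x^2$ is then checked by differentiating $P_l$ directly and asserting, without a detailed derivation, that the resulting quantity $D(x)$ exceeds $-1$ for $x>0$. You instead prove the global chord bound $P_l(x)\le \big(P_l(A_l')/A_l'^2\big)x^2$ and integrate it against $F$, which delivers the upper bound and its achievability by the three-point law in one stroke, and you establish the monotonicity of $P_l(x)/x^2$ by passing to $\psi(\beta)\definedas W_0(a\e^aI_0(\beta))-a\propto\sqrt{P_l}$, computing $\psi'(\beta)=\frac{w}{1+w}\cdot\frac{I_1(\beta)}{I_0(\beta)}$, and noting that this is a product of positive increasing functions, so $\psi$ is convex with $\psi(0)=0$ and $\psi(\beta)/\beta$ is nondecreasing. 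This is a cleaner and fully verifiable justification of precisely the step the paper dispatches with ``it can be shown,'' and it is the main thing your argument buys. Your treatment of the saturated regime also differs: rather than comparing efficiencies $P_l(x_0)/x_0^2$ against $P_l(A_{{\rm T,sat}_l})/A_{{\rm T,sat}_l}^2$ for $x_0>A_{{\rm T,sat}_l}$ as the paper does, you relocate mass from $|x|>A_{{\rm T,sat}_l}$ to $\pm A_{{\rm T,sat}_l}$, leaving $\E_F[P_l(\V{x})]$ unchanged while strictly reducing $\E_F[\V{x}^2]$; both arguments are valid, and the remaining steps (evenness of $P_l$, the free split between $p$ and $q$, and the direct evaluation of $I(F_0^{{\rm WPT}})$) coincide with the paper's.
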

\begin{proof}
The proof is provided in Appendix \ref{App:proof_max_WPT}.
\end{proof}
\begin{remark}
{\color{black}Theorem \ref{theo:EH_only} indicates that for $A\!<\! A_{{\rm T,sat}_l}$, the larger the peak amplitude $A$ in problem (\ref{eq:energy_problem}) is, the higher the maximum average power harvested by EH Rx $l$ given in (\ref{eq:max_EH_analytic}),  since $P_l(x)$ and $P_l(x)/x^2$ increase monotonically for $0\!<\!x\!\leq A$ as shown in the proof of Lemma \ref{lem:monotonicity_Power_function_before_sat} and in Appendix \ref{App:proof_max_WPT}, respectively. However, increasing the peak amplitude $A$  beyond $A_{{\rm T,sat}_l}$ has no effect on the maximum average harvested power, which saturates to an asymptotic value of $P_{l,\max}(A_{{\rm T,sat}_l})$, $\forall A\!\geq\!A_{{\rm T,sat}_l}$, cf. (\ref{eq:max_EH_analytic}), and the optimal input distribution for maximum WPT  in (\ref{eq:PMF_max_EH}) saturates to the asymptotic on-off distribution $\dd F_0^{{\rm WPT}}(x,A_{{\rm T,sat}_l})$. This asymptotic behaviour is confirmed by the numerical results provided in Section \ref{s:numerical_results}, cf. Fig. \ref{fig:publication_figure_compare_with_Suboptimal_pout}.}
{\color{black} We also note that from the WPT perspective, the specific values of $p$ and $q$  in (\ref{eq:PMF_max_EH}) are irrelevant as long as they satisfy $p\!+\!q=\min\{\sigma^2/A_l'^2,1\}$. For WIT, the rate is maximized when $p\!=\!q\!=\!\min\{\sigma^2/(2A_l'^2),1/2\}$.}
\end{remark}

\begin{remark}
For the linear EH model, 
$P_l(x)/x^2$ is constant $\forall\, x$. From (\ref{eq:compare_dist_max_EH}) in Appendix \ref{App:proof_max_WPT},  any distribution satisfying the AP and PP constraints maximizes the harvested energy. Consequently, the optimal distribution for maximum WIT is also optimal for maximum WPT. Hence, for an AWGN channel with the linear EH model, a tradeoff between WIT and WPT does not exist. This result was stated in \cite[p. 5]{Varshney2008}.
\end{remark}

 \subsection{Proposed Suboptimal Distribution} 
 \label{ss:suboptimal_distribution}
 Motivated by studying the extreme cases of WIT and WPT, we propose a suboptimal distribution for problem (\ref{eq:capacity_problem}) with one EH Rx $l$. This distribution  superimposes a truncated Gaussian distribution with mass points at $-A_l'$ and $A_l'$, i.e.,
 \begin{equation}
f_{\rm s}(x)\!\!=\!\!\begin{cases}
\!b\e^{-dx^2}\!\!+\!\!c\big[\delta(x\!+\!A_l')\!+\!\delta(x\!-\!A_l')\big],
\quad\! &\!\!\!\!-\!A_l'\!\!\leq\! x\!\leq \!\!A_l'\\
\!0, &\mathrm{otherwise},
\end{cases}
\label{eq:sub_opt_dist}
\raisetag{2.5\normalbaselineskip}
\end{equation}
where $b$ and $c$ are obtained to satisfy the AP constraint and the unit area condition of the pdf $f_{\rm s}(x)$. In particular,  $b\!=\! \frac{1-2c}{\sqrt{\frac{\pi}{d}}\mathrm{erf}\left(\sqrt{d}A_l'\right)}$ and $c\!\leq\!\left(\sigma_x^2\!-\!\frac{1}{2d}\!+\!\frac{A_l'\exp(-dA_l'^2)}{\sqrt{\pi d}\mathrm{erf}(\sqrt{d}A_l')}\right)\!\!\Big/\!\!\left(2A_l'^2\!-\!\frac{1}{d}\!+\!\frac{2A_l'\exp(-dA_l'^2)}{\sqrt{\pi d}\mathrm{erf}(\sqrt{d}A_l')}\right)$, where ${\rm erf}(x)\!=\!\frac{1}{\sqrt{\pi}}\int_{-x}^x\exp(-t^2)\dd t$ is the error function and $d$ is a design parameter with which the harvested power increases. Since (\ref{eq:sub_opt_dist}) superimposes the optimal distributions for WIT and WPT, it is expected to provide a close-to-optimal R-E tradeoff performance. This is confirmed by numerical evaluations in Section \ref{ss:numerical_evaluations_main}.
 \subsection{Complex Signaling}
 \label{s:complex_signaling}
In this section, we extend problem (\ref{eq:capacity_problem}) to the complex domain. In particular, the transmit signal $x(t)\!=\!\sum_{k=-\infty}^{\infty}x[k]g(t\!-\!kT)$ is composed of complex-valued symbols $x[k]\! \definedas \!r[k]\e^{j\theta[k]}$, where $r[k]$ and $\theta[k]$ are the amplitude and phase of the transmit symbol $x[k]$, respectively. The channel fading gains for  the ID and EH Rxs are also complex-valued given by $h_{\rm I}\!=\!|h_{\rm I}|\e^{j\phi_{\rm I}}$ and $h_{{\rm E}_l}\!=\!|h_{{\rm E}_l}|\e^{j\phi_{{\rm E}_l}}$, respectively. 
Hence, assuming a rectangular pulse, the received signal at the EH Rx is given by $y_{{\rm E}_l}(t)\!=\!r[k]|h_{{\rm E}_l}|\e^{j(\theta[k]+\phi_{{\rm E}_l})}$ and  $y_{{\rm E}_l}^{\rm RF}(t)\!=\!\sqrt{2} r[k] |h_{{\rm E}_l}|\cos(2\pi f_{\rm c} t+\theta[k]+\phi_{{\rm E}_l})$, $kT\!-\!T/2\!<\!t\!\leq\!kT\!+\!T/2$, in the equivalent complex baseband and RF domains, respectively. Let $\V{r}$ and $\V{\theta}$ be the random variables, whose  realizations in time slot $k$ are $r[k]$ and $\theta[k]$, respectively, i.e., $\V{x}\!=\!\V{r}\e^{j\V{\theta}}$. {\color{black}Hence, from Section \ref{ss:rectenna_model}, the integral involved in the forward current in the first term on the LHS of (\ref{eq:1st_Ritz_condition}) for EH Rx $l$ reduces to  $I_0(\beta)=\frac{1}{T}\int_{T} \e^{\frac{v_{\rm a}(t)}{\eta V_{\rm T}}}\dd t=\frac{1}{T}\!\int_{T}\e^{B y_{{\rm E}_l}^{\rm RF}(t)}\dd t\!=\!\frac{1}{T}\!\int_{T}\e^{\sqrt{2}B |h_{{\rm E}_l}| \V{r} \cos(2\pi f_{\rm c} t+\V{\theta}+\phi_{{\rm E}_l})} \dd t \!=\!I_0\left(\sqrt{2}B |h_{{\rm E}_l}| \V{r} \right)$.}
This indicates that the power harvested at the EH Rx  does not depend on the phase of the received signal. Hence, similar to (\ref{eq:P_x_closedform}), with complex signaling, the harvested power at the $l^{\rm th}$ EH Rx can be approximated by {\color{black}$P_l(r)\definedas P_{\rm out}\left(\sqrt{2}B |h_{{\rm E}_l}| r\right)\Big|_{\rm approx.}=\min\left(\left[\frac{1}{a}W_0\left(a\e^aI_0\left(\sqrt{2}B |h_{{\rm E}_l}| r\right)\right)-1\right]^2 I_{\rm s}^2R_{\rm L},\frac{{B_{\rm v}}^2}{4R_{\rm L}}\right),$} and the AP constraint in (\ref{eq:capacity_problem}) can be written as $ \E[\V{r}^2]\!\leq\! \sigma^2$. 
At the ID Rx, the baseband transmission model $\V{y}\!=\!\V{x}h_{\rm I}\!+\!\V{n}$ can be written in polar coordinates as $\V{R}\e^{j\V{\psi}}\!=\!|h_I|\V{r}\e^{j(\V{\theta}+\phi_{\rm I})}\!+\!\V{n}$, where $\V{R}$ and $\V{\psi}$ are random variables representing   the amplitude and phase of the received signal $\V{y}$ and $\V{n}\sim\mathcal{CN}(0,2\sigma_n^2)$.
\begin{lemma}\normalfont
The optimal distribution of transmit signal $\V{x}=\V{r}\e^{j\V{\theta}}$ for problem (\ref{eq:capacity_problem}) in the complex domain is characterized by mutually  \emph{independent} amplitude $\V{r}$ and phase $\V{\theta}$, and a  uniformly distributed phase $\V{\theta}$.
\label{lemma:indep_opt_amplitude_phase_UD}
\end{lemma}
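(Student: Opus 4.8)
The plan is to prove the two claims of Lemma \ref{lemma:indep_opt_amplitude_phase_UD} --- independence of $\V{r}$ and $\V{\theta}$, and uniformity of $\V{\theta}$ --- by exploiting a symmetry/invariance argument for the mutual information together with the fact established earlier that the EH constraint depends only on the amplitude $\V{r}$. The key observation is that the channel $\V{y}=\V{x}h_{\rm I}+\V{n}$ with $\V{n}\sim\mathcal{CN}(0,2\sigma_n^2)$ is rotationally invariant: since the noise is circularly symmetric complex Gaussian, multiplying the input $\V{x}$ by any fixed unit-modulus phasor $\e^{j\alpha}$ (equivalently, adding $\alpha$ to $\V{\theta}$) leaves the mutual information $I(\V{x};\V{y})$ unchanged, and also leaves both the AP constraint $\E[\V{r}^2]\le\sigma^2$ and the EH constraint $\E[P_l(\V{r})]\ge P_{l,\req}$ unchanged, because $P_l$ depends only on $\V{r}$ (as shown just above via $I_0(\sqrt{2}B|h_{{\rm E}_l}|\V{r})$) and the PP constraint $|\V{x}|\le A$ is likewise a constraint on $\V{r}$ alone.

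First I would formalize this rotational invariance. Given any feasible input distribution for $\V{x}$, I construct a \emph{phase-randomized} distribution by letting $\V{\theta}'$ be independent of $\V{r}$ and uniform on $[-\pi,\pi]$ while keeping the amplitude distribution of $\V{r}$ unchanged. Concretely, one can view the phase-randomized law as the average (mixture) over $\alpha$ of the rotated inputs $\V{r}\e^{j(\V{\theta}+\alpha)}$ with $\alpha$ uniform. Because each rotation preserves $I$ and all constraints, and because mutual information $I(F)$ is concave in the input distribution $F$ (established in the uniqueness argument, Theorem \ref{theo:unique_distribution}), Jensen's inequality applied to the mixture over rotations shows that the phase-randomized distribution achieves mutual information at least as large as the original one while still satisfying every constraint. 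Hence there is no loss of optimality in restricting attention to inputs with $\V{r}$ and $\V{\theta}$ independent and $\V{\theta}$ uniform.

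Next I would upgrade this from ``at least as good'' to ``necessarily optimal,'' using uniqueness. Theorem \ref{theo:unique_distribution} guarantees that the optimal input distribution $F_0$ is unique. Since the phase-randomization operation maps any optimizer to a feasible distribution that is also optimal, and the image is invariant under phase rotation and has independent uniform phase, uniqueness forces $F_0$ itself to coincide with its phase-randomized version. Therefore the optimal $\V{x}$ must have independent amplitude and phase with $\V{\theta}$ uniformly distributed, which is exactly the claim. An equivalent, slightly more hands-on route is to work in polar coordinates $\V{R}\e^{j\V{\psi}}=|h_I|\V{r}\e^{j(\V{\theta}+\phi_{\rm I})}+\V{n}$ and verify directly that, under a uniform input phase, the output phase $\V{\psi}$ is uniform and independent of $\V{R}$, so that the phase carries no mutual information beyond what the amplitude already conveys; this matches the known result for the AP/PP-constrained complex AWGN channel in \cite{QAGC_Shamai_Bar_David_1995}.

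The main obstacle I anticipate is making the rotational-invariance-plus-concavity argument fully rigorous, rather than merely intuitive. Two technical points deserve care. First, one must confirm that the mixture (phase-averaged) distribution is genuinely admissible in $\mathcal{F}_A$ and satisfies the linear constraints with equality or inequality as required; this is immediate because both the AP and EH functionals are linear in $F$ and invariant under phase rotation, so their values are unchanged by averaging. Second, the concavity of $I(F)$ in $F$ must be invoked correctly for the continuum of rotations: since $I$ is invariant under each rotation, all rotated inputs yield the \emph{same} value of $I$, so Jensen gives that the mixture yields $I$ at least this common value, with the subtlety that strict concavity then pins down the optimizer via uniqueness. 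I expect the phase-uniformity portion to be the cleaner half; the independence of $\V{r}$ and $\V{\theta}$ is what genuinely requires the invariance argument, since a priori an optimizer could correlate amplitude and phase, and it is the circular symmetry of the Gaussian noise together with the amplitude-only nature of all constraints that rules this out.
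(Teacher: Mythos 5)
Your proof is correct, but it takes a genuinely different route from the paper's. The paper proves the lemma by an explicit entropy decomposition in polar coordinates: it writes $I(\V{y};\V{x})=H(\V{R},\V{\psi})+\int f_{\V{R}}(R)\log_2(R)\,\dd R-\log_2(2\pi\e\sigma_n^2)$, upper-bounds $H(\V{R},\V{\psi})\leq H(\V{R})+\log_2(2\pi)$, computes the conditional density $f_{\V{R},\V{\psi}|\V{r},\V{\theta}}$ and the marginal $f_{\V{R}}(R;F_{\V{r}})$ to show that the output amplitude law does not depend on the input phase law, and then verifies that an independent uniform input phase attains the bound (yielding independent $\V{R}$ and uniform $\V{\psi}$). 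You instead use a symmetrization argument: rotational invariance of the circularly symmetric Gaussian channel plus the fact that all constraints are functionals of $\V{r}$ alone, combined with concavity of $I(F)$ in $F$ via Jensen over the mixture of rotations, and then uniqueness to pin down the optimizer. Your route is more abstract and avoids the polar-coordinate computations, and it delivers the stronger ``every optimizer has this form'' conclusion; the paper's route has the advantage that the explicit kernel $K(r,R)$ and the amplitude-only mutual information $I(F_{\V{r}})$ it produces are reused directly in Theorem~\ref{theo:C_necessary_sufficient_condition2_complex} and its proof. One point to tighten: you invoke Theorem~\ref{theo:unique_distribution} for uniqueness, but that theorem is stated for the real-valued problem (\ref{eq:capacity_problem}); for your final step you need uniqueness of the optimal \emph{joint} law of $(\V{r},\V{\theta})$ in the complex setting, which does follow by the same argument (strict concavity of the output entropy together with injectivity of the Gaussian convolution) but should be stated explicitly. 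Alternatively, the existence form of the conclusion --- that phase randomization never hurts, so one may restrict to independent uniform phase --- already suffices for the reduction to problem (\ref{eq:capacity_problem_Complex}) and requires no uniqueness at all.
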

\begin{proof}
The proof is provided in Appendix \ref{App:proof_indep_opt_amplitude_phase_UD}.
\end{proof}
Hence, with complex signalling, the conditional capacity problem in  (\ref{eq:capacity_problem}) reduces to finding the optimal distribution $F_{\V{r}}$ of the amplitude of the transmit signal based on the following optimization problem
\begin{IEEEeqnarray}{llll}
\!\!\!\!\!C\!=&\sup\limits_{F_{\V{r}}\in\mathcal{F}_A}&&I(F_{\V{r}}) \nonumber\\
&\,\,{\rm s.t.} &&\!\!{\rm C}_0:\E_{F_{\V{r}}}[\V{r}^2] \leq \sigma^2; \notag\\ 
& &&\!\!{\rm C}_l:\, \E_{F_{\V{r}}}\!\left[P_l(\V{r})\right] \geq P_{l,\req}, \,\, \forall\, l\in\mathcal{L}.
\label{eq:capacity_problem_Complex}
\end{IEEEeqnarray}
Next, we investigate the properties of the optimal input amplitude distribution  in the following theorem.
\begin{theorem}\normalfont
The optimal input amplitude distribution of problem (\ref{eq:capacity_problem_Complex}) is unique and discrete with finite number of mass points. Furthermore, if $E_0$ is the set of points of increase of a distribution function $F_{\V{r_0}}$ on $[0,A]$, then $F_{\V{r_0}}$ is the optimal input distribution if and only if there exist $\lambda_l\!\geq\!0$, $\forall\,l\!\in\!\{0\}\!\cup\!\mathcal{L}$, such that
\begin{equation}
\begin{aligned}
&s(r)\!\definedas\!\lambda_0\!\left(r^2\!-\!\sigma^2\right)\!-\!\sum\limits_{l\in\mathcal{L}}\! \!\lambda_l\!\left(P_l(r)\!\!-\!\!P_{l,\req}\right)\!+\!C+\!\log_2(\e\sigma_n^2)
\\
&+\! \!\int\!\!\! \frac{R}{\sigma_n^2}\e^{-\frac{R^2\!+\!r^2|h_{\rm I}|^2}{2\sigma_n^2}}\!I_0\!\left(\!\frac{Rr|h_{\rm I}|}{\sigma_n^2}\!\right)\log_2\!\!\left(\!\frac{f_{\V{R}}(\!R;F_{\V{r_0}}\!)}{R}\!\right)\! \dd R\!\geq\! 0,
\end{aligned}
\label{eq:C_necessary_sufficient_condition2_complex}
\end{equation} 
$\forall r\in[0,A]$, where equality holds if $r$ is a point of increase of $F_{\V{r_0}}$, i.e., if $r\in E_0$.
\label{theo:C_necessary_sufficient_condition2_complex}
\end{theorem}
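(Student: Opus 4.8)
The plan is to transcribe the real-signal development of Theorems~\ref{theo:unique_distribution}, \ref{theo:C_necessary_sufficient_condition1}, and \ref{theo:discrete_optimal_distribution} together with Corollary~\ref{corol:C_necessary_sufficient_condition2} to the amplitude channel, replacing the Gaussian transition kernel by the Rician one. By Lemma~\ref{lemma:indep_opt_amplitude_phase_UD} it suffices to optimize over amplitude distributions $F_{\V{r}}$ supported on the compact interval $[0,A]$, with the phase fixed to be uniform and independent. First I would note that the feasible set of such $F_{\V{r}}$ is convex and weak-$*$ compact, that $I(F_{\V{r}})$ is weak-$*$ continuous and strictly concave in the induced output density $f_{\V{R}}$, and that the constraint functionals $\E_{F_{\V{r}}}[\V{r}^2]$ and $\E_{F_{\V{r}}}[P_l(\V{r})]$ are weak-$*$ continuous and linear in $F_{\V{r}}$. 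Uniqueness then follows exactly as in the proof of Theorem~\ref{theo:unique_distribution}: strict concavity of mutual information in the output density, combined with injectivity of the map $F_{\V{r}}\mapsto f_{\V{R}}$, forces a unique maximizer, and Lagrangian duality supplies multipliers $\lambda_l\ge 0$, $l\in\{0\}\cup\mathcal{L}$, together with complementary slackness.

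The key new computation is obtaining the explicit optimality condition (\ref{eq:C_necessary_sufficient_condition2_complex}). With uniform independent phase the conditional density of the complex output $\V{y}=\V{x}h_{\rm I}+\V{n}$ is circularly structured, so writing $\V{y}=\V{R}\e^{j\V{\psi}}$ and integrating out $\V{\psi}$ yields the Rician amplitude kernel $p(R|r)=\frac{R}{\sigma_n^2}\e^{-(R^2+r^2|h_{\rm I}|^2)/(2\sigma_n^2)}I_0(Rr|h_{\rm I}|/\sigma_n^2)$, while circular symmetry of the output gives $p(y;F_{\V{r_0}})=f_{\V{R}}(R;F_{\V{r_0}})/(2\pi R)$. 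I would then compute the marginal information density $i(x;F_{\V{r_0}})$ directly from the full two-dimensional output: the conditional differential entropy of the $2$D Gaussian equals $\log_2(2\pi\e\sigma_n^2)$, and the $\log_2(2\pi)$ generated by the polar Jacobian cancels it down to the clean constant $-\log_2(\e\sigma_n^2)$, leaving the residual term $-\int p(R|r)\log_2\bigl(f_{\V{R}}(R;F_{\V{r_0}})/R\bigr)\,\dd R$. Substituting this $i(x;F_{\V{r_0}})$ into the generic KKT inequality of Corollary~\ref{corol:C_necessary_sufficient_condition2} (with $x^2$ replaced by $r^2$ and the constraints of (\ref{eq:capacity_problem_Complex})) reproduces $s(r)\ge 0$ verbatim, with equality on the points of increase. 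The one subtlety worth flagging here is the bookkeeping by which the $2\pi$ of the phase Jacobian exactly absorbs part of the conditional entropy; this is precisely why the constant appears as $\log_2(\e\sigma_n^2)$ and the integrand carries the extra $1/R$ factor relative to the real-signal condition (\ref{eq:C_necessary_sufficient_condition2}).

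For discreteness and finiteness I would follow the analyticity argument underlying Theorem~\ref{theo:discrete_optimal_distribution} and \cite{SMITH19712,QAGC_Shamai_Bar_David_1995,Hermite_bases_Abou_Faycal_2012}. The idea is to extend $s(r)$ to an analytic function of $r$ in an open complex neighbourhood of $[0,A]$, which is legitimate because $I_0$ and the kernel $p(R|r)$ are entire in $r$ and, on the bounded support $[0,A]$, the integrand can be dominated uniformly so that integration and differentiation may be interchanged. Assuming for contradiction that the set $E_0$ of points of increase has an accumulation point in $[0,A]$, the equality part of the optimality condition forces $s(r)=0$ on a set possessing an accumulation point; by the identity theorem $s\equiv 0$ on $[0,A]$, which in turn pins down $f_{\V{R}}(R;F_{\V{r_0}})$ via an integral relation that is incompatible with $f_{\V{R}}$ being a legitimate output density of an input supported on the bounded set $[0,A]$ (a growth/normalizability contradiction, as in the Smith and Shamai--Bar-David proofs). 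Hence $E_0$ is finite and $F_{\V{r_0}}$ is discrete with finitely many mass points.

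I expect this last step to be the main obstacle. Unlike the real Gaussian case, the transition kernel now carries the Bessel factor $I_0$ and the output amplitude lives on the half-line, so the verification of analyticity, the uniform domination needed to differentiate under the integral sign, and the concluding contradiction must all be re-established for the Rician kernel rather than inherited verbatim from the Gaussian one. By contrast, uniqueness, the Lagrangian/KKT reduction, and the algebraic simplification to (\ref{eq:C_necessary_sufficient_condition2_complex}) are essentially direct transcriptions of the already-established real-signal results, with the Rician kernel substituted for the Gaussian kernel.
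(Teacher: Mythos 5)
Your uniqueness and KKT arguments match the paper's Appendix~G essentially step for step: the paper also reduces to amplitude distributions via Lemma~\ref{lemma:indep_opt_amplitude_phase_UD}, establishes strict concavity of $I(F_{\V{r}})$ through the entropy of the output (after the change of variables $\nu=R^2/2$), and obtains (\ref{eq:C_necessary_sufficient_condition2_complex}) by transcribing Theorem~\ref{theo:C_necessary_sufficient_condition1} and Corollary~\ref{corol:C_necessary_sufficient_condition2} with $\V{x}\in[-A,A]$ replaced by $\V{r}\in[0,A]$ and the Gaussian kernel replaced by the Rician kernel $K(r,R)$ of (\ref{eq:f_R_final}). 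Your bookkeeping of the constants (the $\log_2(2\pi)$ from the uniform phase cancelling against part of $\log_2(2\pi\e\sigma_n^2)$, leaving $\log_2(\e\sigma_n^2)$ and the $1/R$ inside the logarithm) is exactly the computation in (\ref{eq:I_y_x_diff_entropies})--(\ref{eq:sup_I_F}).

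The genuine gap is in the discreteness step, and you correctly flag it yourself but do not close it. After invoking the identity theorem to get $s(z)=0$ on the analyticity domain, you assert that this ``pins down $f_{\V{R}}$ via an integral relation that is incompatible with a legitimate output density,'' appealing to a growth/normalizability contradiction ``as in Smith and Shamai--Bar-David.'' That is precisely the part that cannot be inherited: the real-signaling proof (Appendix~\ref{App:proof_discrete_optimal_distribution}) inverts the Gaussian convolution by expanding $\log_2 p(y;F_0)$ in Hermite polynomials and exploiting their orthogonality against $\e^{-y^2/2}$, but the Rician/noncentral-chi-square kernel $Q(\nu,z|h_{\rm I}|)=\e^{-\nu-z^2|h_{\rm I}|^2/2}I_0(\sqrt{2\nu}\,z|h_{\rm I}|)$ is \emph{not} orthogonal to the Hermite polynomials, so a different inversion is required. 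The paper supplies it: invertibility of the transform is cited from \cite[Appendix II]{QAGC_Shamai_Bar_David_1995}, the monomials map as $\int_0^\infty Q(\nu,z|h_{\rm I}|)\,\nu^n\,\dd\nu=n!\,L_n(-z^2|h_{\rm I}|^2/2)$ with $L_n$ the Laguerre polynomials, the right-hand side of $s(z)=0$ is written as a power series in $z^2$ using the LambertW/Bessel expansion of $P_l(z)$ from (\ref{eq:Harvested_power_series}), and matching coefficients yields an upper-triangular (hence invertible) linear system for the $c_n$ in $\log_2 f_{\V{\nu}}(\nu;F_{\V{r_0}})=\sum_n c_n\nu^n$; the resulting $f_{\V{\nu}}=\e^{\ln(2)\sum_n c_n\nu^n}$ is not a legitimate pdf, which delivers the contradiction. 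Without this (or an equivalent) inversion argument, the finiteness claim in the theorem is not proved; the rest of your proposal stands.
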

\begin{proof}
The proof is provided in Appendix \ref{App:proof_C_necessary_sufficient_condition2_complex}.
\end{proof}



\section{Numerical Results}
\label{s:numerical_results}
 \begin{table*}[!tp]
\caption{Numerical parameters.}
\begin{tabular}{@{}ll@{}}  \toprule   
Parameter & Value \\ \midrule 
Carrier frequency  & $f_{\rm c}=2.45\,$GHz\\ 
Path loss exponent & $\alpha=2.5$ \\ 
Noise power at the EH Rxs &  $\sigma_n^2\!=\!-80\,$dBm in Figs. \ref{fig:RE_region_diff_EH_diff_PP_diff_AP}-\ref{fig:publication_figure_discrete_dist_real_signaling} and  $\sigma_n^2\!=\!-50\,$dBm per signal dimension 
in Figs. \ref{fig:publication_figure_complex_problem_multiple_users_pout}, \ref{fig:publication_figure_complex_Three_users_one_effective}.\\ 
Distance between transmitter and ID Rx & $d_{\rm I}\!=\!25\,$m\\ \addlinespace[0.5em]
Distance between transmitter and EH Rxs & \pbox{20cm}{In Figs. \ref{fig:RE_region_diff_EH_diff_PP_diff_AP}-\ref{fig:publication_figure_discrete_dist_real_signaling}, one EH Rx at $d_{{\rm E}_1}\!=\!5\,$m. \\
In Figs. \ref{fig:publication_figure_complex_problem_multiple_users_pout}, \ref{fig:publication_figure_complex_Three_users_one_effective}, three EH Rxs at distances $d_{{\rm E}_1}\!=\!3\,$m, $d_{{\rm E}_2}\!=\!3.5\,$m, and $d_{{\rm E}_3}\!=\!4\,$m}\\ \addlinespace[0.5em]
Circuit parameters, cf. Fig. \ref{fig:rectenna_model} &  $R_{\rm ant}\!=\!50\,\Omega$, $R_{\rm L}\!=\!10\,$k$\Omega$, $C_{\rm L}\!=\!1\,$nF\\
SMS7630 Schottky diode parameters \cite{Skyworks_SMS7630}& $I_{\rm s}\!=\!5\,\mu$A, $R_{\rm s}\!=\!20\Omega$, $\eta\!=\!1.05$, $C_{\rm j_0}\!=\!0.14\,$pF, {\color{black}$I_{B_{\rm v}}\!=\!100\,\mu$A, and ${B_{\rm v}}\!=\!2\,$V.}\\
\bottomrule
\end{tabular}
\label{tab:numerical_parameters}
\end{table*}

In this section, we first validate the accuracy of the derived harvested DC power functions in (\ref{eq:Pout_Sat_exact}) and (\ref{eq:Pout_Sat_approx}) via circuit simulations. 
Afterwards, we evaluate the solutions for problems (\ref{eq:capacity_problem}) and (\ref{eq:capacity_problem_Complex}) for  real and complex AWGN channels, respectively, under AP, PP, and EH constraints. 
The channel gains are given by $|h_k|^2\!=\!\left(v/(4\pi d_k f_{\rm c})\right)^\alpha$ for  $k\!\in\!\{{\rm I},{\rm E}_l\}$, where $v$ is the speed of light, $\alpha$ is the path loss exponent, $d_{\rm I}$ and $d_{{\rm E}_l}$ are the distances between the transmitter and the ID  and the $l^{\text{th}}$ EH Rx, respectively. Table \ref{tab:numerical_parameters} summarizes the parameters adopted in the numerical results. 
\begin{figure*}[!t]
\centering
\includegraphics[width=0.8\textwidth]{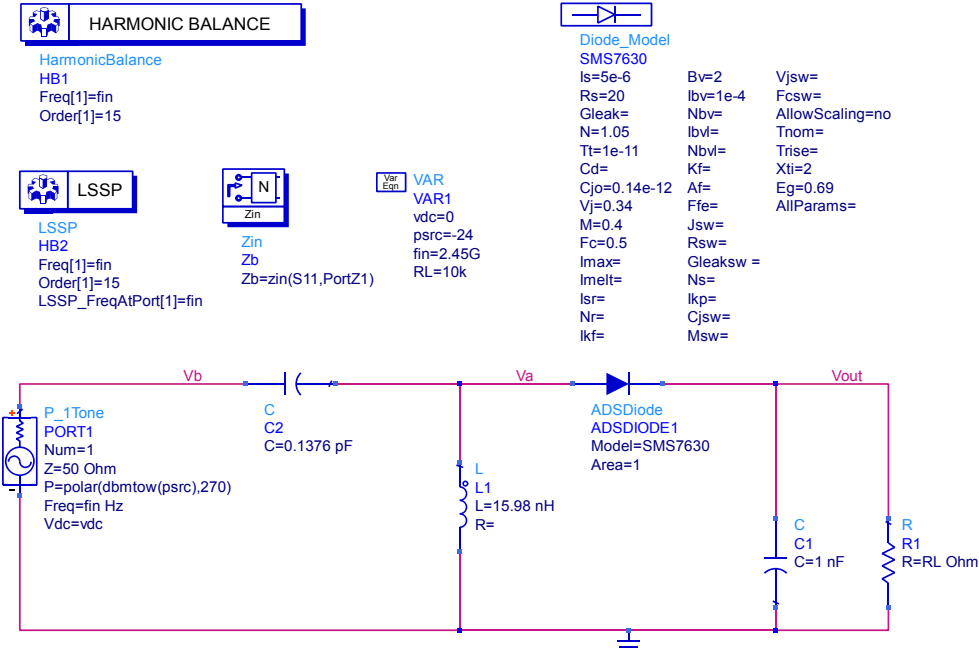}\vspace{-0.3cm}
\caption{\color{black}ADS schematic of the rectenna circuit model in Fig. \ref{fig:rectenna_model}.}
\label{fig:rectenna_ct_ADS}
\end{figure*}

\subsection{ADS Circuit Simulation and Validation of the Harvested DC Power Function $P(x)$ in  Table \ref{tab:Sat_model_summary}}
\label{ss:ADS_circuit_simulations}

In this section, we validate the harvested DC power function given in  Table \ref{tab:Sat_model_summary} through circuit simulations on ADS \cite{ADS}, as shown in  Fig. \ref{fig:rectenna_ct_ADS}. In particular, we use the SMS7630 Schottky diode, since it operates at very low input RF powers\footnote{\color{black}In the  SMS7630 Schottky diode's data sheet \cite{Skyworks_SMS7630}, the diode detector circuit shown in  \cite[Fig. 2]{Skyworks_SMS7630}, which is similar to the one considered in this paper, is functional for RF input powers as low as $-40\,$dBm as is evident from the measured output voltage shown in  \cite[Fig. 7]{Skyworks_SMS7630}.}  and does not need external bias \cite{Skyworks_SMS7630}. An LC matching network is fine-tuned for every input power to provide perfect matching (reflection coefficient {\color{black} $\!<\!-50\,$dB)}. For example, at $P_{\rm in}\!=\!-24\,$dBm, the matching network elements are {\color{black}$L\!=\!15.98\,$nH} and {\color{black}$C\!=\!0.1376\,$pF}, cf. Fig. \ref{fig:rectenna_ct_ADS}. The remaining circuit parameters are as in Table \ref{tab:numerical_parameters}.   
Fig. \ref{fig:Harvested_DC_power_ct_theory} shows a very good match between the harvested DC power obtained from the circuit simulations and the analytical expressions from  Table \ref{tab:Sat_model_summary}.

\begin{figure}[!t]
\centering
\includegraphics[width=0.48\textwidth]{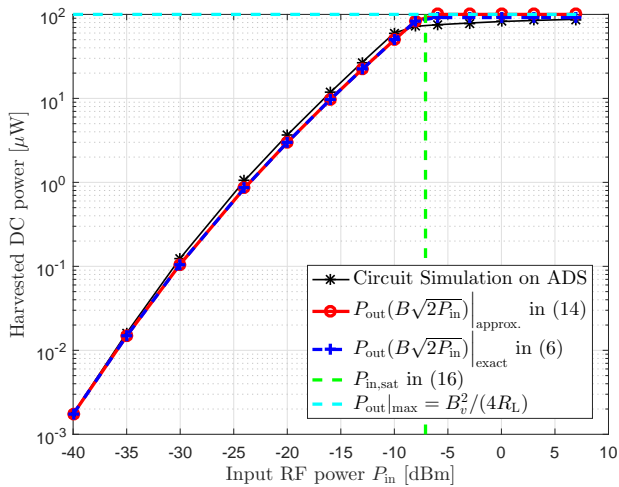}
\caption{\color{black}Harvested DC power vs. input RF power: ADS circuit simulation and analytical results from Table \ref{tab:Sat_model_summary}.\\[-1ex]}
\label{fig:Harvested_DC_power_ct_theory}
\end{figure}

\subsection{Numerical Evaluation of the Conditional Capacity Problems (\ref{eq:capacity_problem}) and (\ref{eq:capacity_problem_Complex})} 
\label{ss:numerical_evaluations_main}
Although we showed in Sections \ref{sss:discrete_dist} and \ref{s:complex_signaling} that the optimal input distribution is discrete with a finite number of  mass points, the number and positions of the mass points are not known. However, since problems (\ref{eq:capacity_problem}) and (\ref{eq:capacity_problem_Complex}) are convex $\forall \, F\!\in\! \mathcal{F}_A$, they can be solved numerically using CVX \cite{CVX} by discretizing the interval $x\!=\![-A,A]$ with a sufficiently small step size, i.e.,  $\Delta x\!\to\! 0$ and the interval $r\!=\![0,A]$ with $\Delta r\!\to\! 0$, to obtain the symbol set. {\color{black} Then, for this symbol set, the harvested power functions $P_l(x)$ in (\ref{eq:P_x_closedform}) and  $P_l(r)$ in Section \ref{s:complex_signaling} are calculated and used in the EH constraints in CVX.} The optimality of the numerically obtained input distribution can be checked by verifying the necessary and sufficient conditions in Corollary \ref{corol:C_necessary_sufficient_condition2} for real signaling and  in Theorem \ref{theo:C_necessary_sufficient_condition2_complex} for complex signaling.

\begin{figure}[!t]
\centering
\includegraphics[width=0.48\textwidth]{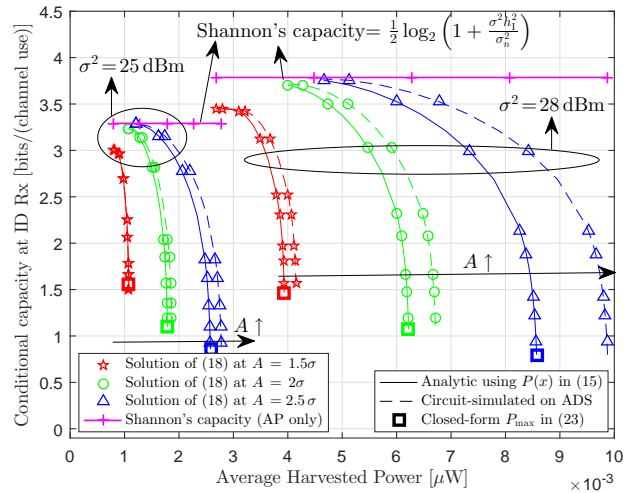}
\caption{R-E regions for different AP and PP  constraints.}
\label{fig:RE_region_diff_EH_diff_PP_diff_AP}
\end{figure}
{\color{black}In Figs. \ref{fig:RE_region_diff_EH_diff_PP_diff_AP}-\ref{fig:publication_figure_discrete_dist_real_signaling}, we consider a real-valued AWGN channel and a SWIPT system with one ID Rx and only one EH Rx located at $d_{{\rm E}_1}\!=\!5\,$m,  i.e., with $A_{{\rm T,sat}_1}\!=\!\sqrt{P_{\rm in,sat}}/|h_{{\rm E}_1}|\!=\!33.28\,$V$=23.56\,\sigma\,$.} {\color{black}Except for Fig. \ref{fig:publication_figure_compare_with_Suboptimal_pout}, all peak amplitudes $A$ in Figs. \ref{fig:RE_region_diff_EH_diff_PP_diff_AP}-\ref{fig:publication_figure_discrete_dist_real_signaling} are less than $A_{{\rm T,sat}_1}$ to avoid driving the rectifier into saturation.}

\begin{figure}[!t]
\includegraphics[width=0.48\textwidth]{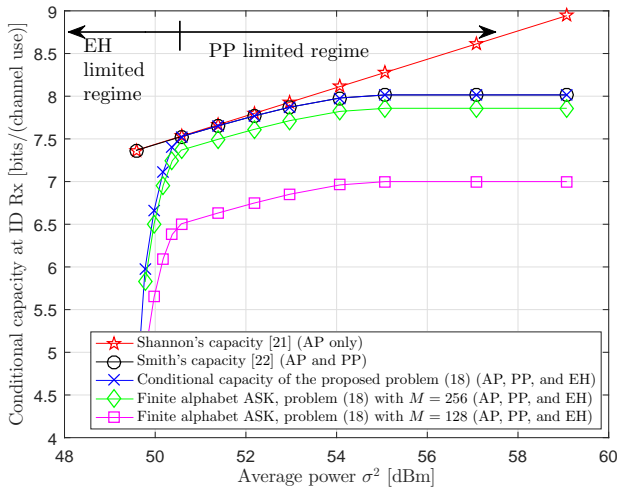}
\caption{Conditional capacity of problem (\ref{eq:capacity_problem}) and achievable rate  for different finite alphabets with $A\!=\!30.75\,$V and $P_{\req}=7.5\,\mu$W.}
\label{fig:Different_AP_optimal_finite_Shannon_Smith}
\end{figure}
 In Fig. \ref{fig:RE_region_diff_EH_diff_PP_diff_AP}, we plot the R-E curves of the considered system for different AP and PP constraints. In particular, we obtain each R-E curve by solving problem (\ref{eq:capacity_problem}) for different required harvested powers $P_{\rm req}$. The optimal input distribution $F_0(x)$ is then used to obtain the conditional capacity at the ID Rx, $I(F_0)$, and the average harvested power  at the EH Rx, $\E_{F_0}[P_1(\V{x})]\!=\!\int_{-A}^{A}P_1(x) \dd F_0(x)$, with $P_1(x)$ in (\ref{eq:P_x_closedform}). In addition, we plot the circuit-simulated average harvested power $\E[P_{\rm ct}(\V{x})]\!=\!\int_{-A}^{A}P_{\rm ct}(x) \dd F_0(x)$, where $P_{\rm ct}(x)\!=\!P_{\rm ct}\left(P_{\rm in}\!=\!(xh_{\rm E})^2\right)$ is the power function shown in Fig. \ref{fig:Harvested_DC_power_ct_theory} obtained by interpolating the circuit-simulated data from ADS. In Fig. \ref{fig:RE_region_diff_EH_diff_PP_diff_AP}, it is observed that the circuit-based average harvested DC power is slightly higher than the analytical one. This is because, as shown in Fig. \ref{fig:Harvested_DC_power_ct_theory}, the circuit-simulated harvested power function lies slightly above the analytical one. Moreover,  Fig. \ref{fig:RE_region_diff_EH_diff_PP_diff_AP} reveals that unlike for the linear EH model \cite{Varshney2008}, for the considered nonlinear EH model, there exists a non-trivial tradeoff between the information rate transmitted to the ID Rx and the power delivered to the EH Rx. In fact, for a larger required average DC power, the optimal input distribution forces the transmitter to transmit more often with the peak amplitudes $x=\pm A$ and less often in the range  $x\in(-A,A)$. This leads to higher average harvested power for the EH Rx at the expense of a lower information rate at the ID Rx. Moreover, the maximum feasible average harvested DC power, obtained by solving problem (\ref{eq:energy_problem}), matches the closed-form expression in (\ref{eq:max_EH_analytic}). Furthermore, it can be observed that the higher the peak-amplitude $A$, the larger the achieved R-E region. In deed, for a larger peak amplitude, the transmitter has to transmit less often with the peak amplitudes to achieve the same average harvested power and can more often choose $x\in(-A,A)$, allowing for a higher information rate at the ID Rx. Moreover, as $A$ increases, the maximum possible average harvested power increases, as given in (\ref{eq:max_EH_analytic}) {\color{black}for $A<A_{{\rm T,sat}_1}$.}

\begin{figure}[!t]
\centering
\includegraphics[width=0.48\textwidth]{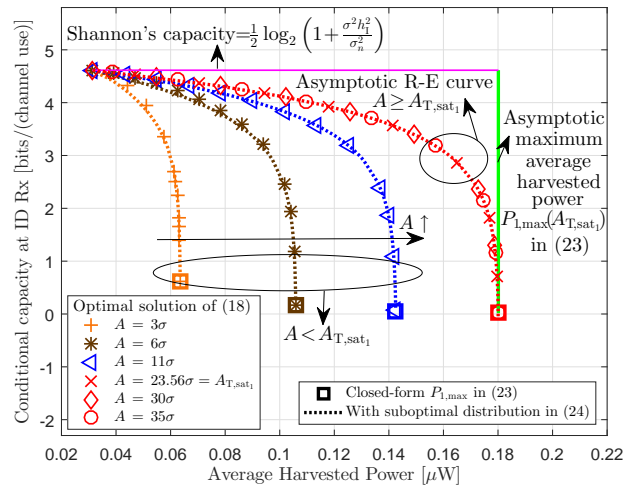}
\caption{{\color{black}R-E curves for optimal distribution from (\ref{eq:capacity_problem}) and  suboptimal distribution from (\ref{eq:sub_opt_dist}) for $\sigma^2\!=\!33\,$dBm.}}
\label{fig:publication_figure_compare_with_Suboptimal_pout}
\end{figure}

{\color{black}In Fig. \ref{fig:Different_AP_optimal_finite_Shannon_Smith}, we 
plot the conditional capacity of problem (\ref{eq:capacity_problem})  as a function of the AP constraint $\sigma^2$ for a required DC power of $7.5\,\mu$W. The peak amplitude constraint is set to $A\!=\!30.75\,$V, which results in a received RF peak power of $(Ah_{\rm E_1})^2\!\equiv\!-8\,$dBm, i.e., the rectifier is not driven into saturation, cf. Fig. \ref{fig:Harvested_DC_power_ct_theory}.} Fig.  \ref{fig:Different_AP_optimal_finite_Shannon_Smith} reveals that, for low APs, the system is EH-limited. In particular, compared to Smith's problem in \cite{SMITH19712} with AP and PP constraints, the EH constraint imposed in problem (\ref{eq:capacity_problem}) incurs a capacity loss which decreases with the AP. On the other hand, for large APs, the system is PP limited. That is, the EH constraint is inactive and the conditional capacity of our problem coincides with Smith's capacity in \cite{SMITH19712}. In addition, we plot the maximum information rate for amplitude shift keying (ASK), which is obtained by solving problem (\ref{eq:capacity_problem}) for $M$ symbols at $x\!=\!\frac{2Ak}{M-1}\!-\!A,\,\,k\!=\!0,1,\dots,M\!-\!1$. The larger the alphabet size, the closer the rate achieved by the finite alphabet is to that achieved by the optimal input distribution. Moreover, in the PP-limited regime, the capacities of all PP-constrained schemes saturate for high APs.

{\color{black}In Fig. \ref{fig:publication_figure_compare_with_Suboptimal_pout}, we show the R-E curves obtained for the optimal solution of problem (\ref{eq:capacity_problem}) and the suboptimal distribution given in (\ref{eq:sub_opt_dist}), respectively.  The AP constraint is set to $\sigma^2\!=\!33\,$dBm and different values of the PP constraint are considered. It is observed that the R-E curves for the suboptimal distribution are very close to the optimal ones. This behaviour is expected since the suboptimal distribution is a weighted sum of the optimal distribution for maximum WIT with $A\!\to\!\infty$ and the optimal distribution for maximum WPT, cf. Sections \ref{ss:Max_Info_transfer} to \ref{ss:suboptimal_distribution}. Moreover, interestingly, it is observed that all R-E curves for peak amplitudes $A\!\geq\!A_{{\rm T,sat}_1}$ are identical. This is because $\forall\,A\!\geq\!A_{{\rm T,sat}_1}$,  the optimal input distribution for maximum WPT saturates at the asymptotic on-off distribution $\dd F_0^{{\rm WPT}}(x,A_{{\rm T,sat}_1})$ in (\ref{eq:PMF_max_EH}) and the corresponding maximum average harvested power saturates at $P_{1,\max}(A_{{\rm T,sat}_1})$ in (\ref{eq:max_EH_analytic}). At the same time, since $A_{{\rm T,sat}_1}\!=\!23.56\sigma$ is large compared to $\sigma$, $\forall\,A\!\geq\!A_{{\rm T,sat}_1}$ the optimal input distribution for maximum WIT converges to that for $A\!\to\!\infty$, namely to the asymptotic zero-mean Gaussian distribution. Hence, the optimal distribution that maximizes the R-E region also converges to an asymptotic distribution (very close to the suboptimal distribution in (\ref{eq:sub_opt_dist}) with $A_1'\!=\!A_{{\rm T,sat}_1}$), which yields the asymptotic R-E curve shown in red in Fig. \ref{fig:publication_figure_compare_with_Suboptimal_pout}.} 
\begin{figure}[!t]
\includegraphics[width=0.48\textwidth]{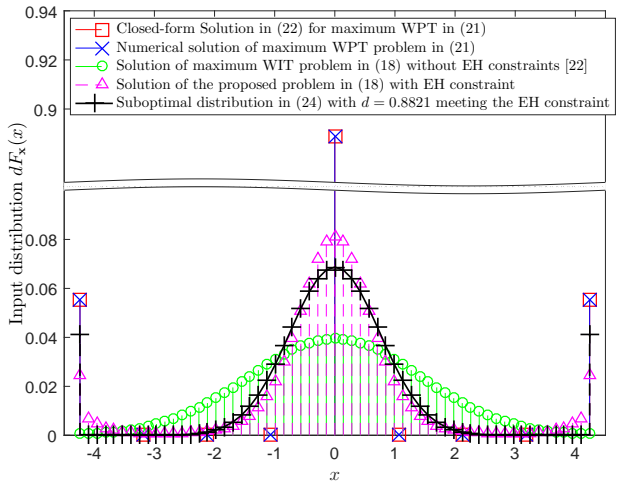}
\caption{ Numerically-obtained and closed-form input distributions for the SWIPT problem in (\ref{eq:capacity_problem}) and the maximum WPT solution for $\sigma^2\!=\!33\,$dBm, $A\!=\!3\sigma$, and $P_{\rm req}\!=\!0.047\mu$W.}
\label{fig:publication_figure_discrete_dist_real_signaling}
\end{figure}

In Fig. \ref{fig:publication_figure_discrete_dist_real_signaling}, we assume an AP constraint of  $\sigma^2\!=\!33\,$dBm and a peak amplitude constraint of $A\!=\!3\sigma$. We plot the numerically-obtained optimal distributions for (a) the maximum WIT problem  \cite{SMITH19712}, which has the shape of a truncated Gaussian distribution (b) the maximum WPT problem in (\ref{eq:energy_problem}), which perfectly matches the closed-form optimal distribution in (\ref{eq:PMF_max_EH}) with mass points at $0$ and $\pm A$, and (c) the SWIPT problem in (\ref{eq:capacity_problem}), whose envelope is close to a truncated Gaussian with additional mass points at $\pm A$. This explains why in Fig. \ref{fig:publication_figure_compare_with_Suboptimal_pout}, the suboptimal distribution in (\ref{eq:sub_opt_dist}) leads to a close-to-optimal R-E performance.

{\color{black}In Figs. \ref{fig:publication_figure_complex_problem_multiple_users_pout} and \ref{fig:publication_figure_complex_Three_users_one_effective}, we consider complex-valued transmission and solve problem (\ref{eq:capacity_problem_Complex}) for an AP of  $\sigma^2\!=\!43\,$dBm and a  peak amplitude of $A\!=\!3\sigma=\!13.4\,$V. 
We consider a system with one ID Rx at $d_{\rm I}\!=\!25\,$m and three EH Rxs at  $d_{{\rm E}_1}\!=\!3\,$m, $d_{{\rm E}_2}\!=\!3.5\,$m, and $d_{{\rm E}_3}\!=\!4\,$m,  i.e., with  $A_{{\rm T,sat}_1}\!=\!18\,$V,  $A_{{\rm T,sat}_2}\!=\!21.8\,$V, and  $A_{{\rm T,sat}_3}\!=\!25.8\,$V, respectively. Hence, $A\!<\!A_{\rm sat}$ and the results in Theorem \ref{theo:Only_one_EH_Rx_effective} hold.} In Fig. \ref{fig:publication_figure_complex_problem_multiple_users_pout}, we first consider the case when only one EH Rx requires a certain amount of DC power, while the other two EH Rxs have no power demands and they passively  harvest from the received power. In this case, the EH constraint of only the power-demanding EH Rx is present in (\ref{eq:capacity_problem_Complex}) and the corresponding individual R-E curves are plotted in Fig. \ref{fig:publication_figure_complex_problem_multiple_users_pout}. It can be observed that the closer the EH Rx is to the transmitter, the larger the R-E region gets.  Furthermore, at low required DC powers, the EH constraint of the power-demanding EH Rx is inactive and the individual R-E curves converge to the capacity of the complex AWGN channel with AP and PP constraints only, as obtained in \cite{QAGC_Shamai_Bar_David_1995}. For the considered low AP constraint,  this limiting capacity practically coincides with Shannon's capacity given by $\log_2\left(\!1\!+\!\frac{\sigma^2|h_{\rm I}|^2}{2\sigma_n^2}\!\right)\!=\!2\,$bits/(channel use).

\begin{figure}[t]
\centering
\includegraphics[width=0.48\textwidth]{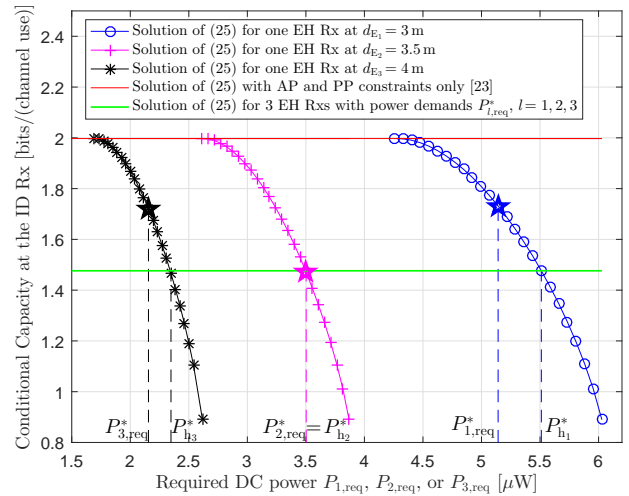}
\caption{Individual R-E curves for three EH Rxs for $\sigma^2\!=\!43\,$dBm, $A\!=\!3\sigma$. The star markers represent the intersection between the individual required DC powers with the individual R-E curves.}
\label{fig:publication_figure_complex_problem_multiple_users_pout}
\end{figure}
\begin{figure}[!t]
\includegraphics[width=0.48\textwidth]{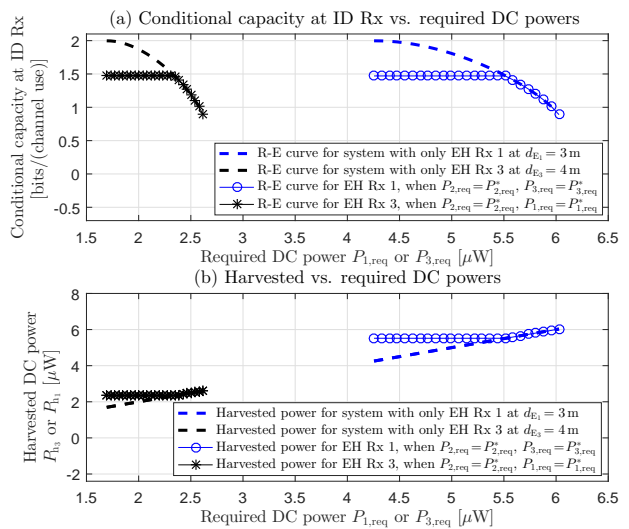}
\caption{ R-E curves and average harvested DC powers  for $\sigma^2\!=\!43\,$dBm, $A\!=\!3\sigma$ using the required DC powers with star markers in Fig. \ref{fig:publication_figure_complex_problem_multiple_users_pout} for two EH Rxs and varying the required DC power of only one EH Rx ($1$ or $3$).}
\label{fig:publication_figure_complex_Three_users_one_effective}
\end{figure}

Next, we consider the case when all three EH Rxs require a certain DC power, given by $P_{l,\req}^*$, $l\!=\!1,2,3$,  as shown in Fig.  \ref{fig:publication_figure_complex_problem_multiple_users_pout} by the projection of the star markers \quotes{\large $\star$\normalsize} onto the x-axis.  According to Theorem \ref{theo:Only_one_EH_Rx_effective}, the only active EH constraint, is that of the EH Rx for which the ID Rx rate of its individual R-E curve is the smallest. In the considered example, the EH constraint of EH Rx $2$ is the only active one and the conditional capacity at the ID Rx is $1.47\,$bits/(channel use). Furthermore, the actual harvested DC powers at the EH Rxs, denoted by $P_{{\rm h}_l}^*$, $l\!=\!1,2,3$, are the DC power values of the points obtained by the intersection of the individual R-E curves with the horizontal line of the conditional capacity of the ID Rx.  Only the active EH Rx harvests as much as it requires, while the inactive ones  harvest more power than required, i.e., $P_{{\rm h}_1}^*\!>\!P_{1,\req}^*$, $P_{{\rm h}_2}^*\!=\!P_{2,\req}^*$, and $P_{{\rm h}_3}^*\!>\!P_{3,\req}^*$, as shown in Fig. \ref{fig:publication_figure_complex_problem_multiple_users_pout}. 

To shed further light on the behaviour of the SWIPT system with multiple EH Rxs, in Fig. \ref{fig:publication_figure_complex_Three_users_one_effective}, we sweep the required DC power of either EH Rx $1$ or $3$, respectively, and fix the required DC powers of the other two EH Rxs to the values given by the star markers  \quotes{\large $\star$\normalsize} in Fig. \ref{fig:publication_figure_complex_problem_multiple_users_pout}, namely,  $P_{2,\req}^*$ and $P_{l,\req}^*$, $l=3,1$. Then, we plot the conditional capacity achieved at the ID Rx as well as the harvested powers at EH Rx $l\!\in\!\{1,3\}$. As can be observed, for all required DC powers $P_{l,\req}\!<\!P_{{\rm h}_l}^*$, the EH constraint of EH Rx $2$ is the only active EH constraint and the conditional  capacity at the ID Rx is fixed, whereas the harvested DC power at user $l$ is $P_{{\rm h}_l}^*$. On the other hand, when $P_{l,\req}\!>\!P_{{\rm h}_l}^*$, EH Rx $l$ provides the only active EH constraint and the conditional capacity of the ID Rx is determined by the individual R-E curve of EH Rx $l$, as shown in Fig. \ref{fig:publication_figure_complex_Three_users_one_effective}(a). Moreover, EH Rx $l$ harvests as much as it requires, i.e., $P_{{\rm h}_l}\!=\!P_{l,\req}$.

\section{Conclusion}
\label{s:conclusion}
We studied the conditional capacities of real- and complex-valued SWIPT systems with separated ID and EH Rxs under AP, PP, and EH constraints. {\color{black}We developed a novel circuit-based nonlinear EH model that accounts for the saturation of the harvested DC powers at high input RF powers. The accuracy of this model was verified with circuit simulations.}  {\color{black}Our results reveal that, for a given AP constraint, the R-E tradeoff curve saturates for  high PP constraints due to the saturation of the harvested DC power.} We proved that the optimal input distribution  that maximizes the R-E region is discrete with a finite number of mass points. Moreover, the optimal input distribution for maximum WPT was found to be an on-off distribution. We proposed a suboptimal distribution, which superimposes the optimal distributions for WPT and WIT and showed that its R-E performance closely approaches the optimal one. Future work  may include extensions to multisine signals, fading channels, and co-located EH and ID Rxs. 
\appendices
\section{Proof of Theorem \ref{theo:unique_distribution}}
\label{App:proof_unique_optimal_distribution}
We first prove the existence of a unique distribution $F_0 \in\Omega$ that maximizes the mutual information $I(F)$. It suffices to show that the optimization problem in (\ref{eq:capacity_problem}) is convex, i.e., that the set $\Omega$ is convex and compact in some topology and that $I(F)$ is continuous and strictly concave in $F$.  The convexity of the set $\Omega$ follows from the convexity of the set of distribution functions $\mathcal{F}_A$ (defined by $\int_{-A}^A \dd F(x)=1$) and the linearity of the AP and  EH constraints in $F$.  Hence, constraints $g_l(F)\leq 0$, $\forall\,l\in\{0\}\cup\mathcal{L}$, are convex. 
The proof of the compactness of $\Omega$ is similar to that in \cite[Appendix I.A]{capacity_Rayleigh_fading_Abou_Faycal2001}. Next, we show that the mutual information is continuous and strictly concave in $F$. The mutual information resulting from an input distribution $F$ is $I(F)\!=\!h_{\V{y}}(F)\!-\!h_N$, where $h_{\V{y}}(F)$ is the entropy of  output $\V{y}$ assuming an input distribution $F$,  and $h_N$ is the noise entropy given by $h_N\!=\!\frac{1}{2}\log_2(2\pi\e\sigma_n^2)$. Since $h_N$ is constant, it suffices to show that $h_{\V{y}}(F)$ is continuous and strictly concave. The proof of the continuity of $h_{\V{y}}(F)$ is given in \cite[Appendix I.B]{capacity_Rayleigh_fading_Abou_Faycal2001}. Next, we show that the entropy function $h_{\V{y}}(F)$ is strictly concave in $F$. Since $h_{\V{y}}(F)\!=\!-\int_{-\infty}^{\infty}p(y; F)\log_2(p(y; F)) \dd y$ is a strictly concave function of the output pdf $p(y;F)$ and $p(y;F)\!=\!\int_{-\infty}^{\infty}p(y|x)\dd F(x)$ is a linear function in $F$, it follows that $h_{\V{y}}(F)$ is a strictly concave function in $F$. Hence, we conclude that problem (\ref{eq:capacity_problem}) is convex and has a unique solution. 

Next, the proof that $C$ in (\ref{eq:capacity_problem}) is also given by $C\!=\!\sup\limits_{F\in\mathcal{F}_A} I(F)\!-\!\sum_{l\in\{0\}\cup\mathcal{L}} \lambda_l g_l(F)$ follows from the Lagrangian theorem for constrained optimization problems. In particular, this equivalence holds for the convex problem in (\ref{eq:capacity_problem}) if $C$ is finite and Slater's condition holds, i.e., if there exists an interior point $\tilde{F}\!\in\!\mathcal{F}_A$ such that all constraints hold with strict inequality, i.e., $g_l(\tilde{F})\!<\!0$, $\forall\,l\in\{0\}\cup\mathcal{L}$. The finiteness of the conditional capacity $C$ is guaranteed by the AP  constraint. Next, we prove that for the considered problem, Slater's condition holds. Let $\tilde{x}$ satisfy $|\tilde{x}|\!<\!\sigma\!<\!A$ and $P_l(\tilde{x})\!>\!P_{l,\req}$, $\forall\,\,l\in\mathcal{L}$, and let $\tilde{F}$ be the unit-step function at $\tilde{x}$, then $g_0(\tilde{F})=\tilde{x}^2\!-\!\sigma^2\!<\!0$ and $g_l(\tilde{F})\!=\!-P_l(\tilde{x})\!+\!P_{l,\req}\!<\!0$, $\forall\,\,l\in\mathcal{L}$, and hence, Slater's condition holds. Thus, from the Lagrangian theorem, strong duality holds, i.e., $\exists\,\lambda_l\!\geq\! 0$, $\forall\,l\in\{0\}\cup\mathcal{L}$,  such that $C\!=\!\sup\limits_{F\in\mathcal{F}_A} I(F)-\sum_{l\in\{0\}\cup\mathcal{L}} \lambda_l g_l(F)$ and is achieved also by $F_0$. Moreover, the complementary slackness conditions $\lambda_lg_l(F_0)\!=\!0$ must hold  $\forall\,\,l\in\{0\}\cup\mathcal{L}$. This completes the proof. 
\section{Proof of Theorem \ref{theo:C_necessary_sufficient_condition1}}
\label{App:proof_C_necessary_sufficient_condition1}
Define $J(F)\!\definedas\!I(F)\!-\!\sum_{l\in\{0\}\cup\mathcal{L}} \lambda_l g_l(F)$, then from Theorem \ref{theo:unique_distribution}, $C$ can be written as $C\!=\!\sup_{F\in\mathcal{F}_A} \, J(F)$.  From  \cite[Theorem 3]{capacity_Rayleigh_fading_Abou_Faycal2001}, if $\mathcal{F}_A$ is convex, and $J(F)$ is concave and weakly differentiable, then $J'_{F_0}(F)\!\leq \!0$ is a necessary and sufficient condition for $J(F)$ to achieve its maximum at $F_0$, where $J'_{F_0}(F)\!\definedas\!\lim\limits_{\theta\to 0}\left(J((1\!-\!\theta)F_0\!+\!\theta F)\!-\!J(F_0)\right)\!/\theta$ is the weak derivative of $J(F)$ at $F_0$. In Appendix \ref{App:proof_unique_optimal_distribution}, we established that $\mathcal{F}_A$ is convex and that $J(F)$ is strictly concave in $F$, since $I(F)$ is strictly concave in $F$ and $g_l(F)$ is affine in $F$  $\forall\,l\in\{0\}\cup\mathcal{L}$. It remains to be proved that $J(F)$ is weakly differentiable and to determine the derivative $J'_{F_0}(F)\!=\! I'_{F_0}(F)\!-\!\sum_{l\in\{0\}\cup\mathcal{L}} \lambda_l g'_{l,F_0}(F)$. In \cite[Proof of Theorem 3]{capacity_Rayleigh_fading_Abou_Faycal2001}, it is shown that $I'_{F_0}(F)$ exists and is given by $I'_{F_0}(F)\!=\!\int i(x;F_0)\dd F(x)\!-\!I(F_0)$. It is also shown that for any linear constraint function $g_l(F)$, the derivative is $g'_{l,F_0}(F)\!=\!g_l(F)\!-\!g_l(F_0)$. From complementary slackness, $\lambda_l g_l(F_0)\!=\!0$ $\forall\, l$. Hence, the condition $J'_{F_0}(F)\leq 0$ for the optimality of $F_0$ is $\int i(x;F_0)\dd F(x)\!-C-\!\sum_{l\in\{0\}\cup\mathcal{L}} \lambda_l g_l(F) \!\leq\! 0$, which reduces to (\ref{eq:C_necessary_sufficient_condition1}). This completes the proof.
\section{Proof of Corollary \ref{corol:C_necessary_sufficient_condition2}}
\label{App:proof_C_necessary_sufficient_condition2}
We start with condition (\ref{eq:C_necessary_sufficient_condition1}) which guarantees the optimality of $F_0$. From Appendix \ref{App:proof_C_necessary_sufficient_condition1},  (\ref{eq:C_necessary_sufficient_condition1}) can be written as $\int \!i(x;F_0)\dd F(x)\! -\! C\!-\!\sum_{l\in\{0\}\cup\mathcal{L}}\lambda_lg_l(F)\!\leq \!0$. Define $g_l(F)\!=\!\int A_l(x)\dd F(x)\! -\!a_l$, $l\in\{0\}\cup\mathcal{L}$. Hence, $A_0(x)\!=\!x^2$, $a_0\!=\!\sigma^2$, $A_l(x)\!=\!-P_l(x)$, and $a_l\!=\!-P_{l,\req}$, for $l\in\mathcal{L}$. Thus,  (\ref{eq:C_necessary_sufficient_condition1}) can be written as 
\begin{equation}
\int\! \Big(i(x;F_0)-\!\!\!\sum_{l\in\{0\}\cup\mathcal{L}}\!\!\!\lambda_lA_l(x)\Big)\dd F(x) \leq C-\!\!\!\sum_{l\in\{0\}\cup\mathcal{L}}\!\!\!\lambda_la_l.
\label{eq:necessary_condition1_general}
\end{equation} 
Next, we prove that (\ref{eq:necessary_condition1_general}) holds if and only if
\begin{equation}
\textstyle{\quad \,\,i(x;F_0)\!\leq\! C\!+\!\!\!\!\sum\limits_{l\in\{0\}\cup\mathcal{L}}\!\!\!\lambda_l\left(A_l(x)\!-\!a_l\right), \quad \forall x\in[-A,A]},\\[-1ex]
\label{eq:necessary_condition2_general1}
\end{equation}
and
\begin{equation}
\textstyle{i(x;F_0)= C+\sum\limits_{l\in\{0\}\cup\mathcal{L}}\lambda_l\left(A_l(x)-a_l\right), \quad \forall x\in E_0}.
\label{eq:necessary_condition2_general2}
\end{equation}
Clearly, if both conditions (\ref{eq:necessary_condition2_general1}) and (\ref{eq:necessary_condition2_general2}) hold, $F_0$ must be optimal because the necessary and sufficient condition in (\ref{eq:necessary_condition1_general}) is satisfied. The converse remains to be proved, i.e., if (\ref{eq:necessary_condition1_general}) holds, (\ref{eq:necessary_condition2_general1}) and (\ref{eq:necessary_condition2_general2}) must also hold. We prove this by contradiction. Assume that  (\ref{eq:necessary_condition1_general}) holds but  (\ref{eq:necessary_condition2_general1}) does not. It means that $\exists\, \tilde{x}\in[-A,A]$ such that $i(\tilde{x};F_0)> C+\sum_{l\in\{0\}\cup\mathcal{L}}\lambda_l\left(A_l(\tilde{x})-a_l\right)$. Now, let $F$ be the unit-step function at $\tilde{x}$, then the LHS of (\ref{eq:necessary_condition1_general}) becomes $i(\tilde{x};F_0)-\sum_{l\in\{0\}\cup\mathcal{L}}\lambda_lA_l(\tilde{x})>C-\sum_{l\in\mathcal{L}}\lambda_la_l$, which violates (\ref{eq:necessary_condition1_general}). Hence, if (\ref{eq:necessary_condition1_general}) holds, (\ref{eq:necessary_condition2_general1}) must also hold. Now, assume that (\ref{eq:necessary_condition1_general}) holds but (\ref{eq:necessary_condition2_general2}) does not. That is, for a subset of $E_0$ defined as $E'\!\subset \!E_0$, with positive measure, i.e., $\int_{E'}\dd F_0(x)\!=\!\delta\!>\!0$, (\ref{eq:necessary_condition2_general2}) does not hold. Then, from (\ref{eq:necessary_condition2_general1}), $i(x;F_0)\!< \!C\!+\!\sum_{l\in\{0\}\cup\mathcal{L}}\!\lambda_l\!\left(A_l(x)\!-\!a_l\right)\!,\, \forall\,x\!\in\! E'$. Since $F_0$ has points of increase on $E_0$ only, we have $\int_{E_0}\dd F_0(x)=\int_{E'}\dd F_0(x)\!+\!\int_{E_0-E'}\!\dd F_0(x)\!=\!\delta\!+\!(1\!-\!\delta)\!=\!1$. Now, we can write  $C\!-\!\sum_{l\in\{0\}\cup\mathcal{L}} \lambda_l a_l\!\leq \!I(F_0)\!-\!\sum_{l\in\{0\}\cup\mathcal{L}}\lambda_l\int\! A_l(x)\dd F_0(x)\!=\!\int\Big(i(x;F_0)\!-\!\sum_{l\in\{0\}\cup\mathcal{L}}\lambda_lA_l(x)\Big)\dd F_0(x),$ as
\begin{equation*}
\begin{aligned}
&C-\!\!\!\!\sum\limits_{l\in\{0\}\cup\mathcal{L}}\!\!\! \lambda_l a_l\!\leq\!\!\!\!\underbrace{\underset{x\in E'}{\int}\!\!\!\Big(i(x;F_0)\!-\!\!\!\!\sum\limits_{l\in\{0\}\cup\mathcal{L}}\!\!\!\!\lambda_l A_l(x)\Big)\dd F_0(x)}_{<\delta(C-\!\sum_{l\in\{0\}\cup\mathcal{L}} \lambda_l a_l)}\\
&+\!\!\!\underbrace{\underset{x\in E_0\!-\!E'}{\int}\!\!\!\Big(i(x;F_0)\!-\!\!\!\!\!\!\!\sum\limits_{l\in\{0\}\cup\mathcal{L}}\!\!\!\!\!\lambda_l A_l(x)\Big)\dd F_0(x)}_{=(1-\delta)(C-\!\sum_{l\in\{0\}\cup\mathcal{L}}\!\lambda_l a_l)}
<C-\!\!\!\!\!\!\sum\limits_{l\in\{0\}\cup\mathcal{L}}\!\!\!\!\!\!\lambda_l a_l,
\end{aligned}
\label{eq:expression_to_violate2}
\end{equation*}
which is a contradiction. Hence, if (\ref{eq:necessary_condition1_general}) holds, (\ref{eq:necessary_condition2_general2}) must also hold. Therefore, (\ref{eq:necessary_condition2_general1}) and (\ref{eq:necessary_condition2_general2}) are  necessary and sufficient conditions for the optimality of the input distribution $F_0$. Next, we obtain  condition (\ref{eq:C_necessary_sufficient_condition2}) from (\ref{eq:necessary_condition2_general1}) and (\ref{eq:necessary_condition2_general2}). By definition, the marginal information density $i(x,F_0)$ is given by \cite{SMITH19712}
\begin{equation}
\begin{aligned}
&i(x,F_0)=\int_y p(y|x)\log_2\left(\frac{p(y|x)}{p(y;F_0)}\right)\dd y\\
&=\int \!p(y|x)\log_2(p(y|x))\dd y\!-\!\int\! p(y|x)\log_2(p(y;F_0))\dd y\\
&=-\frac{1}{2}\log_2(2\pi\e\sigma_n^2)-\!\!\int\frac{\e^{-\frac{(y-xh_{\rm I})^2}{2\sigma_n^2}}}{\sqrt{2\pi\sigma_n^2}}\log_2(p(y;F_0))\dd y,
\end{aligned}
\label{eq:i_x_F0}
\end{equation}
where the first term on the RHS is the negative of the entropy of the noise. Finally, using the definitions of $A_l(x)$ and $a_l$ for $l\in\{0\}\cup\mathcal{L}$, (\ref{eq:necessary_condition2_general1}) and (\ref{eq:necessary_condition2_general2}) reduce to (\ref{eq:C_necessary_sufficient_condition2}). This completes the proof.
\section{Proof of Theorem \ref{theo:discrete_optimal_distribution}}
\label{App:proof_discrete_optimal_distribution}
Our proof of the discreteness and finiteness of the optimal input distribution parallels that in \cite{capacity_Rayleigh_fading_Abou_Faycal2001} and \cite[Section IV]{Hermite_bases_Abou_Faycal_2012}. Specifically, we prove by contradiction that the set of mass points $E_0$ of the optimal input distribution must be discrete with finite number of mass points.  In particular, assuming $E_0$ is continuous or discrete with infinite number of mass points, then according to the Bolzano-Weierstrass theorem, since $E_0\subset [-A,A]$, i.e., $E_0$ is bounded, $E_0$ must have an accumulation point  \cite{SMITH19712}.
On the other hand, from Corollary \ref{corol:C_necessary_sufficient_condition2}, a necessary condition for the optimal input distribution to be $F_0$ is that $s(x)$  in (\ref{eq:C_necessary_sufficient_condition2}) must be zero, $\forall \,x\in E_0$. Hence, $s(x)$ must be zero on an infinite set of points having an accumulation point. 
Next, we extend $s(x)$ in (\ref{eq:C_necessary_sufficient_condition2}) to the complex domain, i.e.,
\begin{equation}
\begin{aligned}
&s(z)\!\definedas\!\lambda_0\!\left(z^2\!-\!\sigma^2\right)\!-\!\sum\limits_{l\in\mathcal{L}}\! \lambda_l\left(P_l(z)\!-\!P_{l,\req}\right)\!+\!C\\
&+\!\frac{1}{2}\!\log_2(2\pi\e\sigma_n^2)\!+\!\!\frac{1}{\sqrt{2\pi\sigma_n^2}}\!\!\int\!\!\!\e^{-\frac{(y-zh_{\rm I})^2}{2\sigma_n^2}}\!\log_2(p(y;F_0))\dd y,
\end{aligned}
\label{eq:sz}
\end{equation}
where $z\in\mathbb{C}$. The extension to the complex domain is necessary to use the identity theorem for analytic functions in complex analysis. In particular, we first establish the analyticity of function $s(z)$. In (\ref{eq:sz}), the quadratic function, the exponential function, and their compositions are analytic functions in the whole  complex domain $z\!\in\!\mathbb{C}$\cite{SMITH19712}. In {\color{black} $
P_l(z)=\min\left(\left[\frac{1}{a}W_0\left(a\e^aI_0\left(\sqrt{2}B h_{{\rm E}_l} z \right)\right)-1\right]^2 i_{\rm s}^2R_{\rm L},\frac{B_{\rm v}^2}{4R_{\rm L}}\right)$,} defined in (\ref{eq:P_x_closedform}), {\color{black}the $\min(\cdot)$ function}, the quadratic function $(\cdot)^2$, the modified Bessel function $I_0(\cdot)$, {\color{black}and their compositions} are analytic on the whole complex domain $z\in\mathbb{C}$. The principal branch of the LambertW function $W_0(\cdot)$ is analytic everywhere in the complex domain with the exception of the branch cut along the negative real axis, i.e., on $(-\infty,-1/\e)$ \cite{Lagrange_Inversion_LambertW,Corless1996}. Hence, the function $s(z)$ is analytic in the domain $D$ defined by $D\definedas \{z\in\mathbb{C}: a\e^a I_0(\sqrt{2}B h_{{\rm E}_l} z)\in\mathbb{C}\setminus (-\infty,-1/\e),\,\,\forall l\in\mathcal{L}\}.$
Thus, we have an analytic function $s(z)$ in a domain $D$ that is zero over an infinite set of points $E_0$ having an accumulation point in $D$. By the identity theorem  \cite{QAGC_Shamai_Bar_David_1995, capacity_Rayleigh_fading_Abou_Faycal2001},  function $s(z)$ must be zero in the whole domain $D$, i.e., $s(z)=0$, $\forall z\in D$. Next, we show that this condition is invalid over a subset of  $D$, which violates the original assumption on  $E_0$ being continuous or discrete with infinite number of mass points. 

First, we restrict our attention to $z\in\mathbb{R}$ which is a subset\footnote{$\mathbb{R}\!\subset\! D$ since for $z\!\in\!\mathbb{R}$, the argument of $W_0(\cdot)$ is $a\e^a I_0(\sqrt{2}B h_{{\rm E}_l} z)\in(a\e^a,\infty)\!\subset\!\mathbb{R}^+$, i.e., it is in the analytical domain of $W_0(\cdot)$.} of $D$ and solve for the unknown distribution $p(y;F_0)$ in (\ref{eq:sz}) for which $s(z)\!=\!0$. Similar to \cite{Hermite_bases_Abou_Faycal_2012}, we set $\sigma_n^2\!=\!1$ to simplify the proof without loss of generality (w.l.o.g.) and express the last integral term in (\ref{eq:sz}) in terms of the Hermite polynomials $H_m(y)$ defined in \cite[Appendix F]{Hermite_bases_Abou_Faycal_2012}. In particular, since $\log_2(p(y;F_0))$ is a continuous function of $y$ and is square integrable with respect to  $\e^{-y^2/2}$,  it can be written as $\log_2(p(y;F_0))\!=\!\sum_{m=0}^\infty c_mH_m(y)$, 
%
where $c_m$ are constants. Hence, the last term of $s(z)$ in (\ref{eq:sz}) can be written as $Z\!\definedas\!\frac{1}{\sqrt{2\pi}}\!\!\int\!\!\e^{-\frac{y^2}{2}}\e^{-\frac{(h_{\rm I}z)^2}{2}+h_{\rm I} zy}\log_2(p(y;F_0)) \dd y =\frac{1}{\sqrt{2\pi}}\!\!\int\!\!\e^{-\frac{y^2}{2}}\sum_{n=0}^\infty \!\frac{(h_{\rm I} z)^n}{n!}H_n(y)\!\sum_{m=0}^\infty\!c_mH_m(y) \dd y$, where we used the Hermite polynomial expansion $\e^{-(h_{\rm I} z)^2/2+h_{\rm I}z y}\!=\!\sum_{n=0}^\infty \frac{(h_{\rm I} z)^n}{n!}H_n(y)$  \cite{Hermite_bases_Abou_Faycal_2012}. 

Using the orthogonality of the Hermite polynomials with $\e^{-y^2/2}$ given by $\int_{-\infty}^\infty H_n(y)H_m(y) \e^{-y^2/2}\dd y\!=\!m!\sqrt{2\pi}$ if $m\!=\!n$ and zero otherwise \cite[Appendix F]{Hermite_bases_Abou_Faycal_2012}, then $Z$ reduces to $Z\!=\!\sum_{m=0}^\infty c_m(h_{\rm I} z)^m$. Next, we replace $P_l(z)$ by its Taylor series to write $s(z)$ in (\ref{eq:sz}) in a polynomial form. In particular, the LambertW function admits a convergent Taylor series around an arbitrary point  $x_0\in\mathbb{R}$ given by \cite[Eq. (8), (10)]{Lagrange_Inversion_LambertW}
\begin{equation*}
\begin{aligned}
W_0(x)\!=\!\!&\sum\limits_{n=1}^{\infty}\!\frac{\e^{-n W_0(x_0)}}{(1\!+\!W_0(x_0))^{2n\!-\!1}}\frac{p_n(W_0(x_0))}{n!}(x\!-\!x_0)^n\\
& \!+\!W_0(x_0)\definedas \sum_{n=0}^\infty d_n(x_0) x^n,\, |x-x_0|<x_{{\rm ROC}},
\end{aligned}
\label{eq:LambertW_series}
\end{equation*}
where $p_n(\cdot)$ is a polynomial with coefficients given in \cite[Table I.]{Lagrange_Inversion_LambertW}. This series holds for some radius of convergence (ROC)  $|x-x_0|<x_{{\rm ROC}}$ and can be expanded to a polynomial in $x$ defined as $W_0(x)=\sum_{n=0}^\infty d_n(x_0) x^n$. Moreover, using the Taylor series expansion of the modified Bessel function given by $I_0(z)=\sum_{m=0}^\infty \frac{(z/2)^{2m}}{(m!)^2}$, the Bessel function in $P_l(z)$ can be written as $I_0\left(\sqrt{2}Bh_{{\rm E}_l} z\right)\!=\!\sum_{m=0}^\infty \alpha_{m,l} z^{2m}$, where $\alpha_{m,l}\!=\!\frac{(Bh_{{\rm E}_l}/\sqrt{2})^{2m}}{(m!)^2}$. Hence, the harvested power function $P_l(z)$ in (\ref{eq:P_x_closedform}) is a quadratic function of a polynomial of another polynomial function with even powers of $z$. Thus, $P_l(z)$ can be written as
{\color{black}
\begin{equation} 
\begin{aligned}
P_l(z)\!=\!\min\!\Bigg(&\!\!\left[\!\frac{1}{a}\!\sum\limits_{n=0}^\infty \!d_n\!(x_0)\!\!\left(\!a\e^a\!\sum_{m=0}^\infty \!\alpha_{m,l} z^{2m} \!\right)^n\!\!\!\!-\!1\right]^2\!\!i_{\rm s}^2R_{\rm L}\\
&,\frac{B_{\rm v}^2}{4R_{\rm L}}\Bigg)
 \definedas\min\left(\sum\limits_{m=0}^{\infty} q_{m,l}z^{2m},\frac{B_{\rm v}^2}{4R_{\rm L}}\right),
\end{aligned}
\label{eq:Harvested_power_series}
\end{equation} }
where $q_{m,l}\in\mathbb{R}$ and (\ref{eq:Harvested_power_series}) holds for  $|a\e^aI_0(\sqrt{2}B h_{{\rm E}_l} z)\!-\!x_0|\!<\!x_{{\rm ROC}}$. From (\ref{eq:sz}), $s(z)\!=\!0$ reduces to 
{\color{black}
\begin{gather}
\begin{aligned}
\sum\limits_{m=0}^\infty\! c_m h_{\rm I}^m z^m
\!=\!&\sum\limits_{l\in\mathcal{L}}\!\lambda_l\!\left(\min\!\left(\!\sum\limits_{m=0}^{\infty}\! q_{m,l}z^{2m},\frac{B_{\rm v}^2}{4R_{\rm L}}\right)\!\!-\!\!P_{l,\req}\!\right)\\
&-\lambda_0(z^2-\sigma^2)-C-\frac{1}{2}\log_2(2\pi\e).
\end{aligned}
\raisetag{4pt}
\label{eq:sz_0}
\end{gather}
}
Equating the coefficients of $z^m$, we get $c_{m}=0$ for odd $m$. To obtain $c_{m}$ for even $m$,  {\color{black}let $A\!\geq\!A_{{\rm T,sat}_l}$, $\forall$ $l\!\in\!\mathcal{L}_{\rm sat}$, i.e., with $P_l(z)=\frac{B_{\rm v}^2}{4R_{\rm L}}$, and $A\!< \!A_{{\rm T,sat}_l}$, $\forall$ $l\!\in\!{\cal{L}_{\rm non-sat}}$, then

\begin{gather}
\begin{aligned}
&c_0\!=\left(\!\sum\limits_{l\in\mathcal{L}_{\rm non-sat}} \! \lambda_l \,q_{0,l}\!+\!\sum\limits_{l\in\mathcal{L}_{\rm sat}} \! \lambda_l\frac{B_{\rm v}^2}{4R_{\rm L}} \!-\sum\limits_{l\in\mathcal{L}}\!\lambda_l P_{l,\req}\!\right)\\
&\quad\quad\quad+\!\lambda_0\sigma^2\!-\!C\!-\!0.5\log_2(2\pi\e), \\
&c_2\!=\!\left(\sum\limits_{l\in\mathcal{L}_{\rm non-sat}}\!\!\!\!\!\!\!\lambda_l q_{1,l}-\lambda_0\right)/h_{\rm I}^2,\\
&c_m\!=\!\sum\limits_{l\in\mathcal{L}_{\rm non-sat}} \frac{\lambda_l q_{\frac{m}{2},l}}{h_{\rm I}^m},\,\, \forall\,\, m\geq 4.
\end{aligned}
\raisetag{17pt}
\label{eq:cm}
\end{gather}}

Using (\ref{eq:cm}), the output pdf in $\log_2(p(y;F_0))\!=\!\sum_{m=0}^\infty c_mH_m(y)$ reduces to $p(y;F_0)=\e^{\ln(2)\sum_{n=0}^\infty c_{2n}H_{2n}(y)}
$. Next, we consider two cases based on whether or not the EH constraints are  active. We will show that in both cases, the optimal input distribution is  discrete with finite number of mass points. \\
\emph{Case 1 ($\lambda_l\!=\!0$, $\forall\,\, l\in\mathcal{L}$)}: If all EH constraints are inactive, i.e., they are satisfied with strict inequality, then  $\lambda_l\!=\!0$, $\forall\,\, l\in\mathcal{L}$, from the complementary slackness, cf. Theorem \ref{theo:unique_distribution}. In this case, the coefficients in (\ref{eq:cm})   reduce to $c_0\!=\!\lambda_0\sigma^2-C\!-\!0.5\log_2(2\pi\e)$, $c_2\!=\!-\lambda_0/h_{\rm I}^2$, and $c_m\!=\!0,\,\forall\,m\neq\{0,2\}$. Using the Hermite polynomials $H_0(y)\!=\!1$, $H_2(y)\!=\!y^2\!-\!1$  \cite[Appendix F]{Hermite_bases_Abou_Faycal_2012}, the output pdf  reduces to  $p(y;F_0)\!=\!\e^{\ln(2)(c_{0}-c_2)}\e^{\ln(2)c_2 y^2}$.
Since the support of $p(y;F_0)$ is the whole real line $\mathbb{R}$ and $c_2\!<\!0$, this output distribution is the Gaussian distribution with zero mean. Now, for $\V{y}$ to be Gaussian distributed for the AWGN channel model $\V{y}\!=\!\V{x}h_{\rm I}\!+\!\V{n}$, $\V{x}$ must also be Gaussian distributed. However, with the PP constraint $|\V{x}|\!\leq\! A$, $\V{x}$ cannot be Gaussian distributed on a bounded interval. Thus, the obtained output distribution is invalid. Hence, the condition $s(z)\!=\!0$ is invalid on a subset of the domain $D$ defined by the region of convergence of (\ref{eq:Harvested_power_series}). This contradicts the original assumption that $E_0$ is continuous or discrete with infinite number of mass points. Therefore, $E_0$ must be discrete with finite number of mass points. \\
\emph{Case 2 $(\lambda_l\!>\!0,$ $l\!\in\!\mathcal{L}_{\rm A}\!\subset\!\mathcal{L})$}: In this case, some of the EH constraints are active, i.e., $\forall\,\, l \in\!\mathcal{L}_{\rm A}$, ${\rm C}_l$ in (\ref{eq:capacity_problem}) is satisfied with equality and the coefficients $c_m$ are given by (\ref{eq:cm}). From \cite{Hermite_bases_Abou_Faycal_2012}, the Hermite polynomials of even orders are function of even powers of $y$. Thus, the output distribution reduces to $p(y;F_0)\!=\!\e^{\ln(2)\sum_{n=0}^\infty t_{n}y^{2n}}=\prod_{n=0}^\infty\e^{\ln(2)t_ny^{2n}}$, where $t_n$ are non-zero constants. It can be verified that for some $n\!\to\!\infty$, $\exists\, t_n\!>\!0$, in which case $p(y;F_0)$ cannot be a valid distribution since it is unbounded. Hence, we conclude that the set $E_0$  must be discrete and finite. This completes the proof. 
\section{Proof of Theorem \ref{theo:Only_one_EH_Rx_effective}}
\label{App:proof_Only_one_EH_Rx_effective}
Consider problem (\ref{eq:capacity_problem}) with $A< A_{\rm sat}$.  Assume that only one EH Rx, having index $1$, requires a certain average harvested power $P_{1,\req}$ and the remaining EH Rxs passively harvest energy from their received signals. Assume further that EH constraint  ${\rm C}_{1}$ is feasible and active and the optimal distribution is $F_1$. The average harvested powers at the EH Rxs are $\E_{F_1}[P_{l}(\V{x})]$, $\forall \,l\in\mathcal{L}$.  Next, we show that if another feasible EH constraint ${\rm C}_{2}$ is added to problem (\ref{eq:capacity_problem}), then at most one of the two constraints, denoted by ${\rm C}_a$,  is active.  In particular, we add constraint ${\rm C}_{2}$ for EH Rx $2$ with required average harvested power $P_{2,\req}$. In this case, if $P_{2,\req}\!<\! \E_{F_1}[P_{2}(\V{x})]$, then $F_1$ remains the optimal distribution and the EH constraint ${\rm C}_{2}$ is inactive, i.e., ${\rm C}_a\!=\!{\rm C}_{1}$. Otherwise, if $P_{2,\req}\!>\!\E_{F_1}[P_{2}(\V{x})]$, then distribution $F_1$ fails to satisfy the EH requirement of EH Rx $2$ and a more energy-biased distribution is needed. In this case, if  problem (\ref{eq:capacity_problem}) with only the EH constraint for EH Rx  $2$ is feasible with optimal distribution $F_2$, then $F_2$ is  more energy-biased but also less information-biased compared to $F_1$. In particular, according to Lemma 
\ref{lem:monotonicity_Power_function_before_sat}, $P_{2,\req}\!=\!\E_{F_2}[P_{2}(\V{x})]\!>\!\E_{F_1}[P_{2}(\V{x})]$ also implies $\E_{F_2}[P_{1}(\V{x})]\!>\!\E_{F_1}[P_{1}(\V{x})]\!=\!P_{1,\req}$, i.e., using $F_2$, the EH constraint of EH Rx $1$  is satisfied with strict inequality. Hence, ${\rm C}_{1}$ is inactive and only ${\rm C}_{2}$ is active, i.e., ${\rm C}_a\!=\!{\rm C}_{2}$. The same discussion holds for constraint ${\rm C}_a$ and any other added constraint ${\rm C}_{3}$. Hence, by induction, at most one of the EH constraints in problem (\ref{eq:capacity_problem}) is active.  Moreover, due to the R-E tradeoff, a more energy-biased input distribution implies a lower information rate at the ID Rx, e.g., if $\E_{F_2}[P_{l}(\V{x})]\!>\!\E_{F_1}[P_{l}(\V{x})]$, then $I(F_2)\!<\!I(F_1)$ holds. Hence, the active EH constraint is the one, which when all other EH constraints are removed, leads to the lowest achievable rate at the ID Rx. This completes the proof.
\section{Proof of Theorem \ref{theo:EH_only}}
\label{App:proof_max_WPT}
Consider first the case when $A\!< \!A_{{\rm T,sat}_l}$. Suppose $x_0$ is a point of increase of distribution $F_0$ having probability $p_0$, where $0\!<\!x_0\!<\!A$. Thereby, we introduce a new distribution $F_A$ which is constructed from $F_0$ by removing the mass point at $x_0$ and increasing the probabilities of mass points  $0$ and $A$ by $p_0-p_0x_0^2/A^2$ and $p_0x_0^2/A^2$, respectively. This transformation maintains the unity of the sum of probabilities of the mass points and ensures that the AP and PP constraints hold. Next, we show that if  $P_l(0)=0$ and $P_l(x)/x^2$ is a monotonically increasing function in {\color{black}$0\!<\!x\!<\!A$}, then the contribution of the mass points at $x=0$ and $x=A$ to the average harvested power is higher than the contribution of any other point $x_0\in(0,A)$. In particular,
 \begin{equation}
 \begin{aligned}
&p_0P_l(x_0) < 
p_0\frac{x_0^2}{A^2}P_l(A)
+p_0\left(1-\frac{x_0^2}{A^2}\right)P_l(0)\\
&\Rightarrow \quad \frac{P_l(x_0)}{x_0^2}<\frac{P_l(A)}{A^2},
\end{aligned}
\label{eq:compare_dist_max_EH}
\end{equation}
Condition (\ref{eq:compare_dist_max_EH}) holds  for the harvested power function in (\ref{eq:P_x_closedform}) since (a) $P_l(0)\!=\!0$, as $I_0(0)\!=\!1$ and $W_0(a\e^a)\!=\!a$ and (b) $P_l(x)/x^2$ increases monotonically in {\color{black}$0\!<\!x\!<\!A$}. We derive this monotonicity by proving that $xP_l'(x)\!-\!2P_l(x)>0$ holds for {\color{black}$0\!<\!x\!<\!A$}, where $P_l'(x)$ is the first-order derivative of $P_l(x)$. In particular, this condition can be expressed as $D(x)\!\definedas\!\left[\frac{u_l x I_1(u_lx)}{I_0(u_lx)[1+W_0\left(a\e^aI_0(u_lx)\right)]}\!-\!1\right]\frac{W_0\left(a\e^aI_0(u_lx)\right)}{a}\!>\!-1,\,\forall\,x,$ {\color{black}$0\!<\!x\!<\!A$,}
where $u_l\!\definedas\! \sqrt{2} B h_{{\rm E}_l}$. It can be shown that $D(x)$ equals $-1$ for  $x\!=\!0$ and is larger than $-1$ for $x\!>\!0$.
Moreover, since the harvested power is an even function of $x$, i.e., $P_l(x)=P_l(-x)$, the weight of the mass point at $A$ can be arbitrarily distributed between $A$ and $-A$. To satisfy the AP constraint in (\ref{eq:energy_problem}) with equality, the total weights on the peak amplitudes $A$ and $-A$ should satisfy $p+q=\min\{\sigma^2/A^2,1\}$.
{\color{black}Consider next the case when $A\!\geq\!A_{{\rm T,sat}_l}$. For $x_0\in(0,A_{{\rm T,sat}_l})$, since $P_l(x)/x^2$ increases monotonically in $0\!\leq\!x\!<\!A_{{\rm T,sat}_l}$, similar to (\ref{eq:compare_dist_max_EH}), the contribution of the mass points at $x\!=\!0$ and $x\!=\!A_{{\rm T,sat}_l}$ to the average harvested power is higher than the contribution of any other point $x_0\in(0,A_{{\rm T,sat}_l})$, since $\frac{P_l(x_0)}{x_0^2}<\frac{P_l(A_{{\rm T,sat}_l})}{A_{{\rm T,sat}_l}^2}$. On the other hand, for $x_0\in(A_{{\rm T,sat}_l},A)$, $P_l(x_0)$ is constant, i.e.,  $P_l(x_0)=P_l(A_{{\rm T,sat}_l})=B_v^2/(4R_{\rm L})$. In this case, $\frac{P_l(x_0)}{x_0^2}<\frac{P_l(A_{{\rm T,sat}_l})}{A_{{\rm T,sat}_l}^2}$ since $\frac{1}{x_0^2}<\frac{1}{A_{{\rm T,sat}_l}^2}$, i.e., $P_l(x)/x^2 \propto 1/x^2$ is monotonically decreasing in $x$, for $x\geq A_{{\rm T,sat}_l}$. As a result the contribution of the mass points at $x=0$ and $x=A_{{\rm T,sat}_l}$ to the average harvested power is higher than the contribution of any other point $x_0\in(A_{{\rm T,sat}_l},A)$.} This leads to the EH maximizing distribution  given in (\ref{eq:PMF_max_EH}). This completes the proof.
\section{Proof of Lemma \ref{lemma:indep_opt_amplitude_phase_UD}}
\label{App:proof_indep_opt_amplitude_phase_UD}
The proof of the optimality of input signals with independent amplitude and phase distributions parallels that in \cite[Section II.B]{QAGC_Shamai_Bar_David_1995}. We start by expressing the mutual information as \cite[eq. (6)]{QAGC_Shamai_Bar_David_1995}
\begin{gather}
\begin{aligned}
&I(\V{y};\V{x})=H(\V{y})-H(\V{n})\\
&=H(\V{R},\V{\psi})+\int\limits_{R=0}^{\infty} f_{\V{R}}(R) \log_2(R) \dd R -\log_2(2\pi\e\sigma_n^2).
\end{aligned}
\raisetag{7pt}
\label{eq:I_y_x_diff_entropies}
\end{gather}
We note that the joint entropy $H(\V{R},\V{\psi})$ is maximized for independent $\V{R}$ and $\V{\psi}$ with a maximum of $\max\left(H(\V{R},\V{\psi})\right)= H(\V{R})+H(\V{\psi})$, and the entropy  $H(\V{\psi})$ is maximized for a uniformly distributed phase $\V{\psi}$ with a maximum of $\max\left(H(\V{\psi})\right)=\log_2(2\pi)$. Hence, we have
\begin{gather}
\begin{aligned}
&\sup \,  H(\V{R},\V{\psi})=\sup\, H(\V{R})+\log_2(2\pi)\\
&=\sup \left\{ -\int\limits_{R=0}^{\infty} f_{\V{R}}(R) \log_2(f_{\V{R}}(R)) \dd R\right\}+\log_2(2\pi),
\end{aligned}
\raisetag{7pt}
\label{eq:max_joint_entropy}
\end{gather}
which when combined with (\ref{eq:I_y_x_diff_entropies}) results in
\begin{gather}
\begin{aligned}
&\sup I(\V{y};\V{x})=\sup I(F_{\V{r}})\\
&=\sup\!\left\{\!-\!\!\!\!\int\limits_{R=0}^{\infty} \!\!\!\!f_{\V{R}}(R;\!F_{\V{r}}) \log_2\!\!\left(\!\frac{f_{\V{R}}(\!R;\!F_{\V{r}}\!)}{R}\!\right) \!\dd R\!\right\}\!-\!\log_2(\e\sigma_n^2),
\end{aligned}
\raisetag{7pt}
\label{eq:sup_I_F}
\end{gather}
where $F_{\V{r}}$ in $f_{\V{R}}(R;F_{\V{r}})$ is used to emphasize that $f_{\V{R}}(R)$ depends on $F_{\V{r}}$.
Similar to \cite[eq. (11)]{QAGC_Shamai_Bar_David_1995}, 
\begin{gather}
f_{\!\V{R}\!,\!\V{\psi}|\V{r}\!,\V{\theta}}(R,\psi|r,\theta)\!=\!\frac{R}{2\pi\sigma_n^2}\e^{\!-\!\frac{(R^2+r^2|h_{\rm I}|^2-2Rr|h_{\rm I}|\cos(\psi-\theta-\phi))}{2\sigma_n^2}}.
\raisetag{7pt}
\label{eq:cartesian_to_polar_conditional_pdf2}
\end{gather}
Furthermore, similar to \cite[eq. (10)]{QAGC_Shamai_Bar_David_1995}, it can be shown that 
\begin{gather}
\begin{aligned}
f_{\V{R}}(R;F_{\V{r}}) &=\int\limits_{r=0}^{A}\frac{R}{\sigma_n^2}\e^{-\frac{R^2+r^2|h_{\rm I}|^2}{2\sigma_n^2}}I_0\left(\frac{Rr|h_{\rm I}|}{\sigma_n^2}\right)\dd F_{\V{r}}(r)\\
& \definedas \int\limits_{r=0}^{A}K(r,R)\dd F_{\V{r}}(r).
\end{aligned}
\raisetag{22pt}
\label{eq:f_R_final}
\end{gather}
Hence, it is concluded that $f_{\V{R}}(R;F_{\V{r}})$ is independent of $F_{\V{\theta}}(\theta)$. Next, we prove that selecting independent $\V{r}$ and $\V{\theta}$, with uniformly distributed $\V{\theta}$, i.e.,  $\dd^2F_{\V{r},\V{\theta}}(r,\theta)=\frac{1}{2\pi}\dd \theta \dd F_{\V{r}}(r)$, results in independent $\V{R}$ and $\V{\psi}$, with uniformly distributed $\V{\psi}$, i.e., $f_{\V{R},\V{\psi}}(R,\psi)=\frac{1}{2\pi}f_{\V{R}}(R;F_{\V{r}}) $. In particular, using (\ref{eq:cartesian_to_polar_conditional_pdf2}) and (\ref{eq:f_R_final}), we get 
\begin{gather}
\begin{aligned}
&f_{\!\V{R},\V{\psi}}(\!R,\psi\!)\!=\!\!\!\int\limits_{r=0}^{A}\!\int\limits_{-\!\pi}^{\pi}\!\! f_{\!\V{R},\V{\psi}|\V{r},\V{\theta}}(\!R,\psi|r,\theta\!)\dd^2\!F_{\!\V{r},\V{\theta}}(\!r,\theta\!)\\
&=\!\!\!\int\limits_{r=0}^{A}\!\! \frac{R}{2\pi\sigma_n^2}\e^{-\frac{R^2\!+\!r^2|h_{\rm I}|^2}{2\sigma_n^2}}I_0\!\!\left(\!\!\frac{Rr|h_{\rm I}|}{\sigma_n^2}\!\!\right)\! \dd F_{\V{r}}(\!r\!)\!=\!\frac{1}{2\pi}f_{\V{R}}(\!R;F_{\V{r}}\!) ,
\end{aligned}
\raisetag{7pt}
\label{eq:f_R_psi_indep_UD}
\end{gather}
which, from (\ref{eq:max_joint_entropy}) and (\ref{eq:sup_I_F}), maximizes the joint entropy $ H(\V{R},\V{\psi})$ and the mutual information $I(F_{\V{r}})$. Hence, independent $\V{r}$ and $\V{\theta}$ with uniformly distributed $\V{\theta}$ are optimal. This completes the proof. 
\section{Proof of Theorem  \ref{theo:C_necessary_sufficient_condition2_complex}}
\label{App:proof_C_necessary_sufficient_condition2_complex}
The proof of the uniqueness of the solution of problem (\ref{eq:capacity_problem_Complex}) is similar to that in Appendix \ref{App:proof_unique_optimal_distribution}. In particular, the constraints in (\ref{eq:capacity_problem_Complex}) are convex and compact in $F_{\V{r}}$. From (\ref{eq:sup_I_F}), $I(F_{\V{r}})$ can be written as
\begin{gather}
\begin{aligned}
I(F_{\V{r}})&=\!-\!\int_{\nu=0}^{\infty}\! f_{\V{\nu}}(\nu;F_{\V{r}}) \log_2\left(f_{\V{\nu}}(\nu;F_{\V{r}})\right) \dd \nu\!-\!\log_2(\e\sigma_n^2)\\
&\definedas h(\V{\nu};F_{\V{r}})-\log_2(\e\sigma_n^2),
\end{aligned}
\raisetag{10pt}
\label{eq:sup_I_x_y_entropy_R_psi_partb}
\end{gather}
where we used the change of variables $\nu\!=\!R^2/2$, hence $f_{\V{R}}(R;F_{\V{r}})/R\!=\!f_{\V{\nu}}(\nu;F_{\V{r}})$ and $\dd \nu\!=\!R\dd R$. Hence, the mutual information depends on the entropy associated with random variable $\V{\nu}$ which is strictly concave in $F_{\V{r}}$. Therefore, the solution to problem (\ref{eq:capacity_problem_Complex}) is unique.
From (\ref{eq:sup_I_F}) and (\ref{eq:f_R_final}), $I(F_{\V{r}})$ can be written as
\small
\begin{gather}
\begin{aligned}
&I(F_{\V{r}})\!=\!\!-\!\!\!\!\int\limits_{R=0}^{\infty} \int\limits_{r=0}^{A}\!\!\!K(r\!,\!R) \log_2\!\left(\!\!\frac{f_{\!\V{R}}(R;\!F_{\V{r}}\!)}{R}\!\!\right)\! \dd R \dd F_{\V{r}}(r)\!-\!\log_2(\e\sigma_n^2) \\
&\!= \!\!\!\int\limits_{r=0}^{A}\!\!\!\!\left( \!-\!\!\!\int\limits_{R=0}^{\infty}\!\!K(r,R) \log_2\left(\!\frac{f_{\V{R}}(R;F_{\V{r}})}{R}\!\right) \dd R \!-\!\log_2(\e\sigma_n^2)\!\right)\! \dd F_{\V{r}}(r)\\
&\definedas  \int\limits_{r=0}^{A} i(r;F_{\V{r}}) \dd F_{\V{r}}(r),
\end{aligned}
\raisetag{17pt}
\label{eq:mutual_info_marginal_complex}
\end{gather}
\normalsize
 where we used $ \int_{r=0}^{A} \dd F_{\V{r}}(r)=1$.
 Next, we obtain the necessary and sufficient conditions for the input distribution $F_{\V{r}_0}(r)$ to be optimal. These conditions correspond to those in Theorem \ref{theo:C_necessary_sufficient_condition1} and Corollary \ref{corol:C_necessary_sufficient_condition2} but for the input amplitude $r$. In particular, with the definition of $I(F_{\V{r}})$ in (\ref{eq:mutual_info_marginal_complex}), the complex signaling problem in (\ref{eq:capacity_problem_Complex}) is symbolically equivalent to that with real signaling in (\ref{eq:capacity_problem}) after replacing random variable $\V{x}\in[-A,A]$ by $\V{r}\in[0,A]$ and using the marginal mutual information $i(r;F_{\V{r}})$  in (\ref{eq:mutual_info_marginal_complex}).  Following 
Appendices \ref{App:proof_C_necessary_sufficient_condition1} and \ref{App:proof_C_necessary_sufficient_condition2}, the conditions in (\ref{eq:necessary_condition2_general1}) and (\ref{eq:necessary_condition2_general2}) generalize to $i(r;F_{\V{r}_0})\!\leq\! C\!+\!\sum_{l\in\{0\}\cup\mathcal{L}}\lambda_l\left(A_l(r)\!-\!a_l\right), \, \forall r\in[0,A]$, 
where equality holds if $r\!\in\! E_0$. Substituting with  $A_0(r)\!=\!r^2$, $a_0\!=\!\sigma^2$, $A_l(r)\!=\!-P_l(r)$, 
$a_l\!=\!-P_{l,\req}$, $l\in\mathcal{L}$, 
$i(r;F_{\V{r}_0})$ from (\ref{eq:mutual_info_marginal_complex}), and $K(r,R)$ from (\ref{eq:f_R_final}), we obtain  (\ref{eq:C_necessary_sufficient_condition2_complex}).

Next, we prove that the optimal input distribution must be discrete with finite number of mass points. Similar to Appendix \ref{App:proof_discrete_optimal_distribution}, we prove that the complex extension of $s(r)$ in (\ref{eq:C_necessary_sufficient_condition2_complex}) cannot be zero over an infinite set of points having an accumulation point and hence $E_0$ must be discrete and finite. For simplicity and w.l.o.g., we set $\sigma_n^2\!=\!1$. Extending $s(r)$ in (\ref{eq:C_necessary_sufficient_condition2_complex}) to the complex $z$ domain, we can write $s(z)\!=\!0$ as
\begin{gather}
\begin{aligned}
&\int\limits_{0}^\infty Q(\nu,z|h_{\rm I}|)\log_2\left(f_{\V{\nu}}(\nu;F_{\V{r_0}})\right) \dd \nu\\
&\! =\!
-\!\lambda_0\!\left(z^2\!-\!\sigma^2\right)\!+\!\sum\limits_{l\in\mathcal{L}} \lambda_l\!\left(\!P_l(z)\!\!-\!P_{l,\req}\!\right)\!-\!C\!-\!\log_2(\e)  ,
\end{aligned}
\raisetag{8pt}
\label{eq:s_z_0_complex}
\end{gather}
where we used  $\nu=R^2/2$, hence $f_{\V{R}}(R;F_{\V{r}})/R=f_{\V{\nu}}(\nu;F_{\V{r}})$ and $\dd \nu=R\dd R$ and we define the kernel $Q(\nu,z|h_{\rm I}|)$ as $Q(\nu,z|h_{\rm I}|)\definedas \e^{-\nu-\frac{z^2|h_{\rm I}|^2}{2}}\!I_0\left(\sqrt{2\nu}z|h_{\rm I}|\right)$. The analyticity of the RHS of (\ref{eq:s_z_0_complex}) is proved in Appendix \ref{App:proof_discrete_optimal_distribution}. The LHS of (\ref{eq:s_z_0_complex}) is analytic in $z\in\mathbb{C}$, which follows by the differentiation lemma and the Schwarz property of the kernel $Q(\nu,z|h_{\rm I}|)$ \cite[Appendix I]{QAGC_Shamai_Bar_David_1995}. Hence, $s(z)$ is analytic over domain $D$ defined in Appendix \ref{App:proof_discrete_optimal_distribution}. Thus, from the identity theorem \cite{QAGC_Shamai_Bar_David_1995, capacity_Rayleigh_fading_Abou_Faycal2001}, if $s(z)=0$ $\forall \,z\in E_0$ and $E_0$ is an infinite set of points with an accumulation point, then $s(z)=0,$ $\forall\,\, z\in D$. Next, we restrict our attention to $z\in\mathbb{R}\subset D$ and check if there exists a valid output pdf $f_{\V{\nu}}(\nu;F_{\V{r_0}})$ that satisfies (\ref{eq:s_z_0_complex}). 
Eq. (\ref{eq:s_z_0_complex}) is an integral transform which was proved in \cite[Appendix II]{QAGC_Shamai_Bar_David_1995} to be invertible, i.e., there exists a unique solution for the unknown function $f_{\V{\nu}}(\nu;F_{\V{r_0}})$\footnote{Unlike with real signaling, where $f_{\V{\nu}}(\nu;F_{\V{r_0}})$ was obtained in terms of the Hermite polynomials, with complex signaling, the kernel $Q(\nu,z|h_{\rm I}|)$ is not orthogonal to the Hermite polynomials. Hence, we use a different approach to obtain  $f_{\V{\nu}}(\nu;F_{\V{r_0}})$.}. 
Applying the integral transform to the power function $\nu^n$, we get $\int_{0}^\infty Q(\nu,z|h_{\rm I}|) \nu^n \dd \nu=n!L_n(-z^2|h_{\rm I}|^2/2)$, where $L_n(\cdot)$ is the Laguerre Polynomial defined as $L_n(x)=\sum_{m=0}^n\binom{n}{m}\frac{(-1)^m}{m!}x^m$. Hence, the integral transform of the polynomial $\sum_{n=0}^\infty c_n\nu^n$ is
\begin{gather}
\int\limits_{0}^\infty Q(\nu,z|h_{\rm I}|) \sum\limits_{n=0}^\infty c_n\nu^n \dd \nu = \sum\limits_{n=0}^\infty c_n n!\,L_n\left(-z^2|h_{\rm I}|^2/2\right).
\raisetag{7pt}
\label{eq:integral_transform_sum_Laguerre}
\end{gather}
If there exist coefficients $c_n$ such that the RHS of (\ref{eq:s_z_0_complex}) equals the RHS of (\ref{eq:integral_transform_sum_Laguerre}), then the unique solution of $f_{\V{\nu}}(\nu;F_{\V{r_0}})$ must satisfy 
$\log_2\left(f_{\V{\nu}}(\nu;F_{\V{r_0}})\right)= \sum_{n=0}^\infty c_n\nu^n $. Next, we write the RHS of (\ref{eq:s_z_0_complex}) in a polynomial form, and check whether this polynomial can be written as a weighted sum of Laguerre polynomials as  in the RHS of (\ref{eq:integral_transform_sum_Laguerre}). Using {\color{black}$P_l(z)\!=\!\min\left(\sum_{m=0}^{\infty} q_{m,l}z^{2m},B_v^2/(4R_{\rm L})\right)$} from (\ref{eq:Harvested_power_series}), we write the RHS of (\ref{eq:s_z_0_complex}) as $\sum_{m=0}^\infty \alpha_{m} z^{2m}$, where {\color{black}$\alpha_{0}=\lambda_0\sigma^2\!+\!\Big(\sum\limits_{l\in\mathcal{L}_{\rm non-sat}} \lambda_lq_{0,l}\!+\!\!\sum\limits_{l\in\mathcal{L}_{\rm sat}}\lambda_l B_v^2/(4R_{\rm L}) -\sum\limits_{l\in\mathcal{L}}\!\lambda_l P_{l,\req}\Big)\!-\!C\!-\!\log_2(\e)$, $\alpha_1\!=\!-\lambda_0\!+\!\sum\limits_{l\in\mathcal{L}_{\rm non-sat}} \lambda_lq_{1,l}$, and $\alpha_m\!=\!\sum\limits_{l\in\mathcal{L}_{\rm non-sat}} \lambda_lq_{m,l}$ $\forall\, m \geq 2$.} Hence, (\ref{eq:s_z_0_complex}) can be written as 
\begin{equation}
\int\limits_{0}^\infty Q(\nu,z|h_{\rm I}|)\log_2\left(f_{\V{\nu}}(\nu;F_{\V{r_0}})\right) \dd \nu\! =\! \sum\limits_{m=0}^\infty \alpha_m z^{2m}.
\label{eq:s_z_0_complex2}
\end{equation}
Hence, the problem reduces to finding the coefficients $c_n$ that make the RHSs of 
(\ref{eq:integral_transform_sum_Laguerre}) and (\ref{eq:s_z_0_complex2}) equal.  Using the definition of the Laguerre polynomial, we require $\sum_{m=0}^\infty \alpha_m z^{2m}=\sum_{n=0}^\infty c_n n! \sum_{m=0}^n \binom{n}{m}\frac{|h_{\rm I}|^{2m}}{m!2^m}z^{2m}.$
Thus, $c_n$ satisfies the linear system of equations $\alpha_m\!=\!\sum_{n=m}^{\infty}c_n n! \binom{n}{m}\frac{|h_{\rm I}|^{2m}}{m!2^m}$, $m\!=\!0,\ldots,\infty$. Truncating the summation order to $2S$ for some large $S$, the linear system of equations can be written as $\V{\alpha}\!=\!M\V{c}$, where $\V{\alpha}\!=\![\alpha_0,\ldots,\alpha_S]^T$, $\V{c}\!=\![c_0,\ldots,c_S]^T$, and $M$ is an upper triangular matrix whose non-zero entries in the $m^{\text{th}}$ row and the $n^{\text{th}}$ column are $n! \binom{n}{m}\frac{|h_{\rm I}|^{2m}}{m!2^m}$, for $n\!\geq\! m$ and zero, otherwise. Since any upper triangular matrix with non-zero main diagonal entries is invertible, the coefficients $c_n$ can be obtained uniquely as $\V{c}\!=\!M^{-1}\V{\alpha}$. Using these coefficients, the RHSs of (\ref{eq:integral_transform_sum_Laguerre}) and (\ref{eq:s_z_0_complex2}) are equal, and therefore their LHSs are also equal, i.e., $f_{\V{\nu}}(\nu;F_{\V{r_0}})=\e^{\ln(2)\sum_{n=0}^\infty c_n\nu^n}$, which does not correspond to a legitimate pdf. Hence, (\ref{eq:s_z_0_complex}) cannot hold over an infinite set of points with an accumulation point, i.e., $E_0$ must be discrete and finite. This completes the proof. 

\bibliographystyle{IEEEtran}
\bibliography{literature}
\begin{IEEEbiography}[{\includegraphics[width=1.6in,height=1.4in,trim=0in 0.2in 0in 0in,clip,keepaspectratio]{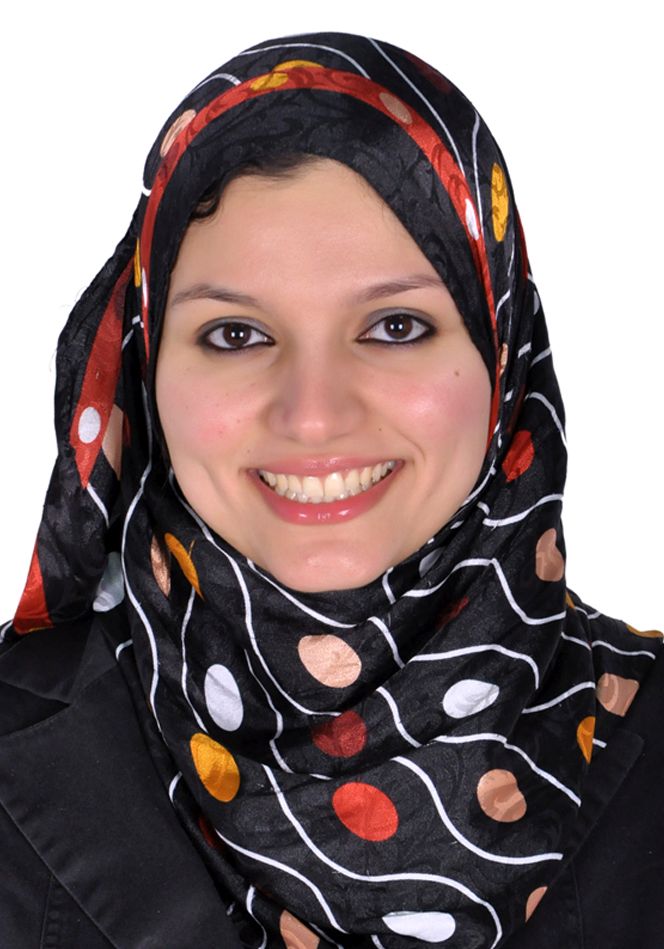}}]{Rania Morsi}
Rania Morsi (S'07) received the B.Sc. degree with highest honors from the faculty of Information Engineering and Technology (IET) at the German University in Cairo (GUC) in 2009. In 2010, she received the M.Sc. degree from the international Masters program ``Communications Technology" at Ulm University, Germany. From February 2011 to July 2012, she was a teaching and research assistant at the GUC. Since August 2012, she is working toward the Ph.D. degree at the Institute of Digital Communications at the Friedrich-Alexander-University (FAU), Erlangen, Germany. Her research interests fall into the broad area of wireless communications including the field of energy harvesting and wireless power transfer. 

Rania received a scholarship for her bachelor studies at the GUC. Her Master studies were sponsored by the German Academic Exchange Service (DAAD). In 2009, she ranked first in the B.Sc. degree over the faculty of IET in the GUC. In 2010, she received the second best student award at the Master program in Ulm University. Rania won the third best tutorial evaluation twice in FAU in 2012 and 2013 for teaching the \quotes{Digital Communications} master course.
\end{IEEEbiography}

\begin{IEEEbiography}[{\includegraphics[width=1in,height=1.5in,clip,keepaspectratio]{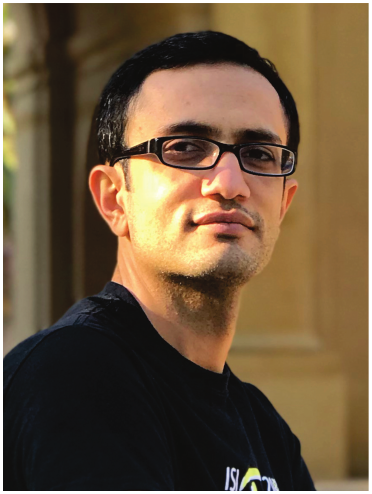}}]{Vahid Jamali} (S'12) received the B.S. and M.S. degrees (honors) in electrical engineering from the K. N. Toosi University of Technology, Tehran, Iran, in 2010 and 2012, respectively, and the Ph.D. degree (with distinctions) from the Friedrich-Alexander-University (FAU) of Erlangen-Nurnberg, Erlangen, Germany, in 2019. In 2017, he was a Visiting Research Scholar with Stanford University, CA, USA. He is currently a Postdoctoral Fellow with the Institute for Digital Communication, FAU. His research interests include wireless and molecular communications, Bayesian inference and learning, and multiuser information theory.

Dr. Jamali received several awards, including the Exemplary Reviewer Certificates \textsc{IEEE Communications Letters} in 2014 and the \textsc{IEEE Transactions on Communications} in 2017 and 2018, the Best Paper Award from the IEEE International Conference on Communications in 2016, the Doctoral Scholarship from the German Academic Exchange Service (DAAD) in 2017, and the Goldener Igel Publication Award from the Telecommunications Laboratory (LNT), FAU, in 2018. He has served as a member of the Technical Program Committee for several IEEE conferences and is currently an Associate Editor of \textsc{IEEE Communications Letters} and \textsc{IEEE Open Journal of Communications Society}.
\end{IEEEbiography}

\begin{IEEEbiography}[{\includegraphics[width=1.1in,height=1.4in,keepaspectratio]{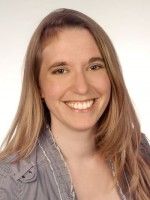}}]{Amelie Hagelauer} Amelie Hagelauer (S'08-M'10-SM'18) received the Dipl.-Ing. degree in mechatronics and the Dr.-Ing. degree in electrical engineering from the Friedrich-Alexander-University Erlangen-Nuremberg, Germany in 2007 and 2013, respectively. She has been a professor at Universitat Bayreuth since Aug 2019.  Prior to that, she joined the Institute for Electronics Engineering in November 2007, where she was working on thin ﬁlm BAW ﬁlters towards her PhD. Since 2013 she is focusing on SAW/BAW and RF MEMS components, as well as integrated circuits for frontends up to 180 GHz. Dr. Hagelauer has been the Chair of MTT-2 Microwave Acoustics from 2015 - 2017. She is continuously contributing to the development of RF Acoustics community by organizing workshops and student design competitions. She has been acting as Associate Editor of the IEEE MTT Transactions, as Guest Editor for a special issue of the IEEE MTT Transactions on the topic \quotes{RF Frontends for Mobile Radio} as well as for a special issue in the MDPI Journal Sensors on the topic \quotes{Surface Acoustic Wave and Bulk Acoustic Wave Sensors}.
\end{IEEEbiography}
\vfill
\begin{IEEEbiography}[{\includegraphics[width=1.1in,height=1.4in,keepaspectratio]{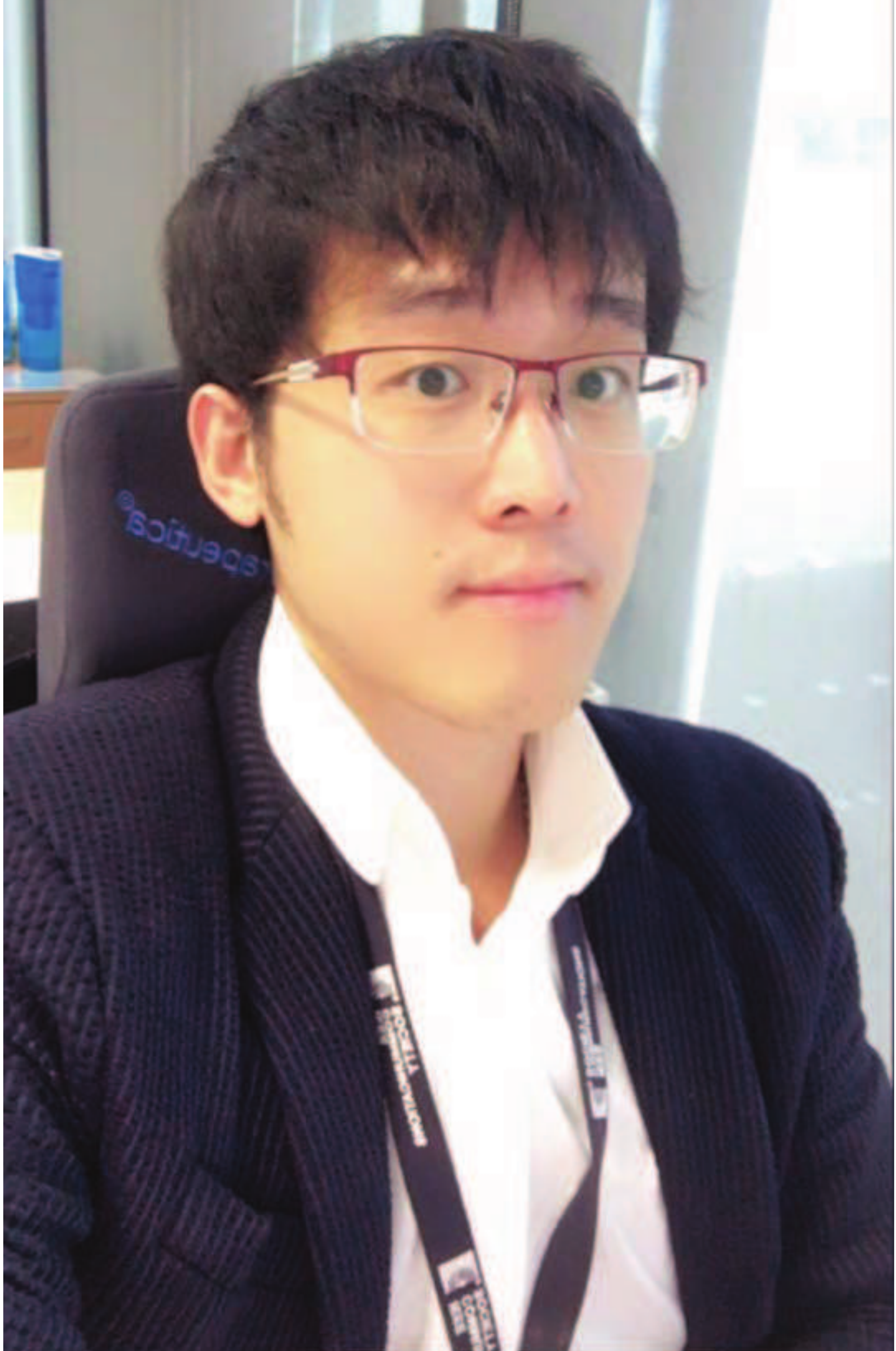}}]{Derrick
Wing Kwan Ng} (S'06-M'12-SM'17) received the bachelor degree with first-class honors and the Master of Philosophy (M.Phil.) degree in electronic engineering from the Hong Kong University of Science and Technology (HKUST) in 2006 and 2008, respectively. He received his Ph.D. degree from the University of British Columbia (UBC) in 2012. He was a senior postdoctoral fellow at the Institute for Digital Communications, Friedrich-Alexander-University Erlangen-N\"urnberg (FAU), Germany. He is now working as a Senior Lecturer and an ARC DECRA Research Fellow at the University of New South Wales, Sydney, Australia.  His research interests include convex and non-convex optimization, physical layer security, wireless information and power transfer, and green (energy-efficient) wireless communications.

Dr. Ng received the Best Paper Awards at the IEEE TCGCC Best Journal Paper Award 2018, INISCOM 2018, IEEE International Conference on Communications (ICC) 2018,  IEEE International Conference on Computing, Networking and Communications (ICNC) 2016,  IEEE Wireless Communications and Networking Conference (WCNC) 2012, the IEEE Global Telecommunication Conference (Globecom) 2011, and the IEEE Third International Conference on Communications and Networking in China 2008.  He is now serving as an area editor for IEEE Open Journal of the Communications Society, an editor for IEEE Transactions on Wireless Communications, IEEE TGCN, IEEE Open Journal of Vehicular Technology, and a guest editor of IEEE JSAC Multiple Antenna Technologies for Beyond 5G, IEEE JSAC Massive Access for 5G and beyond. Besides, he has been serving as an editorial assistant to the Editor-in-Chief of the IEEE Transactions on Communications since Jan. 2012. In addition, he is listed as a Highly Cited Researcher by Clarivate Analytics in 2018 and 2019.
\end{IEEEbiography}

\begin{IEEEbiography}[{\includegraphics[width=1.4in,height=1.3in,trim=0in 1.5in 0in 0.5in,clip,keepaspectratio]{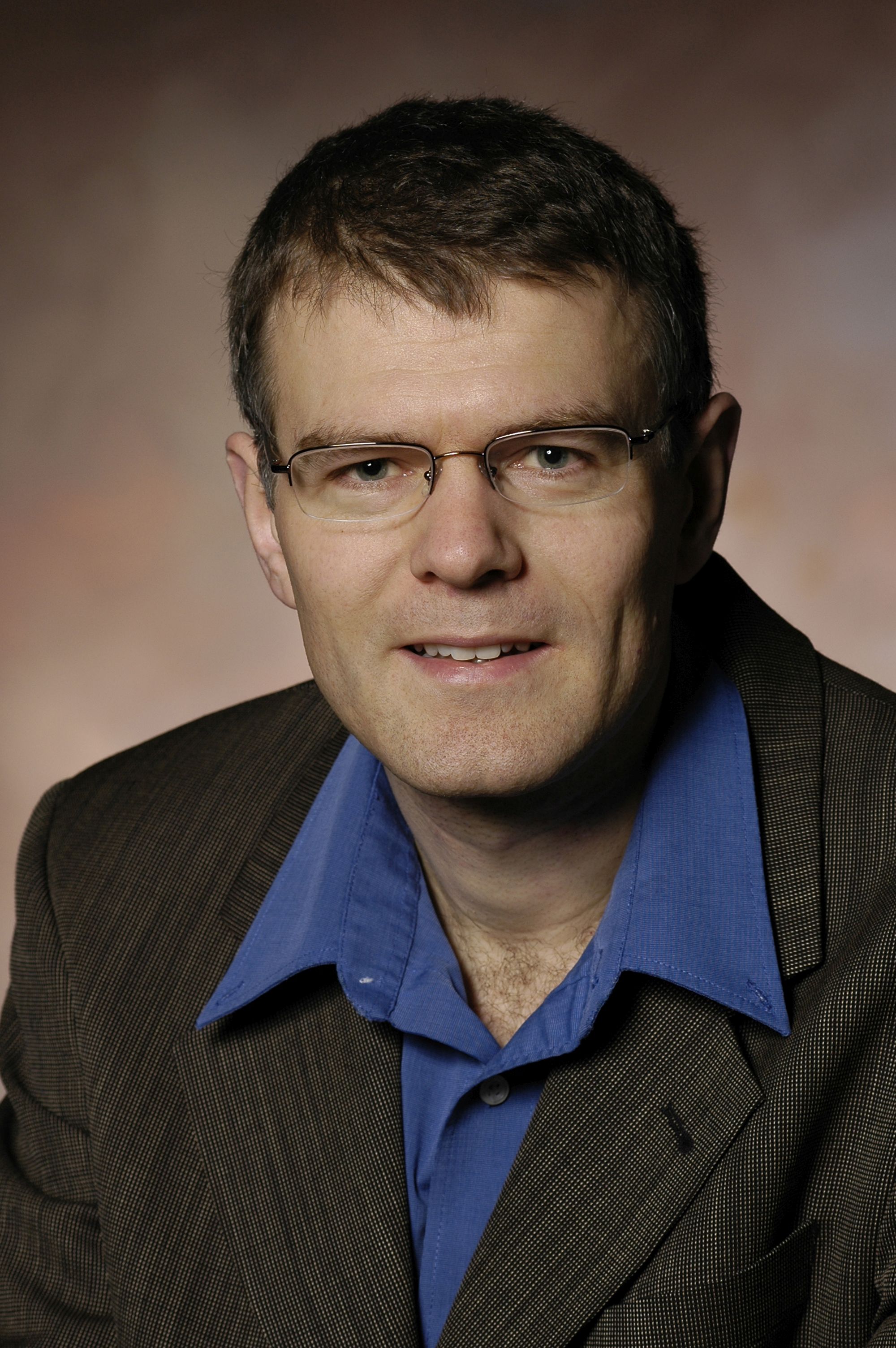}}]{Robert Schober}
Robert Schober (S'98, M'01, SM'08, F'10) was born in Neuendettelsau, Germany, in 1971. He received the Diplom (Univ.) and the Ph.D. degrees in electrical engineering from Friedrich Alexander University (FAU), Erlangen, Germany in 1997 and 2000, respectively. From May 2001 to April 2002 he was a Postdoctoral Fellow at the University of Toronto, Canada, sponsored by the German Academic Exchange Service (DAAD). From May 2002 to December 2011 he was a Professor and Canada Research Chair at the University of British Columbia (UBC), Vancouver, Canada. Since January 2012 he is an Alexander von Humboldt Professor and the Chair for Digital Communication at FAU. His research interests fall into the broad areas of Communication Theory, Wireless Communications, and Statistical Signal Processing.

Dr. Schober received several awards for his work including the 2002 Heinz Maier–Leibnitz Award of the German Science Foundation (DFG), the 2004 Innovations Award of the Vodafone Foundation for Research in Mobile Communications, the 2006 UBC Killam Research Prize, the 2007 Wilhelm Friedrich Bessel Research Award of the Alexander von Humboldt Foundation, the 2008 Charles McDowell Award for Excellence in Research from UBC, a 2011 Alexander von Humboldt Professorship, and a 2012 NSERC E.W.R. Steacie Fellowship. In addition, he received several best paper awards for his research. Dr. Schober is a Fellow of the Canadian Academy of Engineering and a Fellow of the Engineering Institute of Canada. From 2012 to 2015 he served as Editor-in-Chief of the IEEE Transactions on Communications. Currently, he serves as the Chair of the Steering Committee of the IEEE Transactions on Molecular, Biological and Multiscale Communication, on the Editorial Board of the Proceedings of the IEEE, and on the Board of Governors of the IEEE Communication Society (ComSoc). He is also an IEEE ComSoc  Distinguished Lecturer.
\end{IEEEbiography}
\vfill
\end{document}